\newcommand{\addconnectors}[5]{
    \ifthenelse{#2 = 0}{\node[graph-vert] (#11) at (#19) {};}{\node[active-vert] (#11) at (#19) {};}
    \ifthenelse{#3 = 0}{\node[graph-vert] (#12) at (#110) {};}{\node[active-vert] (#12) at (#110) {};}
    \ifthenelse{#4 = 0}{\node[graph-vert] (#13) at (#111) {};}{\node[active-vert] (#13) at (#111) {};}
    \ifthenelse{#5 = 0}{\node[graph-vert] (#14) at (#112) {};}{\node[active-vert] (#14) at (#112) {};}
}
\tikzset{
    >=latex,
    pics/rhom/.style n args = {4}{
        code = {
            \draw[
                thick, #3, fill = #3!50, fill opacity = 0.2, rounded corners = 3pt]
                (-#1, 0) -- (0, -#2) -- (#1, 0) -- (0, #2) -- cycle;
            \node[draw, fill, circle, inner sep = 0pt, minimum size = 0.05cm, #3] (#4) at (0, 0) {};
        }
    },
    linecol/.style n args = {3}{
        postaction = {
            decorate,
            decoration = {
                markings,
                mark = between positions 0 and 0.9 step 0.2pt with {
                    \pgfmathsetmacro\myval{
                        multiply(
                            divide(
                                \pgfkeysvalueof{/pgf/decoration/mark info/distance from start},
                                \pgfdecoratedpathlength
                            ),
                            120
                        )};
                    \pgfsetfillcolor{#3!\myval!#2};
                    \pgfpathcircle{\pgfpointorigin}{#1};
                    \pgfusepath{fill};
                },
                mark = at position 1 with {\arrow[#3]{latex}},
            }
        }
    },
    pics/universe-rect/.style n args = {2}{
        code = {
            \fill[black!10, rounded corners = 2pt] (-0.07, -0.07) rectangle (#1 + 0.07, #2 + 0.07);
            \draw[fill = white, thick] (0, 0) rectangle (#1, #2);
        }
    },
    pics/comm-rect/.style n args = {4}{
        code = {
            \draw[
                thick, #3, fill = #3!50, fill opacity = 0.2, rounded corners = 3pt]
                (-#1, -#2) rectangle (#1, #2);
            \node[draw, fill, circle, inner sep = 0pt, minimum size = 0.05cm, #3] (#4) at (0, 0) {};
        }
    },
    graph-vert/.style 2 args = {
        draw,
        color = #1!70!black,
        circle,
        thick,
        inner sep = 0pt,
        minimum size = #2,
        fill = #1!50,
        fill opacity = 0.5
    },
    non-path/.style = {
        thick,
        black
    },
    active-path/.style = {
        VioletRed,
        thick,
        decorate,
        decoration = {snake, segment length = 1mm, amplitude = 0.1mm}
    },
    graph-vert/.default = {black}{0.1cm},
    active-vert/.style = {
        graph-vert = {VioletRed}{0.1cm},
    },
    pics/swap/.style n args = {3}{
        code = {
            \fill[black!10, rounded corners = 2pt] (-0.05, -0.05) rectangle (1.05, 0.55);
            \foreach \p [count = \ii] in {(0.2, 0), (0.8, 0)}{
                \pgfmathtruncatemacro{\ff}{\ii + 4}
                \pgfmathtruncatemacro{\t}{\ii + 8}
                \path \p coordinate (#1\ff) -- ++(0, -0.2) coordinate (#1\t);
            }
            \foreach \p [count = \ii] in {(0.2, 0.5), (0.8, 0.5)}{
                \pgfmathtruncatemacro{\ff}{\ii + 6}
                \pgfmathtruncatemacro{\t}{\ii + 10}
                \path \p coordinate (#1\ff) -- ++(0, 0.2) coordinate (#1\t);
            }
            \ifthenelse{#3 = 0}{
                \foreach \ii in {1, 2, 3, 4}{
                    \pgfmathtruncatemacro{\ff}{\ii + 4}
                    \pgfmathtruncatemacro{\t}{\ii + 8}

                    \addconnectors{#1}{0}{0}{0}{0}
                    \node[graph-vert] (#1\ii) at (#1\t) {};
                    \draw[non-path] (#1\ii) -- (#1\ff);
                }

                \fill[white] (0, 0) rectangle (1, 0.5);
            }{
                \fill[ForestGreen!10] (0, 0) rectangle (1, 0.5);
            }
            \ifthenelse{\tmp = 0}{
                \node at (0.5,0.25) {?};
            }{}
            \ifthenelse{\tmp = 1}{
                \addconnectors{#1}{1}{0}{1}{0}
                \draw[active-path] (#11) -- (#13);
                \draw[non-path] (#12) -- (#14);
            }{}
            \ifthenelse{\tmp = 2}{
                \addconnectors{#1}{0}{1}{0}{1}
                \draw[non-path] (#11) -- (#13);
                \draw[active-path] (#12) -- (#14);
            }{}
            \ifthenelse{\tmp = 3}{
                \addconnectors{#1}{1}{0}{0}{1}
                \draw[active-path]
                    (#11) -- (#15) .. controls ++(0, 0.2) and ++(0, -0.2) .. (#18) -- (#14);
                \draw[non-path]
                    (#12) -- (#16) .. controls ++(0, 0.2) and ++(0, -0.2) .. (#17) -- (#13);
            }{}
            \ifthenelse{\tmp = 4}{
                \addconnectors{#1}{0}{1}{1}{0}
                \draw[non-path]
                    (#11) -- (#15) .. controls ++(0, 0.2) and ++(0, -0.2) .. (#18) -- (#14);
                \draw[active-path]
                    (#12) -- (#16) .. controls ++(0, 0.2) and ++(0, -0.2) .. (#17) -- (#13);
            }{}
            \ifthenelse{\tmp = 5}{
                \addconnectors{#1}{1}{1}{1}{1}
                \draw[active-path] (#11) -- (#13);
                \draw[active-path] (#12) -- (#14);
            }{}
            \ifthenelse{\tmp = 6}{
                \addconnectors{#1}{1}{1}{1}{1}
                \draw[active-path]
                    (#11) -- (#15) .. controls ++(0, 0.2) and ++(0, -0.2) .. (#18) -- (#14);
                \draw[active-path]
                    (#12) -- (#16) .. controls ++(0, 0.2) and ++(0, -0.2) .. (#17) -- (#13);
            }{}
            \draw[thick] (0, 0) rectangle (1, 0.5);
            \node[above] at (0.5, 0.5) {\scriptsize #2};
        }
    },
    fancy-arrow/.style = {
        draw = black,
        thick,
        single arrow,
        right color = VioletRed!50,
        left color = ForestGreen!50,
        fill opacity = 0.7,
        single arrow head extend = 0.3cm,
        single arrow tip angle = 70,
        single arrow head indent = 0.15cm,
        minimum height = 2.4cm,
        minimum width = 0.8cm,
        shading angle = -45,
        rotate = -90
    },
    in-out-node/.style = {
        draw,
        circle,
        thick,
        inner sep = 0,
        minimum size = 0.09cm,
        #1,
        fill = #1!50
    }
}
\algnewcommand\algorithmicinput{\textbf{Input: }}
\algnewcommand\Input{\item[\algorithmicinput]}
\algnewcommand\algorithmicoutput{\textbf{Output: }}
\algnewcommand\Output{\item[\algorithmicoutput]}
\algnewcommand{\OneLineIf}[2]{
  \State \algorithmicif\ #1\ \algorithmicthen\ #2}
  \DeclareSymbolFont{extraup}{U}{zavm}{m}{n}
  \DeclareMathSymbol{\vardiamond}{\mathalpha}{extraup}{87}
\newtheorem{theorem}{Theorem}
\newtheorem{theorem*}{Theorem}
\newtheorem{lemma}[theorem]{Lemma}
\newtheorem{claim}[theorem]{Claim}
\newtheorem{corollary}[theorem]{Corollary}
\newtheorem{fact}{Fact}
\theoremstyle{remark}
\theoremstyle{definition}
\newtheorem{question}{Question}
\def\th@example{%
  \thm@notefont{}
  \normalfont 
}
\def\th@definition{%
  \thm@notefont{}
  \normalfont 
}
\theoremstyle{example}
\renewcommand{\setminus}{\smallsetminus}
\renewcommand{\H}{\operatorname{H}}
\newcommand{\tO}{\tilde{O}}
\newcommand{\tOmega}{\tilde{\Omega}}
\DeclareMathOperator*{\Exp}{\mathbb{E}}
\newcommand{\x}{\bm{x}}
\renewcommand{\k}{\bm{k}}
\renewcommand{\epsilon}{\varepsilon}
\newcommand{\Xor}{\textsc{Xor}} 
\newcommand{\lb}{\llbracket}
\newcommand{\rb}{\rrbracket}
\newcommand{\poly}{\mathrm{poly}}
\newcommand{\N}{\mathbb{N}}
\newcommand{\dist}{\mathrm{dist}}
\newcommand{\newmeasure}[2]{\newcommand{#1}{{\textup{\sffamily #2}}\xspace}}
\newmeasure{\C}{C}
\newcommand{\ACzero}{\textsf{\upshape AC}^0}
\newcommand{\supp}{\mathrm{supp}}
\let\oldprod=\prod 
\RenewDocumentCommand{\prod}{e{_^}}{ 
  \vphantom{\oldprod_{.}} 
  \mathop{\smash{\oldprod 
    \IfValueT{#1}{_{#1}} 
    \IfValueT{#2}{^{#2}}
  }}
}
\begin{document}

\begin{center}
\mbox{}\\[2cm]
    {\huge Sampling Permutations with Cell Probes is Hard}
    \\[1cm]
    \large

    \setlength\tabcolsep{1.2em}
    \begin{tabular}{ccc}
      Yaroslav Alekseev &
                          Mika G\"o\"os &
                                          Konstantin Myasnikov \\[-1mm]
      \small\slshape Technion &
                                \small\slshape EPFL&
                                                     \small\slshape EPFL
    \end{tabular}  \\[1mm]
    \begin{tabular}{cc}
      Artur Riazanov &
                       Dmitry Sokolov\\[-1mm]
      \small\slshape EPFL&
                           \small\slshape EPFL \& Universit{\'{e}} de Montr{\'{e}}al
    \end{tabular}
    
\vspace{2em}
\today
\vspace{1em}
\end{center}

\begin{abstract}\noindent
Suppose we are given an infinite sequence of input cells, each initialized with a uniform random symbol from $[n]$. How hard is it to output a sequence in $[n]^n$ that is close to a uniform random permutation? Viola ({\footnotesize SICOMP 2020}) conjectured that if each output cell is computed by making~$d$ probes to input cells, then $d\geq\omega(1)$. Our main result shows that, in fact,~$d\geq (\log n)^{\Omega(1)}$, which is tight up to the constant in the exponent. Our techniques also show that if the probes are \emph{nonadaptive}, then $d\geq n^{\Omega(1)}$, which is an exponential improvement over the previous nonadaptive lower bound due to Yu and Zhan ({\footnotesize ITCS 2024}). Our results also imply lower bounds against succinct data structures for storing permutations.
\end{abstract}

{
\vspace{15mm}

\setlength{\cftbeforesecskip}{0pt}
\renewcommand\cftsecfont{\mdseries}
\renewcommand{\cftsecpagefont}{\normalfont}
\renewcommand{\cftsecleader}{\cftdotfill{\cftdotsep}}
\setcounter{tocdepth}{1}
\tableofcontents
}
\thispagestyle{empty}
\setcounter{page}{0}

\newpage
\setcounter{page}{1}

\section{Introduction}
Randomly shuffling the elements of an array is one of the most basic primitives in randomized algorithms. It is a simple programming exercise to implement this in linear time~\cite{Dur64}. Doing it much faster, with a parallel algorithm, has been studied extensively~\cite{reif1985, MV91, CK00, Czumaj15}. In particular, array shuffling is possible even in constant-time in a parallel RAM model~\cite{Hag91}.

An analogous problem in probability theory is the question of \emph{card shuffling}, which dates back to Markov~\cite{Markov1906}. For example, one of the long-standing challenges in card shuffling has been to determine how many \emph{Thorp shuffles} (explained in~\cref{fig:thorp} below) are sufficient to produce a nearly uniform permutation over $n$ cards. A sequence of works~\cite{Morris05,MT06,Morris09,Morris13} has culminated in a result showing that $O(\log^3 n)$ shuffles are enough. An interesting feature of this shuffle (which has found applications, e.g., in cryptography~\cite{MRS09}) is its \emph{obliviousness}: the final position of each card can be computed by \emph{accessing only a few bits of randomness} (formally, the shuffle is given by a shallow ``switching network'').

\paragraph{Cell-probe model.}
A widely-studied computational model that captures oblivious shuffling (and much more) is the \emph{cell-probe model}~\cite{Yao81}. In this model, we are given a sequence of $s$ input cells, each storing a symbol from $[n]$. A cell-probe algorithm then produces an output sequence in $[n]^m$ where each output cell is computed by making $d$ probes (queries) to input cells. That is, the algorithm computes a function $f\colon [n]^{s} \to [n]^m$, where the $i$-th output cell $f_i$ is computed by a depth-$d$ arity-$n$ decision tree. Such cell-probe algorithms are also called depth-$d$ \emph{decision forests}.

\begin{figure}[hbt]
\vspace{1em}
    \centering
    \subfloat{\begin{tikzpicture}[scale=0.9]
    \def\k{4}
    \def\swaparray{{{0, 0, 2, 0}, {0, 3, 0, 0}, {0, 0, 1, 0}}}
    \def\inputarray{{
        "0", "0", "0", "0", "0", "0", "1", "0", "0", "0",
        "0", "0", "0", "0", "0", "0"
    }}
    \def\pfs{{
        "100", "100", "100"
    }}
    \def\pf{{
        "100", "100", "100"
    }}
    \def\pss{{
        "3", "2", "3"
    }}
    \def\ps{{
        "2", "1", "1"
    }}

    \node at (0, 2.5) {};
    \node at (0, -2.2) {};

    \foreach \j in {1, 2, 3}{
        \foreach \i in {1, 2, ..., \k}{
            \pgfmathtruncatemacro{\num}{(\j - 1) * \k + \i}
            \pgfmathtruncatemacro{\tmp}{\swaparray[\j - 1][\i - 1]}
            \pic[shift = {({1.7 * (\i - 1)}, {-1.8 * (\j - 1)})}]
                {swap = {a\j\i}{$r_{\num}$}{\tmp}};
        }
    }

    \foreach \j in {1, 2}{
        \pgfmathtruncatemacro{\t}{\j + 1}
        
        \foreach \i in {1, 2, ..., \k}{
            \pgfmathtruncatemacro{\swap}{(\i - 1) / 2 + 1}
            \pgfmathtruncatemacro{\swapn}{(\i - 1) / 2 + 1 + \k / 2}
            \pgfmathtruncatemacro{\g}{mod(\i - 1, 2) + 1}
            \pgfmathtruncatemacro{\gg}{\g + 8}

            \pgfmathparse{\pfs[\j - 1]}
            \edef\pfirswap{\pgfmathresult}
            \pgfmathparse{\pf[\j - 1]}
            \edef\pfirout{\pgfmathresult}

            \pgfmathparse{\pss[\j - 1]}
            \edef\psecswap{\pgfmathresult}
            \pgfmathparse{\ps[\j - 1]}
            \edef\psecout{\pgfmathresult}

            \pgfmathtruncatemacro{\flag}{0}

            \ifthenelse{\swap = \pfirswap \AND \pfirout = \g}{
                \pgfmathtruncatemacro{\flag}{1}
                \draw[active-path]
                    (a\j\swap\g) .. controls ++(0, -0.5) and ++(0, 0.5) .. (a\t\i3);
            }{}

            \ifthenelse{\swap = \psecswap \AND \psecout = \g}{
                \pgfmathtruncatemacro{\flag}{1}
                \draw[active-path]
                    (a\j\swap\g) .. controls ++(0, -0.5) and ++(0, 0.5) .. (a\t\i3);
            }{}

            \ifthenelse{\flag = 0}{
                \draw[non-path]
                    (a\j\swap\g) .. controls ++(0, -0.5) and ++(0, 0.5) .. (a\t\i3);

            }{}

            \pgfmathtruncatemacro{\flag}{0}

            \ifthenelse{\swapn = \pfirswap \AND \pfirout = \g}{
                \pgfmathtruncatemacro{\flag}{1}
                \draw[active-path]
                    (a\j\swapn\g) .. controls ++(0, -0.5) and ++(0, 0.5) .. (a\t\i4);
            }{}

            \ifthenelse{\swapn = \psecswap \AND \psecout = \g}{
                \pgfmathtruncatemacro{\flag}{1}
                \draw[active-path]
                    (a\j\swapn\g) .. controls ++(0, -0.5) and ++(0, 0.5) .. (a\t\i4);
            }{}

            \ifthenelse{\flag = 0}{
                \draw[non-path]
                    (a\j\swapn\g) .. controls ++(0, -0.5) and ++(0, 0.5) .. (a\t\i4);
            }{}
        }
    }

    \foreach \swap in {1, 2, ..., \k}{
        \foreach \connect in {3, 4}{
            \pgfmathtruncatemacro{\ind}{2 * (\swap - 1) - 2 + \connect}
            \pgfmathtruncatemacro{\tmp}{\inputarray[\ind - 1]}

            \path (a1\swap\connect) -- ++(0, 1.2) coordinate (topcon\ind);
            \fill[black!10, rounded corners = 2pt]
                (topcon\ind) ++(-0.25, -0.05) rectangle ++(0.5, 0.5);

            \ifthenelse{\tmp = 0}{
                \draw[thick, fill = white] (topcon\ind) ++(-0.2, 0) rectangle ++(0.4, 0.4)
                    node[midway, non-path] {$\ind$};
            }{
                \draw[VioletRed, thick, fill = ForestGreen!10] (topcon\ind) ++(-0.2, 0) rectangle ++(0.4, 0.4)
                    node[midway] {$\ind$};
                    
            }
        }
    }

    \foreach \i in {1, 2, ..., \k}{
        \pgfmathtruncatemacro{\tmp}{\i + \k}
        \pgfmathtruncatemacro{\swapn}{(\i - 1) / 2 + 1 + \k / 2}
        \pgfmathtruncatemacro{\g}{mod(\i - 1, 2) + 1}

        \pgfmathparse{\inputarray[\i - 1]}
        \edef\check{\pgfmathresult}

        \ifthenelse{\check = 0}{
            \draw[non-path]
                (topcon\i) .. controls ++(0, -0.5) and ++(0, 0.5) .. (a1\i3);
        }{
            \draw[active-path]
                (topcon\i) .. controls ++(0, -0.5) and ++(0, 0.5) .. (a1\i3);
        }

        \pgfmathparse{\inputarray[\tmp - 1]}
        \edef\check{\pgfmathresult}

        \ifthenelse{\check = 0}{
            \draw[non-path]
                (topcon\tmp) .. controls ++(0, -0.5) and ++(0, 0.5) .. (a1\i4);
        }{
            \draw[active-path]
                (topcon\tmp) .. controls ++(0, -0.5) and ++(0, 0.5) .. (a1\i4);
        }

    }    

    \node[VioletRed, below = 3pt] at (a331) {$7$};

\end{tikzpicture}}
    \hspace{1cm}
    \subfloat{\begin{tikzpicture}[scale=1]
    \tikzstyle{size-tree} = [
        inner sep = 0,
        minimum size = 0.5cm]
    \tikzstyle{active-tree} = [
        VioletRed,
        thick,
        draw,
        circle,
        fill = ForestGreen!10,
        size-tree]
    \tikzstyle{non-path-tree} = [
        thick,
        circle,
        draw,
        size-tree]

    \def\h{1.89}
    \def\sh{1.25}
    \def\inputarray{{
        "5", "1", "7", "3", "6", "2", "8", "4"
    }}

    \node at (0, 1.5) {};
    \node at (0, -5) {};

    \node[active-tree] (a) at (0, \sh) {\footnotesize $r_{11}$};
    \node[active-tree] (a1) at (-2, \sh - \h) {\footnotesize $r_{6}$};
    \node[non-path-tree] (a2) at (2, \sh - \h) {\footnotesize $r_{8}$};
    
    \node[non-path-tree] (a11) at (-3, \sh - 2 * \h) {\footnotesize $r_{1}$};
    \node[active-tree] (a12) at (-1, \sh - 2 * \h) {\footnotesize $r_{3}$};
    \node[non-path-tree] (a21) at (1, \sh - 2 * \h) {\footnotesize $r_{2}$};
    \node[non-path-tree] (a22) at (3, \sh - 2 * \h) {\footnotesize $r_{4}$};

    \draw[->, active-path] (a) -- (a1);
    \draw[->, non-path] (a) -- (a2);
    \draw[->, non-path] (a1) -- (a11);
    \draw[->, active-path] (a1) -- (a12);
    \draw[->, non-path] (a2) -- (a21);
    \draw[->, non-path] (a2) -- (a22);

    \foreach \xpos [count = \i] in {-3.4, -2.6, -1.4, -0.6, 0.6, 1.4, 2.6, 3.4}{
        \pgfmathtruncatemacro{\tmp}{\inputarray[\i - 1]}
        \coordinate (b) at (\xpos, \sh - 3 * \h + 0.02) {};
        \coordinate (b\i) at (\xpos, \sh - 3 * \h + 0.02 + 0.4) {};
        \fill[black!10, rounded corners = 2pt]
            (b) ++(-0.25, -0.05) rectangle ++(0.5, 0.5);

        \ifthenelse{\i = 3}{
            \draw[VioletRed, thick, fill = ForestGreen!10] (b) ++(-0.2, 0) rectangle ++(0.4, 0.4)
                node[midway] {$\tmp$};
        }{
            \draw[thick, fill = white] (b) ++(-0.2, 0) rectangle ++(0.4, 0.4)
                node[midway, non-path] {$\tmp$};
        }
    }

    \draw[->, non-path] (a11) -- (b1);
    \draw[->, non-path] (a11) -- (b2);
    \draw[->, active-path] (a12) -- (b3);
    \draw[->, non-path] (a12) -- (b4);
    
    \draw[->, non-path] (a21) -- (b5);
    \draw[->, non-path] (a21) -- (b6);
    \draw[->, non-path] (a22) -- (b7);
    \draw[->, non-path] (a22) -- (b8);

\end{tikzpicture}}
    \caption{
        \small \emph{(Left):} Three iterations of the Thorp shuffle~\cite{Thorp73} as computed by a network of switches. In a single iteration, we take cards $i$ and $n/2 + i$ and place them in positions $2i - 1$ and $2i$ in the order
        determined by a coin toss $r_j\in\{0,1\}$.
        \emph{(Right):} The shuffle can be simulated by a cell-probe algorithm (aka decision forest) that probes the coin tosses $r_j$~\cite[Lemma~6.4]{Viola12}. Drawn here is the decision tree that finds the $5$th output element.
    }
    \label{fig:thorp}
\end{figure}
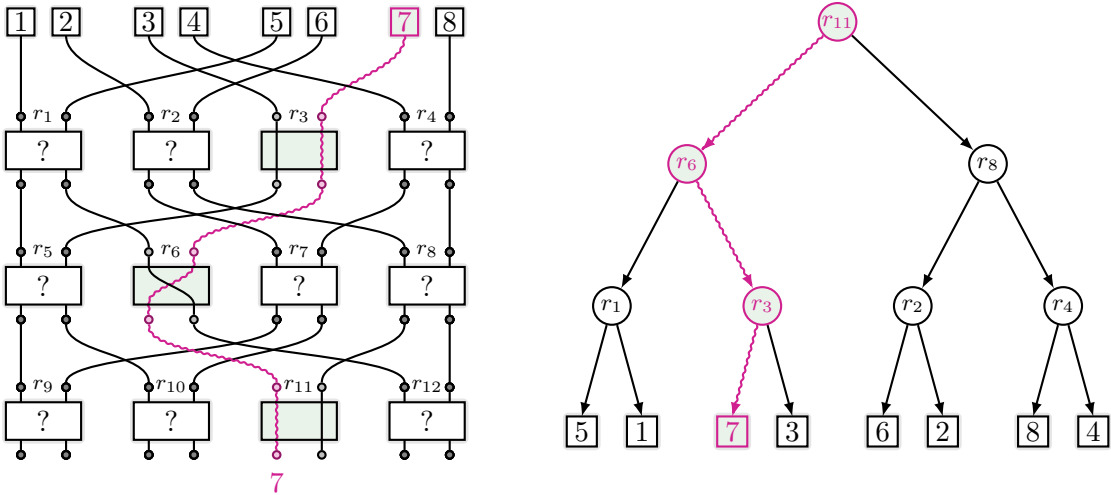

\pagebreak

\paragraph{Sampling permutations.}
We study cell-probe algorithms that solve \emph{sampling} problems. Here we are given a uniform random input $\bm u\sim [n]^s$ (even for infinite $s=\mathbb{N}\coloneqq\{0,1,2,\ldots\}$) and the goal is to produce a prescribed output distribution $f(\bm u)$. In this paper, we focus on producing an output distribution that is close to a uniform random permutation. More formally, we denote the set of permutations by $S_n\subseteq[n]^n$, a uniform random permutation by $\bm \pi\sim S_n$, and the statistical (total variation) distance between random variables $\bm x$ and $\bm y$ by $\Delta(\bm x, \bm y) \coloneqq \max_{E}|\!\Pr[\bm x \in E] - \Pr[\bm y \in E]|$.

Our research question becomes:

\begin{question} \label{q}
    What is the smallest depth $d$ of a decision forest $f$ such that $\Delta(\bm\pi,f(\bm u))\leq 1\%$?
\end{question}

Upper bounds on $d$ follow from oblivious shuffles. Indeed, \cref{fig:thorp} illustrates how $d$ iterations of the Thorp shuffle can be simulated by a depth-$d$ decision forest. It then follows from the aforementioned work on Thorp shuffle convergence that $d\leq O(\log^3 n)$ suffices. Better still, using an oblivious shuffle constructed by Czumaj~\cite{Czumaj15} one can obtain $d\leq O(\log^2 n)$.  (In fact, Czumaj conjectures that even $d\leq O(\log n)$ is possible.)

Lower bounds on $d$ are our main focus. The lower-bound question for sampling permutations with cell-probes was first raised by Viola~\cite[\S5]{Viola20}, who conjectured that $d\geq \omega(1)$ is necessary. In an accompanying seminar talk~\cite{Viola18}, Viola points out that, while $d=1$ is easy to rule out, existing techniques in the sampling literature (surveyed in~\cref{sec:survey}) do not even rule out $d=2$, surprisingly enough. Our main result is to confirm Viola's conjecture by proving the first non-trivial lower bounds on $d$. Moreover, our lower bound turns out to match the upper bounds from oblivious shuffles up to a polynomial factor.
\begin{theorem}[Main result]
\label{thm:main}
Suppose that $f\colon [n]^{\N} \to [n]^n$ is a decision forest of depth $(\log n)^{1/2-\epsilon}$ for some constant $\epsilon>0$. Then for $\bm u\sim[n]^{\N}$ and $\bm \pi \sim S_n$ we have
\[ \Delta(\bm \pi, f(\bm u)) \ge 1-\exp(n^{-\Omega(1)}). \]
\end{theorem}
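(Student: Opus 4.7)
The plan is to derive the statistical-distance lower bound by exhibiting a \emph{surjectivity defect}: the sampler's output vector is overwhelmingly likely to miss a constant fraction of the values in $[n]$, whereas a permutation always hits every value exactly once. Concretely, for each $v \in [n]$ let $N_v \coloneqq |\{i \in [n] : f_i(\bm u) = v\}|$; the goal is to show that, with probability $1 - \exp(-n^{\Omega(1)})$ over $\bm u$, the number of empty values $|\{v : N_v = 0\}|$ is $\Omega(n)$. Since a uniform permutation satisfies $N_v = 1$ deterministically, this yields the claimed distance.

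Before attacking the main step, I would preprocess by restricting attention to a sub-collection of outputs whose marginals are individually $o(1)$-close to uniform on $[n]$---otherwise a single-coordinate test already witnesses large distance---and by conditioning on the values of any ``hub'' cells that are queried by an abnormally large number of decision trees. This costs a controlled fraction of outputs but cleans up the dependence structure.

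The crux is a \emph{block-decomposition lemma}: there exist a subset $S \subseteq [n]$ with $|S| = n^{1-o(1)}$ and a partition of $S$ into blocks of size at most $2^{O(d^2)} = n^{o(1)}$ such that the joint distribution of $(f_i(\bm u))_{i \in S}$ is within $\exp(-n^{\Omega(1)})$ in total variation of a product over blocks. I would prove this by iterating a restriction step $d$ times: at each iteration, identify the hub cells of the current adaptive level and condition on their values, which truncates those branches and exposes a coarser query structure at the next level. Each iteration inflates block size by a factor of $2^{O(d)}$ (one per branch outcome absorbed) and discards an $o(1)$ fraction of outputs; after all $d$ rounds the surviving trees are effectively nonadaptive with a small query footprint, at which point a greedy hypergraph covering yields the desired partition. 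The $2^{O(d^2)}$ block size is exactly what forces $d^2 = o(\log n)$, pinning the theorem to depth $(\log n)^{1/2 - \epsilon}$.

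Given the decomposition, the final step is a second-moment calculation. Because blocks are independent and marginals are near-uniform, $\Exp[N_v] \approx |S|/n$, while the variance of $N_v$ is controlled by the block size and equals $O(|S| \cdot 2^{O(d^2)}/n)$. A Paley--Zygmund-style concentration then shows $|\{v : N_v = 0\}|$ lies near $n \cdot e^{-|S|/n}$ with failure probability $\exp(-n^{\Omega(1)})$, and setting $|S| = \Theta(n)$ gives a linear surplus of missing values. The main obstacle I anticipate is the block-decomposition lemma itself: adaptive trees resist clean decomposition because a single cell's value steers the subsequent queries of \emph{every} tree reading it, so the naive ``cut the shared cells'' argument collapses once $d \ge 2$. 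Controlling the blow-up of block size per restriction round while retaining enough outputs over all $d$ iterations is the delicate quantitative heart of the argument, and it is what ultimately dictates the $(\log n)^{1/2}$ barrier.
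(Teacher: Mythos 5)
There is a genuine gap: your argument has no mechanism for low-entropy samplers, and for these the surjectivity-defect statistic is identically zero. Consider the depth-$1$ forest in which every output cell probes the same single input cell and the output is the cyclic shift of the identity by $j$, where $j$ is that cell's value reduced mod $n$. Every coordinate's marginal is exactly uniform on $[n]$, so your preprocessing discards nothing; the output is always a genuine permutation, so $N_v=1$ for every $v$ and there are never any missing values (nor collisions); and after conditioning on the unique ``hub'' cell the output is deterministic, so your block decomposition holds vacuously but with point-mass marginals, leaving the second-moment/Paley--Zygmund step with nothing to conclude. Yet this sampler is at distance $1-n/n!$ from $\bm\pi$. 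The paper is structured precisely around this dichotomy: when $\H(f(\bm u))$ is low it does not look for missing values at all, but proves a containment lemma showing that $f(\bm u)$ lands in a fixed set of size $n^{3n/4}$ with probability $1-\exp(-n^{\Omega(1)})$, a set a uniform permutation avoids; only in the high-entropy case does it argue via collisions (which, like your missing values, never occur for permutations). Your remark that non-uniform marginals are ``caught by a single-coordinate test'' does not repair this: such a test gives at best constant distance rather than $1-\exp(-n^{\Omega(1)})$, and in the example above all marginals are uniform anyway.

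Separately, the block-decomposition lemma as you state it is not the right intermediate target, because the delicate quantity is not how many outputs survive the restriction rounds but how much \emph{entropy} they retain. Conditioning on hub cells can annihilate the randomness of the surviving outputs (as in the example), so ``discarding an $o(1)$ fraction of outputs per round'' is the wrong bookkeeping; one needs that for a typical assignment to the conditioned cells the conditional Shannon entropy of the surviving outputs remains a constant fraction (per round, a factor $1-O(1/d)$) of what it was. This is where the paper invests its technical effort: it first fixes over-queried cells adaptively to enforce average Lipschitzness, shows average Lipschitzness implies Lipschitzness almost everywhere, and uses this to prove concentration of conditional entropy under restrictions, iterating until the forest has depth $1$, where exact independence of outputs reading distinct cells plus a collision lemma finishes the high-entropy branch. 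The quantitative constraint pinning down depth $(\log n)^{1/2-\epsilon}$ comes from balancing the $\poly(\log n)^d$ loss in the number of retained outputs and the entropy rate $\Omega(1/d)$ against the collision lemma's hypotheses, not from a $2^{O(d^2)}$ block size. Without an entropy-retention statement your decomposition can be true but useless (independent, nearly deterministic blocks), and even with it you would still need the low-entropy/containment branch above to obtain the theorem.
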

The conclusion here gives a robust impossibility result, saying that the output of a shallow decision forest is extremely far from a uniform permutation. Such $1-o(1)$ distance bounds are known to imply lower bounds against succinct data structures for storing permutations. We discuss these corollaries in \cref{sec:data-structure}.

\paragraph{Nonadaptive algorithms.}
En route to \cref{thm:main} we develop new lower-bound techniques that are also able to show improved lower bounds against \emph{nonadaptive} algorithms. We say that a function $f\colon [n]^{\N} \to [n]^n$ is $k$-\emph{local} if every output cell depends on at most~$k$ input cells. That is, every $f_i$ makes at most~$k$ \emph{nonadaptive} probes to the input cells. Viola~\cite{Viola20} proved that permutation sampling requires $\Omega(\log \log n)$ nonadaptive probes, and this was subsequently improved to $\tOmega(\log n)$ by Yu and Zhan~\cite{YZ24}. Our second result gives another exponential improvement:
\begin{theorem}
\label{thm:main-nonadaptive}
Suppose that $f\colon [n]^s \to [n]^n$ is a $n^{\varepsilon}$-local function for a constant $\varepsilon \le 10^{-3}$. Then for~$\bm u \sim [n]^s$ and $\bm \pi \sim S_n$ we have
\[ \Delta(\bm \pi, f(\bm u)) \ge 1 - \exp(-n^{1-O(\varepsilon)}).\]
\end{theorem}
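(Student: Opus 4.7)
My plan is to pass to the marginal distribution on a carefully chosen subset $T \subseteq [n]$ of output cells via the data-processing inequality $\Delta(f(\bm u), \bm\pi) \ge \Delta(f_T(\bm u), \bm\pi_T)$, and then establish a large statistical distance on the marginal by a support-size argument. Let $S_i \subseteq \N$ denote the input cells probed by $f_i$, so that $|S_i| \le k := n^\varepsilon$, and for $T \subseteq [n]$ write $N_T := \bigcup_{i \in T} S_i$. The variable $f_T(\bm u)$ takes at most $n^{|N_T|}$ distinct values, whereas $\bm\pi_T$ is uniform on the $(n)_{|T|} \ge n^{|T|}\exp(-O(|T|^2/n))$ injective tuples in $[n]^T$. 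A direct computation then yields
\[
  \Delta(f_T(\bm u), \bm\pi_T) \;\ge\; 1 - n^{|N_T|-|T|}\cdot \exp(O(|T|^2/n)),
\]
which is at least $1 - \exp(-n^{1-O(\varepsilon)})$ whenever the ``Hall deficiency'' satisfies $|T| - |N_T| \gtrsim n^{1-O(\varepsilon)}/\log n$ and $|T| \lesssim n^{1-O(\varepsilon)}$.

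The combinatorial core of the proof is therefore to exhibit such a deficient $T$. I plan to do this by a case analysis on the bipartite probe graph $G = ([n], \N, E)$ with edges $E = \{(i, v): v \in S_i\}$. By K\"onig's theorem, either (\emph{Case A}) the maximum matching $M$ has size at most $n - c \cdot n^{1-O(\varepsilon)}/\log n$, in which case Hall's theorem furnishes a $T$ with the required deficiency and the proof is complete via the support bound; or (\emph{Case B}) $M$ saturates all but $h = O(n^{1-O(\varepsilon)}/\log n)$ output cells. In Case~B, I plan to set $V_i := M(i)$ for each matched $i$ and condition on the values of all input cells outside $\{V_i\}$, reducing each matched $f_i$ to a function of $V_i$ together with at most $k-1$ other $V_j$'s. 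I then split on input-cell degrees: high-degree cells are rare (at most $n^{1-\varepsilon}$ of them, since $\sum_v d_v \le nk$) and contribute to a deficiency through an overlap/sunflower argument, while low-degree cells give a conflict graph of maximum degree $n^{O(\varepsilon)}$ on matched outputs, in which Tur\'an produces an independent set $T$ of size $\Omega(n^{1-O(\varepsilon)})$ whose induced $f_T$ is a product distribution with marginals $\mu_i$ on $[n]$. A final dichotomy handles this product: either $\prod_i |\operatorname{image}(f_i)|$ is small (direct support bound) or the marginals are sufficiently spread to enable a birthday-style computation that gives $\Pr[f_T \text{ injective}] \le \exp(-n^{1-O(\varepsilon)})$.

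The main obstacle I foresee is the anti-concentration step in Case~B. Adversarial marginals $\mu_i$ could be spread yet supported on pairwise disjoint sets, producing no collisions at all. The plan for overcoming this is a pigeonhole observation: with $\Omega(n^{1-O(\varepsilon)})$ marginals each supported on $\ge n/e$ values, the supports cannot all be pairwise disjoint, and the resulting overlaps must accumulate into $\Omega(n^{1-O(\varepsilon)})$ expected collisions; this is then turned into the desired exponential tail via a Chernoff- or Janson-type concentration that leverages the product structure inherited from the independence of the $f_i$. Quantifying this accumulation tightly --- with enough ``independence mileage'' to yield the $\exp(-n^{1-O(\varepsilon)})$ bound --- is the central technical challenge.
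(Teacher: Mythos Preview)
Your high–level plan (pass to a projection, condition to obtain a product structure, then run a support/collision dichotomy) is the same skeleton the paper uses, but two of the load–bearing steps in your version do not go through as written.

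\textbf{Over-conditioning in Case~B.} You propose to fix \emph{all} input cells outside the matched set $\{V_i\}$. Since wlog $s \le n k = n^{1+\varepsilon}$ and the matching has size $\approx n$, this can be up to $\Theta(n^{1+\varepsilon})$ cells, i.e.\ $n^{\,n^{1+\varepsilon}}$ conditionings. In the ``small support'' branch you then need to union the container sets $E_\alpha$ over all conditionings $\alpha$, and that union has size $n^{\,n^{1+\varepsilon}}\cdot |E_\alpha|$, which swamps $(n)_{|T|}$ (or $n!$) and kills the bound. The paper avoids this by skipping the matching entirely: it conditions only on the $\le n/k = n^{1-\varepsilon}$ input cells of degree $> k^2$ (your own ``high-degree cells are rare'' count), after which every remaining input has degree $\le k^2$ and a greedy argument already yields $\Omega(n^{1-3\varepsilon})$ mutually independent output cells. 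The matching buys you nothing here and is precisely what causes the blow-up.

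\textbf{The image-size dichotomy is the wrong dichotomy.} After obtaining independent $f_i\colon [n]\to[n]$, you split on $\prod_i |\mathrm{image}(f_i)|$. But a large image does not force collisions: take $f_i$ with one huge preimage $f_i^{-1}(c_i)$ of size $n - r$ and $r$ singleton preimages, for pairwise distinct $c_i$. Then each image has size $r+1$ (so $\prod_i |\mathrm{image}(f_i)|$ can be made as large as you like by raising $r$), yet $\Pr[f_i(V_i)\neq c_i]\le r/n$, so for $r = n^{1-\delta}$ the outputs equal the distinct constants $c_i$ with probability $1 - O(m\,n^{-\delta})$ and no collision occurs. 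This is exactly the failure mode you flag as ``spread yet supported on pairwise disjoint sets,'' but it is not cured by a pigeonhole on supports --- the supports here overlap heavily (they may even all equal $[n]$). The paper's fix is to dichotomize on \emph{Shannon entropy}: in the high-entropy branch the collision lemma (the paper's Lemma~4/25) gives $\Pr[\text{collision}]\ge 1-\exp(-n^{\Omega(1)})$; in the low-entropy branch one first gets only a container $E$ with $\Pr[f(\bm u)\in E]\ge 1/2$, and then a separate isoperimetric step (Harper's inequality applied in the input cube, growing to a Hamming neighborhood) boosts this $1/2$ to $1-\exp(-n^{1-O(\varepsilon)})$. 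Your proposal has no analogue of this boosting step, and without it the low-entropy branch only yields constant distance.

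A smaller issue: in Case~A, Hall gives you a deficient $T$ but no control on $|T|$; when $|T|\approx n$ the correction $\exp(O(|T|^2/n))=\exp(O(n))$ dominates unless the deficiency is $\gtrsim n/\log n$, not $n^{1-O(\varepsilon)}/\log n$. This is fixable by moving the threshold, but it is another symptom of the matching/Hall layer adding complications without payoff.
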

In particular, permutation sampling exhibits an $(\log n)^{O(1)}$-vs-$n^{\Omega(1)}$ separation between adaptive and nonadaptive cell-probe complexities. The only previous such separation for sampling problems was $O(1)$-vs-$\tOmega(\log n)$ by Yu and Zhan~\cite{YZ24}. The sampling problem they considered was somewhat artificial (defined for the sole purpose of obtaining a separation), whereas permutation sampling is an extremely natural problem. It remains open to prove an $O(1)$-vs-$n^{\Omega(1)}$ separation.

\subsection{Cell-probes vs.\ bit-probes}

For sampling permutations, it is important that we allow the full power of the cell-probe model, with input cells coming from a large alphabet $[n]$ corresponding to $O(\log n)$-bit word length. If we only allowed \emph{bit-probes}---that is, we restricted the input cells to be bits $\{0,1\}$---then sampling lower bounds would not meaningfully translate to data structure lower bounds (as discussed in~\cref{sec:data-structure}). In fact, bit-probe lower bounds for permutations are trivial to prove: Intuitively, each symbol in a random permutation is uniform over $[n]$ and so requires $\Omega(\log n)$ bits to generate, and this intuition is easy to turn into a formal proof of an $\Omega(\log n)$ bit-probe lower bound.\footnote{In a depth-$d$ \emph{binary} decision forest, each tree has range $2^d$. Thus the support size of the output of a forest is only~$2^{dn}$, which is negligible compared to $|S_n|=n!=2^{\Theta(n\log n)}$ when $d\leq o(\log n)$.} 

Many prior works have studied the bit-probe model. For example, by now, it is known that most distributions over $\{0,1\}^n$ that are \emph{symmetric} (invariant under permuting coordinates) cannot be sampled with $O(1)$ bit-probes~\cite{Viola23,FLRS23,KOW24,KOW25}. The key difference between bit-probes and cell-probes is that the output of a cell-probe algorithm need not be $k$-local for any nontrivial~$k$. Indeed, an $n$-ary decision tree of depth $d$ may depend on $n^{d-1}$ many input cells. This is why locality lower bounds even as high as $k\geq n^{\Omega(1)}$ do not rule out cell-probe algorithms with~$d=2$. Our results contribute new tools that get beyond this ``locality barrier''; such techniques are quite rare as only the prior work~\cite{Viola23} (discussed below in~\cref{sec:survey}) has specifically targeted the (adaptive) cell-probe model. Finally, we note that while there are even more powerful techniques to show sampling lower bounds against $\ACzero$-circuits (a model stronger than cell-probes), these do not apply for permutations as they are easy to sample for~$\ACzero$~\cite{Viola12}.

\subsection{Other related work} \label{sec:survey}
Our results contribute to the systematic study of the complexity of sampling distributions. Classically, computational complexity seeks hard-to-compute functions. In a seminal work, Viola~\cite{Viola12} proposed a program to find \emph{hard-to-sample distributions} for various computational models.
This program has been extremely fruitful for many areas, such as pseudorandom generators~\cite{Vio12b, LV12, BIL12}, randomness extractors~\cite{Viola14,CZ16,CS16}, error-correcting codes~\cite{SS24}, data-structure lower bounds~\cite{Viola12, LV12, BIL12, Viola20, CGZ22, Viola23, YZ24, KOW24, KOW25}, and learning theory~\cite{KOW26}. Especially for quantum--classical separations, a distribution is a natural witnessing object, and a line of work~\cite{BPP23,KOW24,GKMOW25} has shown quantum advantage for sampling problems.

The complexity landscape for sampling looks quite different from that for computation. Optimistically, one might hope to relate the complexity of computing $f\colon \{0,1\}^n \to \{0,1\}$ in some circuit class $\mathcal{C}$ to the complexity of sampling the input--output pair $(\bm u, f(\bm u))$ for $\bm u \sim \{0,1\}^n$ with a (multi-output) circuit from $\mathcal{C}$. Sampling input--output pairs is never harder than computing the function, but sometimes it can be dramatically easier.
The classical example is $\Xor(x) \coloneqq x_1 \oplus \dots \oplus x_n$. It is known since \cite{FSS84} that $\Xor$ is hard to compute with an $\ACzero$-circuit. On the other hand, $(\bm u, \Xor(\bm u))$ can be sampled with an $\textsf{NC}^0$-circuit \cite{Bab87}, in fact, with a $2$-local function
\[
y_1, \dots, y_{n} \mapsto y_1, y_1 \oplus y_2, y_2 \oplus y_3, \dots, y_{n-1} \oplus y_{n}, y_{n}.
\]
Kane, Ostuni, and Wu \cite{KOW25, KOW26} show that $(\bm u, \Xor(\bm u))$ is essentially the only non-trivial symmetric distribution that can be sampled with an $\textsf{NC}^0$-circuit. On the other hand, Viola \cite{Vio12b} building on the permutation-sampling algorithm by \cite{MV91,Hag91} shows that for every symmetric $f$ input--output pairs can be approximately sampled with an $\ACzero$-circuit.

The class $\ACzero$ is the frontier for sampling lower bounds: The works~\cite{LV12, BIL12} show that sampling a uniform codeword from a good error-correcting code is hard for an $\ACzero$-sampler. Viola~\cite{Viola20} exhibits a function $f$ whose input--output pairs cannot be sampled in $\ACzero$ with distance significantly smaller than $1/2$. 

Despite all this lower-bound progress for $\ACzero$-sampling, our understanding of the class remains quite coarse. Studying cell-probe algorithms, a model intermediate between~$\textsf{NC}^0$ and~$\ACzero$, can give us a more refined picture. The only prior work showing lower bounds specifically against (adaptive) cell-probe samplers is by Viola~\cite{Viola23}. He proves a \emph{separator theorem}, which states that any cell-probe sampler $f$ such that $f(\bm u)$ is close to a uniform distribution over some set, we can restrict the domain
of $f$ to some set $R$ such that $\bm y\coloneqq (f(\bm u) \mid \bm u \in R)$ retains most of its entropy, and the coordinates of $\bm y$ are \emph{almost pairwise independent} (the marginal distribution of every two output cells is close to the product of the individual margins). He used this theorem to separate $\ACzero$ and cell-probe algorithms for sampling problems. However, the separator theorem cannot immediately be used to prove lower bounds for sampling permutations: the coordinates of a permutation are almost pairwise independent, so there is no contradiction to $\bm y$ enjoying the same property.

\subsection{Succinct data structure lower bounds} \label{sec:data-structure}
Our results have direct implications for \emph{succinctly storing} permutations. A succinct data structure stores an object (for us, a permutation $\pi\in S_n$) with the number of bits close to the information theoretic minimum (for us, $\log n!$ bits), while supporting interesting queries. For permutations, it is natural to support querying the values $\pi(i)$ and possibly the inverses $\pi^{-1}(i)$. The paper~\cite{MRRR12} constructs a data structure for permutations with $\log n! + n/\log^{2-o(1)} n$ bits of memory that can answer both $\pi(i)$ and $\pi^{-1}(i)$-queries with $\tO(\log n)$ adaptive cell probes. For such a data structure, \cite{Golynsky09} gives an almost matching lower bound. 

Are there succinct data structures with better space/cell-probe complexity that only support~$\pi(i)$-queries? The best lower bound for this problem before this work followed from the sampling lower bound of \cite{YZ24}. They proved that every data structure for supporting $\pi(i)$-queries with $o(\log n / \log \log n)$ \emph{nonadaptive} probes must use at least $\log n! + n^{1-o(1)}$ bits. By a very simple reduction from \cite{Viola12} (also in \cite{Viola20}), our \cref{thm:main,thm:main-nonadaptive} imply the following.
\begin{corollary}
\label{cor:ds-lowerbound}
    Every data structure storing a permutation $\pi \in S_n$ and supporting $\pi(i)$-queries with 
    \begin{itemize}[noitemsep, topsep=0.2em]
        \item $(\log n)^{1/2-\epsilon}$ adaptive cell probes must use $\log n! + n^{\Omega(1)}$ bits of space;
        \item $n^{\varepsilon}$ nonadaptive cell probes must use $\log n! + n^{1-O(\varepsilon)}$ bits of space.
    \end{itemize}
\end{corollary}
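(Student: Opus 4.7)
The plan is to carry out the standard reduction from succinct data structures to cell-probe sampling, due to Viola~\cite{Viola12}. Suppose we are given a data structure with injective encoding $\mathsf{enc}\colon S_n \hookrightarrow \{0,1\}^s$ and a query algorithm $Q$ that computes $\pi(i)$ from $\mathsf{enc}(\pi)$ using $d$ cell probes. I first group the $s$ memory bits into $s' \coloneqq \lceil s/\log n \rceil$ cells over $[n]$ (padding if necessary so that uniform bits become a uniform element of $[n]^{s'}$), and then define a decision forest $f\colon [n]^{s'} \to [n]^n$ by $f_i(\bm u) \coloneqq Q(\bm u, i)$. By construction $f$ inherits the depth and adaptivity profile of $Q$, and correctness of the data structure gives $f(\mathsf{enc}(\pi)) = \pi$ for every $\pi \in S_n$.

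The core calculation compares the uniform memory $\bm u \sim [n]^{s'}$ with $\mathsf{enc}(\bm\pi)$ for $\bm\pi \sim S_n$. Since $\mathsf{enc}$ is injective, its image has size $n!$, so a direct computation yields
\[
  \Delta\!\big(\mathsf{enc}(\bm\pi),\,\bm u\big) \;=\; 1 - n!/2^s \;=\; 1 - 2^{-R}, \qquad R \coloneqq s - \log n!.
\]
Applying the data-processing inequality and using $f(\mathsf{enc}(\bm\pi)) = \bm\pi$ then gives
\[
  \Delta\!\big(\bm\pi,\,f(\bm u)\big) \;\le\; \Delta\!\big(\mathsf{enc}(\bm\pi),\,\bm u\big) \;=\; 1 - 2^{-R}.
\]

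It remains to contrapose with the sampling theorems. For the first bullet, with $d \le (\log n)^{1/2 - \epsilon}$, \cref{thm:main} forces $\Delta(\bm\pi, f(\bm u)) \ge 1 - \exp(-n^{\Omega(1)})$; combining this with the upper bound above gives $2^{-R} \le \exp(-n^{\Omega(1)})$, i.e.\ $R \ge n^{\Omega(1)}$. For the second bullet, $f$ is $n^{\varepsilon}$-local, so \cref{thm:main-nonadaptive} forces $\Delta(\bm\pi, f(\bm u)) \ge 1 - \exp(-n^{1 - O(\varepsilon)})$, and the same manipulation yields $R \ge n^{1 - O(\varepsilon)}$.

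I do not anticipate any serious obstacle: the argument is entirely information-theoretic, and the only fiddly step is the alphabet alignment between data-structure bits and $[n]$-cells, which the bit-chunking above handles (at worst affecting constants in the exponent of the redundancy bound). What makes the reduction produce nontrivial bounds on the redundancy $R$ is precisely the robust $1-o(1)$ distance conclusion of \cref{thm:main,thm:main-nonadaptive}; a weaker conclusion such as $\Delta \ge 0.01$ would only give $R \ge \Omega(1)$.
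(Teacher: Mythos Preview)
Your proposal is correct and matches the paper's approach exactly: the paper does not spell out a proof but simply invokes the standard reduction of Viola~\cite{Viola12,Viola20}, which is precisely what you have written out (encode, feed uniform memory to the query algorithm, and compare the resulting $1-2^{-R}$ upper bound on $\Delta(\bm\pi,f(\bm u))$ against the $1-\exp(-n^{\Omega(1)})$ lower bound from \cref{thm:main,thm:main-nonadaptive}). Your remark that the $1-o(1)$ distance is what makes the redundancy bound nontrivial is also exactly the point the paper emphasizes.
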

We emphasize that such a lower bound is not obviously true at the outset, as surprising constructions of succinct data structures exist. For example, for the \emph{dictionary problem} of storing a set of key--value pairs, there exists a data structure that stores only a polylogarithmically many bits above the information-theoretic minimum with constant-time access~\cite{Yu2020}.

\subsection{Structure}
The rest of the paper is organized as follows: In \cref{sec:techniques} we review our techniques and give a full proof of \cref{thm:main-nonadaptive}. In \cref{sec:main-proof} we give the proof of \cref{thm:main} modulo three important technical tools. The three subsequent sections establish those tools: \cref{sec:lipschitzness} handles average Lipschitzness, the ingredient that differentiates \cref{thm:main} from \cref{thm:main-nonadaptive}; in \cref{sec:containment} we prove a \emph{containment lemma} that brings the statistical distance bounds exponentially close to $1$; in \cref{sec:proof-of-collision-lemma} we prove a \emph{collision lemma}, the main technical differentiator between \cref{thm:main-nonadaptive} and the previous work \cite{Viola20, YZ24}.  We conclude with open questions in \cref{sec:open-questions}.

\section{Techniques}
\label{sec:techniques}
In this section we discuss the proof of our main result. We start by giving in \cref{sec:nonadaptive-proof} a proof of \cref{thm:main-nonadaptive} with \emph{constant statistical distance bound}, and introducing the technical ideas shared between both of the proofs. In \cref{sec:main-proof-tech} we overview the additional ingredients needed for \cref{thm:main}. Finally in \cref{sec:boosting-distance} we show how to boost the statistical distance bound to be exponentially close to $1$ in both cases. 
\subsection{Proof of \cref{thm:main-nonadaptive}}
\label{sec:nonadaptive-proof}
In this section, we give a proof of \cref{thm:main-nonadaptive} modulo two technical lemmas. The informal
idea for both our proofs is the following dichotomy: if the (Shannon) entropy of $f(\bm u)$ is low, then a
random permutation~$\bm \pi\sim S_n$ lands in the support of $f(\bm u)$ with low probability, and if the entropy
of $f(\bm u)$ is high, then whp two of its symbols coincide, which never happens with a permutation.  Here Shannon entropy of a random variable $\bm x$ is $\H(\bm x) \coloneqq \sum_{x \in \supp(\bm x)} \Pr[\bm x = x] \log (1/\Pr[\bm x = x])$.

\paragraph{Low entropy case.} 
If we only shoot for a \emph{constant} statistical distance bound, we can get away with a very simple proof in this case without using any properties of the sampler apart from the entropy of $f(\bm u)$.
\begin{lemma}
\label{lem:simple-containment}
    Suppose that $\H(\bm x) \le k$. Then there exists a set $E$ of size $2^{2k}$ such that $\Pr[\bm x \in E] \ge 1/2$.
\end{lemma}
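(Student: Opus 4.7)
The plan is to let $E$ be the set of ``heavy'' atoms of $\bm x$, namely $E \coloneqq \{x : \Pr[\bm x = x] \geq 2^{-2k}\}$. Since probabilities sum to at most $1$, the cardinality bound $|E| \leq 2^{2k}$ is immediate. The work is in showing $\Pr[\bm x \in E] \geq 1/2$, and this will come from a one-line Markov-style argument applied to the self-information $\log(1/\Pr[\bm x = x])$, whose expectation is exactly $\H(\bm x) \leq k$.

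Concretely, every $x \notin E$ satisfies $\log(1/\Pr[\bm x = x]) > 2k$ by definition of $E$, while every $x \in E$ contributes a nonnegative term $\Pr[\bm x = x] \log(1/\Pr[\bm x = x]) \geq 0$ to the entropy sum (probabilities lie in $[0,1]$). Dropping the $E$-terms and using the pointwise lower bound on the non-$E$ terms yields
\[
k \;\geq\; \H(\bm x) \;=\; \sum_{x} \Pr[\bm x = x]\, \log\frac{1}{\Pr[\bm x = x]} \;\geq\; 2k \cdot \Pr[\bm x \notin E],
\]
so $\Pr[\bm x \notin E] \leq 1/2$ and hence $\Pr[\bm x \in E] \geq 1/2$, as required.

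There is no real obstacle here: the only care needed is to use the right threshold ($2^{-2k}$ rather than $2^{-k}$) so that the two ``halves'' of the dichotomy — a cardinality bound from the probabilities summing to $1$, and a mass bound from entropy being at most $k$ — both come out with constant slack. A slightly tighter variant (e.g.\ choosing $E$ to be the top-$2^{2k}$ most likely atoms, or thresholding at $2^{-k-1}$) would give $|E| \leq 2^{2k}$ together with $\Pr[\bm x \in E] \geq 1 - 1/2 = 1/2$ as well, but the threshold definition above is the cleanest for the one-line entropy computation.
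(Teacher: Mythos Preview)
Your proof is correct and essentially identical to the paper's: both define $E$ as the set of atoms with probability at least (or above) $2^{-2k}$, get the cardinality bound from probabilities summing to $1$, and get the mass bound by applying Markov's inequality to the self-information $\log(1/p(\bm x))$, whose expectation is $\H(\bm x)\le k$.
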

\begin{proof}
    Let $p(x) \coloneqq \Pr[\bm x = x]$ be the probability function of $\bm x$ so that $\H(\bm x) = \Exp[\log (1/p(\bm x))]$. We get from Markov's inequality that $\Pr[\log (1/p(\bm x)) \ge 2k] \le 1/2$. Thus for $E \coloneqq \{x \mid \log (1/p(x)) < 2k\}$ we have $\Pr[\bm x \in E] \ge 1/2$. Observe that $x \in E$ iff $p(x) > 2^{-2k}$, then since the total probability is at most $1$ we get $|E| \le 2^{2k}$.
\end{proof}
We can now conclude the low-entropy case almost immediately. If $\H(f(\bm u)) \le (n\log n)/4$, then by \cref{lem:simple-containment} we find $E$ of size $n^{n/2}$ such that $\Pr[f(\bm u) \in E] \ge 1/2$. On the other hand $\Pr[\bm \pi \in E] \le |E|/n! = o(1)$, which implies $\Delta(\bm\pi, f(\bm u)) \ge 1/2-o(1)$.

\paragraph{High entropy case.}
We would like to show that if $\H(f(\bm u)) > (n \log n)/4$, then some two symbols of $f(\bm u)$ coincide whp. The first step is to show this in the case the coordinates of $f(\bm u)$ are independent. We formalize this in the following \emph{collision lemma}:
\begin{restatable}{lemma}{simpcollisiona}
\label{lem:simplified-collision-lemma}
    Let $\bm z_1, \dots, \bm z_m$ be independent random variables over $[n]$ such that $\H(\bm z_1,\dots,\bm z_m) \ge (m \log n) / 8$ for $m \ge n^{0.99}$. Then $\Pr[\exists i \neq j \in [m]\colon \bm z_i = \bm z_j] \ge 1 - o(1)$.
\end{restatable}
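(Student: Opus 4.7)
Let $X \coloneqq \sum_{i<j}\mathbbm{1}[\bm z_i = \bm z_j]$ be the number of colliding pairs. The plan is to show $\Pr[X \ge 1] \ge 1 - o(1)$ via the Paley--Zygmund inequality $\Pr[X \ge 1] \ge (\Exp X)^2/\Exp[X^2] = 1/(1 + \Var[X]/(\Exp X)^2)$. Writing $p_i(x) \coloneqq \Pr[\bm z_i = x]$ and $q(x) \coloneqq \tfrac1m\sum_i p_i(x)$, independence gives the identity
\[
\Exp[X] = \tfrac12\bigl(m^2 \|q\|_2^2 - \sum_i \|p_i\|_2^2\bigr),
\]
and since the only dependent pairs $(A_{ij},A_{kl})$ come from triples, a direct covariance calculation yields $\Var[X] \le \Exp[X] + 6\sum_{i<j<k}\Pr[\bm z_i = \bm z_j = \bm z_k] \le \Exp[X] + m^3 \|q\|_3^3$.

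By concavity of Shannon entropy, $\H(q) \ge \tfrac1m \sum_i \H(\bm z_i) \ge (\log n)/8$, so $q$ cannot be too concentrated. I would case-split on the mode of $q$. In the \emph{heavy-atom case}, where some $x^\star \in [n]$ has $q(x^\star) \ge \tau$ for a polylogarithmic threshold $\tau$, we have $\sum_i p_i(x^\star) = m\,q(x^\star) \ge m\tau \to \infty$, so a Chernoff/Le Cam bound on the independent indicators $\mathbbm{1}[\bm z_i = x^\star]$ shows $|\{i : \bm z_i = x^\star\}| \ge 2$ with probability $1 - o(1)$, giving a collision at $x^\star$. In the \emph{spread case} $\max_x q(x) < \tau$, the variance contribution from triples is negligible: $\|q\|_3^3 \le \tau\|q\|_2^2$ and $\|q\|_\infty \le \sqrt{\|q\|_2^2}$, so $\Var[X]/(\Exp X)^2 \le O(1/\Exp X) + O(\sqrt{n}/m)$, which is $o(1)$ provided $\Exp[X] \to \infty$.

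The main obstacle is producing a matching lower bound $\Exp[X] \to \infty$ in the spread case. The naive Cauchy--Schwarz bound $\|q\|_2^2 \ge 1/n$ gives $m^2\|q\|_2^2 \ge n^{0.98}$, which is large, but we must subtract $\sum_i \|p_i\|_2^2$ which can in principle be as large as $m = n^{0.99}$ (and indeed equals $m$ precisely when the $\bm z_i$'s are deterministic, in which case the deterministic-distinct scenario gives $\Exp X = 0$). This is exactly where the entropy hypothesis is essential: by an averaging/pigeonhole argument, at least $\Omega(m)$ indices $i$ satisfy $\H(\bm z_i) \ge (\log n)/16$, and for such $i$ one has $\|p_i\|_2^2 \le 1 - \Omega(1)$; more sharply, I would control $\Exp[X]$ by showing directly that entropy forces substantial pairwise overlap $\langle p_i, p_j \rangle$ between the distribution vectors, ruling out the near-orthogonal configurations that would make $\Exp X$ small. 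Quantifying this alignment is the delicate step; once it yields $\Exp[X] = \omega(1)$ (ideally $\Omega(m^2/n) = \Omega(n^{0.98})$), the Paley--Zygmund calculation above closes the argument.
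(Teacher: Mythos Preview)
Your second-moment/Paley--Zygmund approach is genuinely different from the paper's, which uses McDiarmid's inequality on the bounded-difference function $f(z)=\sum_k\max(0,|\{i:z_i=k\}|-1)$ to get exponential concentration around $\Exp[f]$, and then proves $\Exp[f]=\Omega(\delta^2 m^2/n)$. Your heavy-atom case and your variance calculation in the spread case are both fine. The gap you yourself flag---lower-bounding $\Exp[X]$ in the spread case---is real, and your sketch does not close it: the observation that $\Omega(m)$ indices satisfy $\|p_i\|_2^2\le 1-\Omega(1)$ only gives $\sum_i\|p_i\|_2^2\le m-\Omega(m)$, which is still $\Theta(m)$ and can swamp $m^2\|q\|_2^2=\Omega(m^2/n)=\Omega(n^{0.98})$. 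The ``alignment'' argument you allude to is exactly the hard step, and without it the whole spread-case calculation (which silently assumes $\Exp X=\Theta(m^2\|q\|_2^2)$) does not go through.

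The paper's fix is a \emph{light-edge restriction}: for each $i$, keep only the atoms $x$ with $p_i(x)\le n^{-\delta/2}$. An elementary claim (their Claim in \S6) shows that when $\H(\bm z_i)\ge\delta\log n$, this light part still carries mass $\ge\delta/8$. Crucially, the restricted vectors satisfy $\sum_i\|p_i^L\|_2^2\le n^{-\delta/2}\sum_i\|p_i^L\|_1\le n^{-\delta/2}m$, and since $m\ge n^{0.99}$ and $\delta=1/8$ this is now negligible compared to $m^2\|q^L\|_2^2\ge(\delta m/8)^2/n$. If you graft this single idea onto your framework, your $\Exp[X]$ lower bound (and hence the whole Paley--Zygmund argument) goes through; without it, the proposal is incomplete. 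One minor note: Paley--Zygmund only yields a polynomially small failure probability, whereas the paper's McDiarmid route gives $\exp(-n^{\Omega(1)})$; for this particular lemma's $1-o(1)$ conclusion that is adequate, but it would not suffice for the stronger versions used later in the paper.
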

This lemma (proved in \cref{sec:proof-of-collision-lemma}) and its generalizations will be one of the main technical ingredients in the proof of \cref{thm:main} as well.\footnote{It might seem that \cref{lem:simplified-collision-lemma} is a standard birthday paradox. Usually, the proofs of such results go as follows: split $\bm z$ into halves $\bm z_1, \dots, \bm z_{m/2}$ and $\bm z_{m/2+1}, \dots, \bm z_m$, fix $\bm z_{\le m/2} = \alpha$ and show that $\bm z_j \in A = \{\alpha_1,\dots,\alpha_{m/2}\}$ with noticeable probability for many $j > m/2$. Then apply Hoeffding's inequality. Observe that this is false, since it could happen that $\supp(\bm z_j) \cap A = \emptyset$ for all $j > m/2$.} 

\paragraph{Bounded influence.}
The remaining piece of the proof is the intermediate notion between local functions and a collection of independent output cells: we say that $f\colon [n]^s \to [n]^n$ is $(\ell, k)$-local if it is $k$-local and \emph{every input cell affects only $\ell$ output cells}. We make two simple observations. The first makes a step from $k$-locality to $(k^2,k)$-locality, and the second makes a step from $(\ell,k)$-locality to independence.
\begin{enumerate}[noitemsep, label=(O\arabic*)]
    \item For every $k$-local $f$ the distribution $f(\bm u)$ is a mixture of $n^{n/k}$ distributions $f^\alpha(\bm u)$ where $f^\alpha$ is $(k^2,k)$-local. Indeed, let $I \subseteq [s]$ be the set of inputs in $f$ that influence more than $k^2$ output cells. Since the number of input--output cell pairs such that the output cell depends on the input is at most $n \cdot k$, the size of $I$ is at most $n/k$. Hence for every $\alpha \in [n]^I$, fixing the inputs in $I$ according to $\alpha$ yields $f^\alpha$. 
    \label{obs:restriction}
    \item For every $(\ell, k)$-local function $f$ there is a set $J \subseteq [n]$ of size $n/(\ell k)$ of output cells such that $\{f_j(\bm u)\}_{j \in J}$ are independent. Indeed, populate $J$ greedily: add an output cell $j \in [n]$ to $J$ and ``delete'' all other output cells that share an input with $j$, and repeat this until all output cells are deleted. At each step at most $k\ell$ output cells are deleted and one of them is added to $J$, so $|J| \ge n/(\ell k)$.
    \label{obs:independence}
\end{enumerate}

\paragraph{Proof of \cref{thm:main-nonadaptive} with constant distance.} 
We will show that in the setting of \cref{thm:main-nonadaptive} we have $\Delta(f(\bm u), \bm \pi) \ge 1/2-o(1)$. We first apply \ref{obs:restriction} to get a set $I \subseteq [s]$ of size $n^{1-\epsilon}$ and a collection $\{f^\alpha\}_{\alpha \in [n]^I}$ of $(n^{2\epsilon}, n^\epsilon)$-local functions such that $(f(\bm u) \mid \bm u_I = \alpha) \equiv f^\alpha(\bm u)$. 
Consider an arbitrary $\alpha \in [n]^I$. The function $f^\alpha\colon [n]^{[s] \setminus I} \to [n]^n$ defined by restricting the inputs in $I$ according to $\alpha$ is still $n^\epsilon$-local and each input cell affects at most $n^{2\epsilon}$ output cells. 

We now apply the entropy dichotomy for $f^\alpha(\bm u)$:
\begin{enumerate}
\item  \emph{Low entropy case:} $\H(f^\alpha(\bm u)) \le (n \log n)/4$. Then by \cref{lem:simple-containment} there exists an event $E_\alpha$ such that $\Pr[f^\alpha(\bm u) \in E_\alpha] \ge 1/2$ with $|E_\alpha| \le n^{n/2}$. 

\item  \emph{High entropy case:} $\H(f^\alpha(\bm u)) \ge (n \log n)/4$. Then using the chain rule for Shannon entropy, we have $\sum_{i \in [n]} \H(f^\alpha_i(\bm u)) \ge \H(f^\alpha(\bm u)) \ge (n \log n) / 4$. Therefore, there exists a set $J \subseteq [n]$ of size $n/8$ such that for each $j \in J$ we have $\H(f^\alpha_j(\bm u)) \ge \log n / 8$. Now apply \ref{obs:independence} to find $J'$ of size at least $n^{1-3\epsilon}/8 \ge n^{0.99}$ such that $f^\alpha_j(\bm u)$ are independent for $j \in J'$. Applying \cref{lem:simplified-collision-lemma} to $f^\alpha_{J'}(\bm u)$ we get that $\Pr[\exists i \neq j \in J'\colon f_i(\bm u) = f_j(\bm u) \mid \bm u_I = \alpha] \ge 1 - o(1)$.
\end{enumerate}
Finally we are ready to define an event that witnesses the statistical distance between $f(\bm u)$ and $\bm \pi$. Suppose $L \subseteq [n]^I$ is the set of assignments $\alpha$ such that the entropy of $f^\alpha(\bm u)$ is low. Then define $F \coloneqq ([n]^n \setminus \supp(\bm \pi)) \cup \bigcup_{\alpha \in L} E_\alpha$. That is, $F$ is the event that the output sequence has a collision or belongs to one of the container events for the low-entropy case. Then $\Pr[\bm \pi \in F] \le n^{|I|} \cdot n^{n/2} / n! = o(1)$. On the other hand by the total probability law
\begin{align*}
  \Pr[f(\bm u) \in F] \ge
      & \Pr[\bm u_I \in L] \cdot 
          \Pr\left[f(\bm u) \in \bigcup_{\alpha \in L} E_\alpha \mid \bm u_I \in L\right]\\
                    &\, + \Pr[\bm u_I \not\in L] \cdot \Pr[f(\bm u) \text{ has a collision} \mid \bm u_I \not\in L]\\ \ge&\, 1/2.
\end{align*}

\paragraph{Comparison to \cite{Viola20}.}
The idea of low-vs-high entropy dichotomy was originally introduced in \cite{Viola20} to get $\Omega(\log \log n)$ locality lower bound for sampling permutations. The main reason our lower bound is much stronger is \cref{lem:simplified-collision-lemma}. In \cite{Viola20} the dichotomy was established for a much stronger notion of entropy. That caused the low-entropy case to be much more complicated, rendering \cref{lem:simple-containment} inapplicable.

\subsection{What is missing for the adaptive case?}
\label{sec:main-proof-tech}
The proof of \cref{thm:main} follows the same high-level recipe, but the steps are much more involved. The
proof of the nonadaptive case has three steps:

\begin{enumerate}[noitemsep, leftmargin = 5em, label=(Step \arabic*)]
    \item Fix some of the input cells so that the rest affect few output cells. \label{item:fixing}
    \item If the output has low entropy, argue that it is significantly contained in a small set. \label{item:low-entropy} 
    \item If the output has high entropy, argue that there is likely a collision by greedily
        choosing output cells that do not have common inputs. \label{item:high-entropy}
\end{enumerate}

The constant-probability version of \ref{item:low-entropy} does not suffer from the introduction of adaptivity, however the distance-boosting (see \cref{sec:boosting-distance}) in the adaptive case gets more involved. \ref{item:fixing} and \ref{item:high-entropy} break completely even in the constant-distance regime. The issue is that even a depth-$2$ decision forest can have arbitrary locality, and every input cell may influence every output cell. So, the key change is to devise an alternative intermediate notion between bounded-depth and independence. 

\paragraph{Average Lipschitzness.} 
Although low-depth decision trees are not local, they are local for every particular input. Similarly we can ask that a decision forest has bounded influence \emph{in expectation}:

\begin{restatable}[{\normalfont{\cite{BIL12}}}\mbox{}]{definition}{lipschitzdef} \label{def:lips}
    Let $f\colon [n]^s \to [n]^n$ be a decision forest. Let $\bm \theta_j$ denote the number of trees in $f$ that query $j$ on the input $\bm u \sim [n]^s$. We then say that $f$ is \emph{average-$\mu$-Lipschitz} if $\Exp[\bm \theta_j] \le \mu$ for every $j \in [s]$.
    We say that $f$ is $(\mu, \delta)$-Lipschitz if $\Pr[\bm \theta_j > \mu] \le \delta$ for every $j\in[s]$. 
\end{restatable}
The high-level plan of the proof of \cref{thm:main} is to implement \ref{item:fixing} and \ref{item:high-entropy} above with average Lipschitzness replacing bounded influence. Both adaptations come with challenges: In \ref{item:fixing} fixing input cells that have high average Lipschitzness might \emph{increase} the average Lipschitzness for other input cells. In \ref{item:high-entropy} the greedy choice of independent output cells as in \ref{obs:independence} fails completely\footnote{We remark that the greedy approach \emph{must fail}, since we do have depth-$O(\log^2 n)$ decision forests sampling permutations \cite{Czumaj15, Viola20}.}, since average Lipschitzness offers no \emph{global} independence properties. 

\paragraph{Average Lipschitzness implies Lipschitzness almost everywhere.}
At first, it seems that $(\mu, \delta)$-Lipschitzness (or Lipschitzness \emph{almost everywhere}) is a much stronger property than average Lipschitzness. Remarkably, it turns out that these properties are almost equivalent:
\begin{restatable}{lemma}{avgliptolip}
\label{lem:average-Lipschitz-to-Lipschitz}
    If $f\colon \Lambda^s \to \Sigma^m$ is an average-$\mu$-Lipschitz depth-$d$ decision forest,
    then for every $\varepsilon > 0$ the forest $f$ is $(3\mu d^2 \log (1/\varepsilon), \varepsilon)$-Lipschitz. 
\end{restatable}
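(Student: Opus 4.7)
The plan is the moment method. Writing $\bm{\theta}_j = \sum_{i=1}^m Y_i$ with $Y_i = \li T_i \text{ probes } j \text{ on } \bm u \ri$, the hypothesis gives $\Exp[\bm{\theta}_j] \le \mu$, and I would aim to establish a $k$-th moment bound of the form
\[
\Exp[\bm{\theta}_j^k] \le (C \mu d^2 k)^k
\]
for an absolute constant $C$ (small enough, e.g.\ $C \le 3/e$) and every integer $k \ge 1$. Combined with Markov's inequality, this yields $\Pr[\bm{\theta}_j > \tau] \le (C \mu d^2 k / \tau)^k$, and choosing $k = \lceil \log(1/\varepsilon) \rceil$ together with $\tau = 3\mu d^2 \log(1/\varepsilon)$ gives the desired $(3\mu d^2 \log(1/\varepsilon), \varepsilon)$-Lipschitzness.

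To prove the moment bound, I would expand
\[
\Exp[\bm{\theta}_j^k] = \sum_{i_1, \ldots, i_k \in [m]} \Pr[T_{i_\ell} \text{ probes } j \text{ for all } \ell],
\]
and further decompose each event $T_i$-probes-$j$ as a disjoint union over root-to-leaf paths $\pi$ in $T_i$ that traverse $j$. A path $\pi$ fixes the values of (up to) $d$ input positions (one being $j$) with probability $|\Lambda|^{-|\mathrm{Var}(\pi)|}$. For a $k$-tuple of paths $(\pi_1, \ldots, \pi_k)$, the probability that $\bm u$ is consistent with all paths simultaneously is $|\Lambda|^{-|V|}$, where $V = \bigcup_\ell \mathrm{Var}(\pi_\ell)$ (and zero if the paths conflict). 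Average Lipschitzness says that the total ``labeled-path weight'' through $j$ summed over trees is at most $\mu$. The key technical step is to bound the resulting weighted sum by $(C\mu d^2 k)^k$.

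I would try to control the sum via a sequential peeling: fix $\pi_1, \ldots, \pi_{\ell - 1}$ with combined variable set $V_{<\ell}$, then control the contribution $\sum_{(i_\ell, \pi_\ell)} |\Lambda|^{-|\mathrm{Var}(\pi_\ell) \setminus V_{<\ell}|}$ from adding the next path. Intuitively, the two factors of $d$ arise from (i) the $d$ possible depths at which $j$ can be queried on a path, and (ii) a $d$-way union bound over how the new path can overlap with $V_{<\ell}$ (each overlap reducing the ``fresh'' variable count by one). After $k$ such peeling steps, the combined contribution is at most $(C d^2 \mu)^k$, and the $k^k$ factor comes from the orderings of paths.

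The main obstacle is to carry out the peeling cleanly. A partial fixing of $V_{<\ell}$ can, a priori, inflate the conditional expected number of trees probing $j$ far beyond $\mu$, so one needs a careful combinatorial argument that the inflation is controlled by $d^2$ rather than something much larger (e.g.\ the naive $|\Lambda|^d$). The hope is that since every path under consideration must contain the common position $j$, the average Lipschitzness bound $\mu$ can be preserved across the peeling, with only a polynomial-in-$d$ correction from the bookkeeping of overlap patterns. This delicate step is where I expect the proof's technical weight to lie.
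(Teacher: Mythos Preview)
Your high-level structure (moment method on $\bm\theta_j$, then Markov with $k=\log(1/\varepsilon)$) is exactly what the paper does, but you have not identified the device that makes the $k$-th moment estimate go through; you explicitly flag ``the main obstacle'' without resolving it, and your proposed path-overlap bookkeeping is not what the paper uses.

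The paper's argument has two ingredients you are missing. First, rather than working with $\bm\theta_j$ directly, it writes $\bm\theta_j=\sum_{\ell\in[d]}\sum_{i\in[m]} g^{\ell,j}_i(\bm u)$ where $g^{\ell,j}_i$ is the depth-$\ell$ tree that outputs $1$ iff the $\ell$-th query of $f_i$ is to cell $j$. This $dm$-tree forest is average-$d\mu$-Lipschitz (each tree of $f$ spawns $d$ pruned copies), and the problem becomes: for a generic average-$\mu'$-Lipschitz depth-$d$ $[0,1]$-valued forest, bound the tail of the output sum in terms of its mean $\kappa$. Second, and this is the real trick, the moment recursion is proved not by tracking path overlaps but by \emph{conditioning on the full transcript} $\tau_i(\bm u)=\alpha$ of one tree and then replacing, in every other tree, any query to a cell appearing in $\alpha$ by a leaf labeled $1$. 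The modified forest $g^\alpha$ never queries $\alpha$, so its output is independent of the conditioning; it remains average-$\mu'$-Lipschitz (queries only decrease); and since $\alpha$ fixes at most $d$ cells, each queried on average at most $\mu'$ times, the mean of the modified sum rises by at most $d\mu'$. This gives the clean recursion
\[
\Exp\Big[\big(\textstyle\sum_i f_i\big)^k\Big]\ \le\ \kappa\cdot F(\kappa+d\mu',\,k-1),
\]
hence $\Exp[(\sum_i f_i)^k]\le(\kappa+kd\mu')^k$. Applied with $\mu'=d\mu$ and $\kappa\le\mu$, this yields $\Pr[\bm\theta_j\ge 2(\mu+\log(1/\varepsilon)d^2\mu)]\le\varepsilon$.

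The obstacle you worried about---that fixing $V_{<\ell}$ can blow up the conditional mean---is sidestepped entirely: the relabel-to-$1$ step avoids conditioning on \emph{values} of shared cells, it just upper-bounds the corresponding outputs by $1$, so Lipschitzness is preserved verbatim and the mean increases by a controlled $d\mu'$ rather than anything alphabet-dependent. Your path-based peeling would have to rediscover this in some form; as written, it is a plan rather than a proof.
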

This lemma constitutes the crucial structural property of Lipschitz forests that we exploit in our proofs. It implies that many functions of the output of a decision forest concentrate well around their expectations. In particular, we show this for the conditional entropy (\cref{subsec:cond-entr}), and the Hamming distance to a set (\cref{sec:containment}). 

The Lipschitz property of decision forests (with a different notation) was used very successfully by Beck, Impagliazzo, and Lovett \cite{BIL12}\footnote{They asked in Open~Problem~1 if
    their results could be used elsewhere. To the best of our knowledge, our work is the first such
    usecase.}. They prove \cite[Theorem~1.7]{BIL12} a result very similar (but incomparable) to \cref{lem:average-Lipschitz-to-Lipschitz}. Unfortunately, we cannot use their result directly, as it only implies $(\omega(\sqrt{s}), \cdot)$-Lipschitzness and we need the first parameter to be polynomially small in~$n$. 

\paragraph{Establishing average Lipschitzness.}
In the adaptive case we solve the issue with \ref{item:fixing} by fixing the inputs \emph{adaptively}. We exhaustively fix inputs that violate the condition $\Exp[\bm \theta_j] \le \mu$ and observe that \emph{in expectation over the value we fix the $j$-th input cell to}, this reduces the \emph{average depth} of the whole decision forest by $\mu$. Then we analyze the stopping time of the input fixing process to say that on average it terminates in $n d / \mu$ steps for a depth-$d$ decision forest. See \cref{sec:fixing-to-get-lipschitzness} for the proof.

\paragraph{Entropy-retaining depth reduction.}
For the nonadaptive case in \ref{item:high-entropy} we
used that after the restriction the input--output cell dependency graph has low degree. In the adaptive case,
even after establishing Lipschitzness almost everywhere, we still have potentially pairwise-dependent
output cells. We resolve this by further restricting the inputs \emph{and removing some output cells}
in order to reduce the depth of the forest. The key challenge here is to establish that after such
restriction the \emph{entropy rate} of the remaining output cells does not decrease much. Specifically we will
show that such procedure only decreases the entropy rate by a factor $1-O(1/d)$ when reducing the depth
of the forest from $d$ to $d-1$. Hence, after we are done, we are going to have some $m$ output cells that are
computed with a depth-$1$ decision forest and with entropy $\Omega(m \log n / d)$. Depth-$1$ decision
forest is nonadaptive, so we can again choose independent output cells greedily.  Our procedure shrinks the
number of output cells by a $\poly(\log n)$-factor at each step, so in order to apply (generalized)
\cref{lem:simplified-collision-lemma} we need $\poly(\log n)^d = o(n^{1/d})$, hence we choose $d =
o(\sqrt{\log n}/\log\log n)$.

In order to show that whp over the input restrictions the entropy is retained we show that conditional
Shannon entropy $\H(f(\bm u) \mid \bm u_I = \bm\alpha)$ concentrates around $\H(f(\bm u) \mid \bm u_I)$
wrt the random choice of $\bm\alpha \sim [n]^I$ when $f$ is an average-Lipschitz decision forest. The
corresponding property is almost immediate for \emph{min-entropy} regardless of the structure of the
random variable $f(\bm u)$. However, for us it is crucial to work with Shannon entropy, since assuming
that min-entropy is low does not imply any \emph{global} properties of the distributions, so the analogue
of \cref{lem:simple-containment} for min-entropy is completely false.

\subsection{Boosting distance}
\label{sec:boosting-distance}
\cref{thm:main-nonadaptive} and \cref{thm:main} both claim that the statistical distance from a cell-probe sampler to $\bm \pi$ is exponentially close to $1$. This bound is crucial for the application to succinct data structures. In this section, we show how to boost the distance in the nonadaptive case.

In the high-entropy case we actually do not need any changes, since \cref{lem:simplified-collision-lemma} implies that probability of \emph{not} seeing a collision in a high-entropy collection of independent variables is exponentially low. 
Thus, it remains to address the low-entropy case where we want to show that for a $(n^{2\epsilon},n^\epsilon)$-local function $g$ there exists an event $E$ such that $|E| \le n^{3n/4}$ and $\Pr[g(\bm u) \in E] \ge 1 - \exp(-n^{1-10\epsilon})$. 

Let $F$ be the event given by \cref{lem:simple-containment}: $\Pr[g(\bm u) \in F] \ge 1/2$ and $|F| \le n^{n/2}$. The main idea is to consider a \emph{neighborhood} of $F$ as the new witnessing event: $\mathcal{N}_k(F) \coloneqq \{x \mid \min_{y\in F}\dist(x, y) \le k\}$, where $\dist(\cdot, \cdot)$ is the ($n$-ary) Hamming distance.

The probability bound $\Pr[g(\bm u) \in F] \ge 1/2$ is equivalent to  $|g^{-1}(F)| \ge n^s/2$. If $x \in \mathcal{N}_k(g^{-1}(F))$ then $g(x) \in \mathcal{N}_{k \cdot n^{2\epsilon}}(F)$, since changing one symbol in the input to $g$ changes at most $n^{2\epsilon}$ output cells. We then need to choose $k$ such that $\mathcal{N}_k(g^{-1}(F))$ contains almost all points of $[n]^s$, yet $E \coloneqq \mathcal{N}_{k \cdot n^{2\epsilon}}(F)$ is small as shown in \cref{fig:low-entropy-error-boosting}.
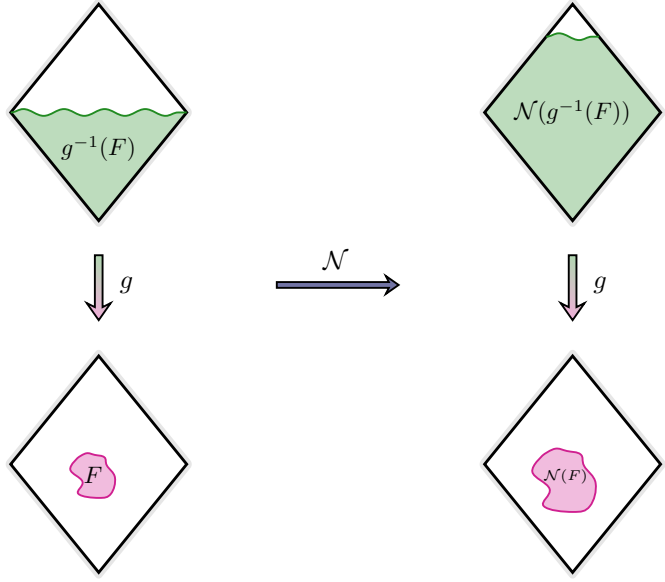
\begin{figure}[h]
    \centering
    \scalebox{0.9}{\begin{tikzpicture}
    \def\a{1.3}
    \def\b{1.6}
    \def\rat{\b / \a}
    \def\sh{7}
    
    \foreach \point in {(0, 0), {(0, -2 * \b - 2)}, (\sh, 0), {(\sh, -2 * \b - 2)}}{
        \begin{scope}[shift = (\point)]
            \fill[black!10, rounded corners = 2pt] (-0.1 - \a, 0) -- (0, {-\b - 0.1 * \rat}) --
                (\a + 0.1, 0) -- (0, {\b + 0.1 * \rat}) -- cycle;
            \fill[white] (-\a, 0) -- (0, -\b) -- (\a, 0) -- (0, \b) -- cycle;
        \end{scope}
    }
    \foreach \point in {{(0, -2 * \b - 2)}, {(7, -2 * \b - 2)}}{
        \begin{scope}[shift = (\point)]
            \draw[very thick] (-\a, 0) -- (0, -\b) -- (\a, 0) -- (0, \b) -- cycle;
        \end{scope}
    }
    \foreach \x in {0, \sh}{
        \begin{scope}[shift = {(\x, 0)}]
            \node[fancy-arrow, scale = 0.4] (ar) at (0, -2 * \b + 0.65) {};
            \node[right = 5pt] at (ar) {$g$}; 
        \end{scope}
    }

    \node[fancy-arrow, scale = 0.3,
        minimum height = 6cm, rotate = 90,
        right color = Periwinkle!50!black,
        left color = Periwinkle!50!black]
        (ar) at (\sh / 2, -2 * \b + 0.65) {};
    \node[above = 3pt] at (ar) {$\mathcal{N}$};

    \fill[ForestGreen!30!white] (0, -\b)  -- (-\a, 0)
        decorate[decoration = {snake, segment length = 6mm, amplitude = 0.5mm}]{ -- (\a, 0)} -- cycle;
    \draw[very thick] (-\a, 0) -- (0, -\b) -- (\a, 0) -- (0, \b) -- cycle;
    \draw[ForestGreen, thick, decorate, decoration = {snake, segment length = 6mm, amplitude = 0.5mm}]
        (-\a, 0) -- (\a, 0);
    \node at (0, -\b / 3) {\small $g^{-1}(F)$};
    \begin{scope}[shift = {(\sh, 0)}]
        \path[very thick, name path = fr] (-\a, 0) -- (0, -\b) -- (\a, 0) -- (0, \b) -- cycle;
        \path[name path = temp] (-\a, {(0.7 * \b)}) -- ++(2 * \a, 0);
        \path[name intersections = {of = fr and temp}];
        \fill[ForestGreen!30!white] (0, -\b)  -- (-\a, 0) -- (intersection-2)
            decorate[decoration = {snake, segment length = 6mm, amplitude = 0.5mm}]{ -- (intersection-1)}
            -- (\a, 0) -- cycle;
        \draw[very thick] (-\a, 0) -- (0, -\b) -- (\a, 0) -- (0, \b) -- cycle;
        \draw[ForestGreen, thick, decorate, decoration = {snake, segment length = 6mm, amplitude = 0.5mm}]
            (intersection-2) -- (intersection-1);

        \node at (0, 0) {\small $\mathcal{N}(g^{-1}(F))$};
    \end{scope}

    \begin{scope}[shift = {(0, -2 * \b - 2)}]
        \draw[thick, VioletRed, fill = VioletRed!30!white, scale = 0.12, shift = {(1.5, -1.65)}]
            plot[smooth, tension = 0.7] coordinates {
                (-4, 2.5) (-3, 3) (-2, 2.8) (-0.8, 2.5) (-0.5, 1.5) (0.5, 0)
                (0, -2) (-1.5, -2.5) (-4, -2) (-3.5, -0.5) (-5, 1) (-4, 2.5)
            };
        \node at (-0.08, -0.16) {\small $F$};
    \end{scope}

    \begin{scope}[shift = {(\sh, -2 * \b - 2)}]
        \draw[thick, VioletRed, fill = VioletRed!30!white, scale = 0.17, shift = {(1.5, -1.65)}]
            plot[smooth, tension = 0.7] coordinates {
                (-4, 2.5) (-3, 3) (-2, 2.8) (-0.8, 2.5) (-0.5, 1.5) (0.5, 0)
                (0, -2) (-1.5, -2.5) (-4, -2) (-3.5, -0.5) (-5, 1) (-4, 2.5)
            };
        \node at (-0.1, -0.18) {\tiny $\mathcal{N}(F)$};
    \end{scope}

\end{tikzpicture}}
    \caption{\small The picture illustrates the process of boosting the distance from $1/2$ to almost $1$ by expanding the witnessing event to its neighborhood.}
    \label{fig:low-entropy-error-boosting}
\end{figure}

For the former, we use a fact from combinatorics, that is essentially due to Harper \cite{Harper1966}.
\begin{theorem}
\label{thm:harper-mcd}
    For an arbitrary set $S \subseteq [n]^s$ and $\bm u \sim [n]^s$  we have 
    \[\Pr[\bm u \in \mathcal{N}_k(S)] \ge 1 - \frac{\exp(-k^2/(2s\log n))}{\Pr[\bm u \in S]}.\]
\end{theorem}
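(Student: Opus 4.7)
The plan is to view the left-hand side as a concentration statement for the function $D\colon [n]^s\to\mathbb{N}$ defined by $D(u)\coloneqq \min_{y\in S}\dist(u,y)$, and apply McDiarmid's bounded-differences inequality. Changing a single coordinate of $u$ alters $\dist(u,y)$ by at most $1$ uniformly in $y$, and this Lipschitz bound is inherited by the pointwise minimum, so $D$ is $1$-Lipschitz with respect to the Hamming metric on $[n]^s$. Since the coordinates $\bm u_1,\dots,\bm u_s$ are i.i.d.\ uniform on $[n]$, McDiarmid's inequality yields, for every $t\ge 0$,
\[
    \Pr[D(\bm u)\ge \Exp[D(\bm u)]+t]\le \exp(-2t^2/s)
    \quad\text{and}\quad
    \Pr[D(\bm u)\le \Exp[D(\bm u)]-t]\le \exp(-2t^2/s).
\]

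The second step is to combine the two tails into the product estimate required by the statement. Note that $\{\bm u\in S\}=\{D=0\}$ and $\{\bm u\notin \mathcal{N}_k(S)\}=\{D>k\}$, and set $\mu\coloneqq \Exp[D(\bm u)]\ge 0$. If $\mu\le k$, applying the two inequalities with $t=\mu$ and $t=k-\mu$ respectively and multiplying yields
\[
    \Pr[\bm u\in S]\cdot \Pr[\bm u\notin \mathcal{N}_k(S)] \le \exp(-2(\mu^2+(k-\mu)^2)/s) \le \exp(-k^2/s),
\]
using $\mu^2+(k-\mu)^2\ge k^2/2$ on $[0,k]$. If instead $\mu>k$, the first inequality with $t=\mu$ already gives $\Pr[\bm u\in S]\le \exp(-2\mu^2/s)\le \exp(-k^2/s)$, so the product bound holds trivially. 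Rearranging,
\[
    \Pr[\bm u\notin \mathcal{N}_k(S)] \le \exp(-k^2/s)/\Pr[\bm u\in S],
\]
which is strictly stronger than the stated inequality, since $\exp(-k^2/s)\le \exp(-k^2/(2s\log n))$ for every $n\ge 2$.

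There is no serious obstacle: this is a textbook application of McDiarmid's inequality to a distance-to-a-set function. The only point that requires checking is the $1$-Lipschitz property of $D$, which follows from the elementary fact that the infimum of a family of $1$-Lipschitz functions is itself $1$-Lipschitz. The $\log n$ slack in the stated bound is harmless and would even accommodate cruder concentration tools (e.g.\ plain Azuma--Hoeffding with a factor-$4$ loss) in place of McDiarmid.
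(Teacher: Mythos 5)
Your proof is correct, and it takes a different route from the paper. The paper proves this statement by a two-line reduction: it encodes each $[n]$-symbol as $\log n$ bits via a bijection $b\colon [n]^s\to\{0,1\}^{s\log n}$, observes that $\Pr[\bm u\in\mathcal{N}_k(S)]\ge\Pr[b(\bm u)\in\mathcal{N}_k(b(S))]$, and then cites the boolean-alphabet version of the statement (McDiarmid's Proposition~7.7, essentially Harper's concentration on the cube); the dimension $s\log n$ of the binary cube is exactly where the factor $2s\log n$ in the exponent comes from. You instead re-derive the underlying concentration statement directly in the $[n]$-ary product space: you apply the standard two-sided bounded-differences inequality to $D(u)=\dist(u,S)$, which is $1$-Lipschitz per coordinate, and multiply the lower tail at $t=\mu$ (giving $\Pr[\bm u\in S]=\Pr[D\le 0]$) with the upper tail at $t=k-\mu$ (giving $\Pr[D\ge k]$), using $\mu^2+(k-\mu)^2\ge k^2/2$; the case $\mu>k$ is handled by the lower tail alone. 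This avoids the binary encoding and the attendant $\log n$ loss, so you get the numerator $\exp(-k^2/s)$, which indeed implies the stated bound for $n\ge 2$, and your argument is self-contained rather than a citation. One cosmetic slip: in the case $\mu>k$ (and in the "respectively" of the case $\mu\le k$) the bound on $\Pr[\bm u\in S]$ uses the \emph{lower}-tail inequality $\Pr[D\le\Exp[D]-t]\le\exp(-2t^2/s)$ with $t=\mu$, not the first (upper-tail) one as written; the computation is otherwise exactly right.
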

\begin{proof}
    For the boolean alphabet ($n=2$) the claim is shown in \cite[Proposition~7.7]{McDiarmid89}. We give a simple reduction to this case. Consider the natural bijection $b\colon [n]^s\to \{0,1\}^{s \log n}$ that encodes every $[n]$-symbol as $\log n$ bits. Then $\Pr[\bm u \in \mathcal{N}_k(S)] \ge \Pr[b(\bm u) \in \mathcal{N}_{k}(b(S))]$, which implies the claim.
\end{proof}
In order to achieve high probability in \cref{thm:harper-mcd} we choose $k = n^{1-4\epsilon}$. Wlog we may assume that $s\le n^{1+\epsilon}$ for a $n^\epsilon$-local function. Hence we get $\Pr[\bm u \in \mathcal{N}_k(g^{-1}(F))] \ge 1 - \exp(-n^{1-9 \epsilon}/2\log n) / 2$. On the other hand $|\mathcal{N}_{k \cdot n^{2\epsilon}}(F)| \le |F| \cdot \binom{s}{k \cdot n^{2\epsilon}} n^{k n^{2\epsilon}}  \le n^{n/2} \cdot \exp(3 k \cdot n^{2\epsilon} \log s) \le n^{n/2} \cdot n^{3 n^{1-2\epsilon}} \ll n^{3n/4}$. 

\paragraph{Boosting distance in the adaptive case: containment lemma.}
 The problem with the proof of \ref{item:low-entropy} above is that in the adaptive case the input space might have the dimension up to $n^{d+1}$, which prevents us from using \cref{thm:harper-mcd} to boost the error probability: the radius of the neighborhood that we would have to use is at least $\sqrt{n^{d+1}}$, which is \emph{larger than the dimension of the output space}. Thus we prove a dimension-free version of \cref{thm:harper-mcd} for sets that can be recognized by bounded-depth decision trees: 
\begin{theorem}[simplified version of \cref{lem:coupling}]
\label{thm:coupling-simplified}
    Suppose that $T\colon [n]^s \to \{0, 1\}$ is a decision tree of depth $d$ such that $|T^{-1}(1)|
    \ge \mu \cdot n^s$. Then
    \[
        \Exp_{\bm u \sim [n]^s}[\dist(\bm u, T^{-1}(1))] = O(\sqrt{d \log (1/\mu)}).
    \]
\end{theorem}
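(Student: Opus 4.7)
The plan is to bound $\Exp_{\bm u \sim [n]^s}[\dist(\bm u, T^{-1}(1))]$ by exhibiting a coupling $(\bm u, \bm v)$ with $\bm u \sim Q$ uniform on $[n]^s$ and $\bm v \in S \coloneqq T^{-1}(1)$ almost surely, such that $\Exp[\dist(\bm u, \bm v)]$ meets the claimed bound. The starting observation is that if $P$ denotes the uniform distribution on $S$, then $\DKL(P \| Q) = \log(n^s/|S|) \leq \log(1/\mu)$, and the depth-$d$ structure of $T$ decomposes this divergence along a random path of length at most $d$.

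To construct the coupling, I would sample $\bm v \sim P$ and walk down $T$ along $\bm v$'s root-to-leaf path, revealing queries one at a time. At step $t$, given the history $(i_1, a_1, \dots, i_{t-1}, a_{t-1})$ (with $a_j = \bm v_{i_j}$), the tree queries some coordinate $i_t$; let $P_t$ denote the conditional distribution of $\bm v_{i_t}$ under $P$ given the history, and let $Q_t$ denote the uniform distribution on $[n]$ (the corresponding conditional under $Q$). Sample $(\bm u_{i_t}, \bm v_{i_t})$ from the maximal coupling of $(Q_t, P_t)$, so that $\Pr[\bm u_{i_t} \neq \bm v_{i_t} \mid \text{history}] = \Delta(P_t, Q_t)$. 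Once the path terminates at a leaf, set $\bm u_i = \bm v_i$ for every unqueried coordinate $i$; both have uniform conditional distributions on the leaf's cylinder, so this is a trivial agreement. A short induction on the depth confirms that the resulting $\bm u$-marginal is indeed uniform on $[n]^s$.

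With this coupling, $\bm u$ and $\bm v$ agree on every unqueried coordinate, so $\dist(\bm u, \bm v) = \sum_{t \leq k(\bm v)} \mathbbm{1}[\bm u_{i_t} \neq \bm v_{i_t}]$ with $k(\bm v) \leq d$, and hence
\[
\Exp[\dist(\bm u, \bm v)] \;=\; \Exp_{\bm v \sim P}\!\left[\sum_{t=1}^{k(\bm v)} \Delta(P_t, Q_t)\right].
\]
Pinsker's inequality bounds each summand by $\sqrt{\DKL(P_t \| Q_t)/2}$. Applying Cauchy--Schwarz across the $\leq d$ summands and then Jensen yields
\[
\Exp[\dist(\bm u, \bm v)] \;\leq\; \sqrt{\tfrac{d}{2} \cdot \Exp_{\bm v \sim P}\!\left[\sum_{t=1}^{k(\bm v)} \DKL(P_t \| Q_t)\right]}.
\]
The chain rule for KL divergence, applied along the path revealed down $T$, identifies the inner expectation with $\DKL(P \| Q)$ (the leaf-conditional divergence vanishes because, conditional on any leaf, both $P$ and $Q$ are uniform on its cylinder). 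Combining with $\DKL(P \| Q) \leq \log(1/\mu)$ gives $\Exp[\dist(\bm u, \bm v)] \leq \sqrt{(d/2)\log(1/\mu)} = O(\sqrt{d \log(1/\mu)})$, proving the theorem.

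The main obstacle is verifying that the sequential construction really produces $\bm u \sim Q$: the queried coordinates form a random subset depending on $\bm v$, and the step-$t$ coupling is specified conditional on $\bm v$'s history rather than $\bm u$'s. The induction on depth handles this cleanly. At the root, the maximal coupling guarantees $\bm u_{i_1}$ is uniform on $[n]$; after conditioning on $\bm v_{i_1} = a$, the inductive hypothesis applied to the subtree $T_a$ produces $\bm u_{-i_1}$ uniform on $[n]^{s-1}$ and independent of $\bm u_{i_1}$ (since the subtree's sampling depends only on $\bm v_{-i_1}$, which is independent of $\bm u_{i_1}$ given $\bm v_{i_1}=a$); mixing over $a$ with weights $P_1(a)$ reassembles a uniform marginal overall. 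Once this bookkeeping is settled, the Pinsker / Cauchy--Schwarz / KL-chain-rule estimate above is routine.
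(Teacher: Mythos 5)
Your proof is correct and follows essentially the same route as the paper's proof of \cref{lem:coupling}: a node-by-node maximal coupling along the tree path, Pinsker's inequality at each queried node, and a chain-rule plus Cauchy--Schwarz/Jensen argument to sum the per-node deviations into $O(\sqrt{d\log(1/\mu)})$. The only cosmetic differences are that you build the coupling starting from the accepting-side variable and phrase the budget as $\DKL(P\|Q)\le\log(1/\mu)$, whereas the paper starts from the uniform side and expresses the same quantity as an entropy deficit of the accepting-path distribution.
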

For every depth-$d$ decision forest \emph{every property} of its output can be computed with a depth-$nd$ decision tree, so with \cref{thm:coupling-simplified} we can reduce the effective dimension to $nd$. On the other hand, the expectation bound we get is not quite enough to establish the exponentially low probability of being outside the neighborhood. Thus, in \cref{sec:containment} we boost \cref{thm:coupling-simplified} using average Lipschitzness.

\section{Proof of the \nameref{thm:main}}
\label{sec:main-proof}
In this section, we give the proof of the lower bound for the adaptive case and formally introduce all the technical tools needed. 

\subsection{Warm-up: the case of bucketed queries}
We start by proving the theorem in the special case where the queries of the decision forest are structured: every tree takes its first query from a set $I_1$, the second query from the set $I_2$, and so on $[s] = I_1 \sqcup \dots \sqcup I_d$. We say that such a forest is \emph{bucketed}. The goal of this section is to present the high-level structure of the general proof and introduce the primary technical tools that will be utilized throughout. We remark that instead of the domain $[n]^{\mathbb{N}}$ used in the intro, we use a domain $[n]^s$ for some integer $s$: a depth-$d$ decision forest (bucketed or not) consisting of $n$ trees can query at most $n^{d+1}$ distinct input cells, so we may assume wlog that the indices of these cells are $[s]$. 
\begin{theorem}
\label{thm:warmup-main}
Suppose $f\colon [n]^s \to [n]^n$ is a bucketed depth-$o(\log n/\log \log n)$ decision forest with buckets $I_1, \dots, I_d$ and let $\bm u_i \sim [n]^{I_i}$ for each $i \in [d]$. Let $\bm \pi \sim S_n$ be the uniform random permutation. Then
\[ \Delta(f(\bm u_1, \dots, \bm u_d), \bm \pi) \ge 1 - \exp(-n^{\Omega(1)}).\]
\end{theorem}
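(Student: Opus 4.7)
The plan is to iterate the low-vs-high entropy dichotomy of \cref{sec:nonadaptive-proof} across the $d$ buckets, peeling off one bucket at a time until the residual forest has depth $1$, at which point the nonadaptive argument applies verbatim. The bucketed structure makes this natural: conditioning on $\bm u_1=\alpha_1$ turns $f(\alpha_1,\bm u_2,\dots,\bm u_d)$ into a depth-$(d-1)$ bucketed forest on $(\bm u_2,\dots,\bm u_d)$, so no extra restriction machinery is needed to reduce the depth by one.

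As a preprocessing step I reduce to the case where $f$ is Lipschitz almost everywhere. Using the adaptive input-fixing procedure of \cref{sec:lipschitzness} (sketched in \cref{sec:main-proof-tech}), I restrict a small set of input cells so that the residual forest is average-$\mu$-Lipschitz for $\mu=\polylog(n)$; \cref{lem:average-Lipschitz-to-Lipschitz} then upgrades this to $(\polylog(n),n^{-\Omega(1)})$-Lipschitzness, and the property is inherited by each further residual forest in the iteration, so I can assume it holds throughout the main loop.

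The main loop processes buckets $i=1,\dots,d-1$. I maintain a surviving set $S_i\subseteq[n]$ of output cells and a depth-$(d-i)$ bucketed forest $g^{(i)}$ on $(\bm u_{i+1},\dots,\bm u_d)$, obtained by conditioning on typical values $\alpha_1,\dots,\alpha_i$, satisfying the invariant $|S_i|\ge n/\polylog(n)^i$ together with $\H(g^{(i)}_{S_i}(\bm u_{i+1},\dots,\bm u_d))\ge |S_i|\log n/8$. At step $i+1$ I apply the dichotomy. In the low-entropy branch, where the invariant would fail, I invoke the adaptive containment lemma of \cref{sec:containment} (built on \cref{thm:coupling-simplified}) to cover the output along the $\alpha_{\le i}$-path by a small event $E$ with $\Pr[\bm\pi\in E]=\exp(-n^{\Omega(1)})$. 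In the high-entropy branch, I use concentration of conditional Shannon entropy for Lipschitz decision forests (a consequence of \cref{lem:average-Lipschitz-to-Lipschitz}) to select $\alpha_{i+1}\in[n]^{I_{i+1}}$ and a subset $S_{i+1}\subseteq S_i$ with $|S_{i+1}|\ge|S_i|/\polylog(n)$ such that the invariant persists at step $i+1$; a cell is discarded precisely when fixing $\bm u_{i+1}=\alpha_{i+1}$ costs it too much entropy. After $d-1$ iterations $g^{(d-1)}$ has depth $1$, hence is nonadaptive, and carries $|S_{d-1}|\ge n/\polylog(n)^{d-1}=n^{1-o(1)}$ output cells with total entropy $\Omega(|S_{d-1}|\log n)$, since $d=o(\log n/\log\log n)$ ensures $\polylog(n)^d=n^{o(1)}$. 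The nonadaptive argument from \cref{sec:nonadaptive-proof} then applies directly: Observation~\ref{obs:independence} extracts $n^{\Omega(1)}$ independent output cells, \cref{lem:simplified-collision-lemma} produces a collision with probability $1-\exp(-n^{\Omega(1)})$, and the union of this collision event with the containment events from any low-entropy termination is the required witness for $\Delta(f(\bm u),\bm\pi)\ge 1-\exp(-n^{\Omega(1)})$.

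The main obstacle is the entropy-retaining depth reduction itself: proving that a typical $\alpha_{i+1}\sim[n]^{I_{i+1}}$ decreases the Shannon entropy rate of the surviving output by at most a factor $1-O(1/d)$ after discarding only a $1/\polylog(n)$ fraction of cells. Unlike min-entropy, Shannon conditional entropy does not concentrate automatically under random assignments; the crucial input is a concentration inequality for $\H(f(\bm u)\mid\bm u_{i+1}=\bm\alpha)$ around its expectation, which must be derived from Lipschitzness via \cref{lem:average-Lipschitz-to-Lipschitz}. The bucketed structure simplifies the Lipschitzness bookkeeping across iterations, but the conditional-entropy concentration itself is the technical heart of this warm-up and the key step whose adaptation to the general, non-bucketed setting will occupy the bulk of \cref{sec:main-proof}.
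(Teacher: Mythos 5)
There is a genuine gap in the high-entropy iteration. You peel off the buckets in the fixed order $1,2,\dots,d-1$, and at step $i+1$ you condition on $\bm u_{i+1}$, maintaining the invariant that the surviving output keeps entropy rate $\ge \tfrac18\log n$. But nothing forces the entropy of $g^{(i)}_{S_i}$ to survive conditioning on the \emph{next} bucket: the entropy may be carried almost entirely by $\bm u_{i+1}$ (e.g.\ every surviving tree simply outputs the symbol it reads from bucket $i+1$), in which case $\H(g^{(i)}_{S_i}\mid \bm u_{i+1})\approx 0$ and no choice of $\alpha_{i+1}$ or of a discarded $1/\polylog(n)$ fraction of cells can preserve the invariant. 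You route this failure to the containment lemma, but \cref{lem:containment} needs the \emph{total} output entropy to be low, and here it is high by your own invariant; applying containment separately to each restriction $\bm u_{i+1}=\alpha_{i+1}$ produces $n^{|I_{i+1}|}$ container sets with $|I_{i+1}|$ possibly as large as $n^{d}$, so their union cannot be union-bounded against $\bm\pi$. The paper's warm-up proof avoids iteration altogether: by \cref{fact:entropy-with-condition} some single bucket $I_{i^*}$ retains a $1/d$ fraction of the entropy when \emph{all other buckets} are fixed, and since the forest is bucketed this one-shot restriction already yields a depth-$1$ forest; the factor-$d$ loss is affordable because \cref{lem:simplified-collision-lemma2} only needs entropy $4n\log\log n$ and $d=o(\log n/\log\log n)$, and \cref{lem:simplified-entropy-concentration} turns the expected guarantee into a whp one. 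The general (non-bucketed) proof does iterate, but each step of \cref{lem:one-step-lemma} carries exactly the either--or branch your scheme is missing: either conditioning on the chosen input set retains a $(1-O(1/d))$ fraction of entropy and the depth drops by one, or conditioning on its \emph{complement} retains an $\Omega(1/d)$ fraction and the depth drops to one immediately. Without that second branch the induction cannot go through.

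A secondary parameter problem: you enforce only average-$\polylog(n)$-Lipschitzness, but \cref{lem:strong-avg-Lipschitz} then requires a restriction tree of depth $2nd\log(1/\varepsilon)/\mu$, which with $\varepsilon=\exp(-n^{\Omega(1)})$ and $\mu=\polylog(n)$ exceeds $n$; the resulting $n^{\mathrm{depth}}$ leaves make the union of low-entropy containers larger than $n!$, destroying the final bound. This is why the paper takes $\mu=n^{0.1}$ (and $\varepsilon=\exp(-n^{0.01})$) in \cref{cor:dichotomy}, keeping the enforcement tree at depth $n^{0.99}$.
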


As announced in \cref{sec:main-proof-tech}, on the high level, this proof follows the plan of the proof of \cref{thm:main-nonadaptive} with bounded influence replaced with bounded average Lipschitzness (see \cref{def:lips}). Let us first consider two cases: when $\H(f(\bm u)) \ge n \log n / 4$ and when $\H(f(\bm u)) \le n \log n / 4$.

\paragraph{High entropy case.}
Suppose that $\H(f(\bm u)) \ge n \log n / 4$. Ideally, in this case we want to show that probability of a collision is 
\[
    \Pr[\exists i \neq j \in [n]\colon f_i(\bm u) = f_j(\bm u)] \ge 1 - \exp(-n^{\Omega(1)}),
\]
and thus, almost always, we will not generate a permutation. The main idea is to reduce the problem to the case where the output cells of $f$ are independent, and apply a collision lemma similar to \cref{lem:simplified-collision-lemma}. Here we will use the following version:
\begin{restatable}{lemma}{simpcollisionb}
\label{lem:simplified-collision-lemma2}
    Let $\bm u \sim [n]^s$ and $f\colon [n]^s \to [n]^n$ be a depth-$1$ decision forest. Suppose that \(
    \H(f(\bm u)) \ge 4(n \log \log n).
    \)
    Then
    \( \Pr[\exists i \neq j \in [n] \colon f_i(\bm u) = f_j(\bm u)] \ge 1 - \exp(-\Omega(n/\poly(\log n))).\)
\end{restatable}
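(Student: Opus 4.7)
Since $f$ is a depth-$1$ forest, each output $f_i$ depends on a single input cell $q_i \in [s]$. Group the outputs by their query, $S_j := \{i \in [n] : q_i = j\}$, and let $\bm Y_j := (f_i(\bm u))_{i \in S_j}$. The blocks $\bm Y_j$ are mutually independent, since distinct coordinates of $\bm u$ are independent, and by additivity of entropy under independence,
\[
\sum_{j \in J} \H(\bm Y_j) \;=\; \H(f(\bm u)) \;\ge\; 4 n \log\log n, \qquad J := \{q_i : i \in [n]\}.
\]
The event ``$f(\bm u)$ has no collision'' is exactly ``each $\bm Y_j$ has pairwise distinct coordinates and the coordinate-sets $\bm A_j := \{(\bm Y_j)_i : i \in S_j\} \subseteq [n]$ are pairwise disjoint''; equivalently, $f(\bm u)$ is a permutation.

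\emph{Step 1 (internal collisions).} Let $P_j := \Pr[\bm Y_j \text{ has a repeated coordinate}]$. By independence of the blocks,
\[
\Pr[\text{no collision}] \;\le\; \prod_{j \in J}(1 - P_j) \;\le\; \exp\!\Big(-\!\sum_{j \in J} P_j\Big),
\]
so if $\sum_j P_j \ge n/\poly(\log n)$ we already obtain the claimed bound. In what follows I assume the opposite: internal collisions are so rare that they can be absorbed into the error, and the task reduces to showing that with high probability two distinct blocks share an output value.

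\emph{Step 2 (cross-group collisions).} For each group pick a representative $r_j \in S_j$ maximizing the marginal entropy $\H(f_{r_j}(\bm u))$; subadditivity gives $\H(f_{r_j}(\bm u)) \ge \H(\bm Y_j)/|S_j|$, and the representatives are mutually independent because they query disjoint inputs. The natural plan is to feed them into the independent-variable collision lemma \cref{lem:simplified-collision-lemma}. \textbf{The main obstacle is quantitative}: the number of groups $|J|$ is only guaranteed to be at least $4n\log\log n/\log n$, which for large $n$ is much smaller than the $n^{0.99}$ threshold required by \cref{lem:simplified-collision-lemma}, and picking one representative per group can destroy substantial entropy when some $|S_j|$ is large. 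To get around this I expect \cref{sec:proof-of-collision-lemma} to prove a more flexible collision lemma that trades the number of variables against per-variable entropy, matching the birthday-paradox threshold of $\tilde O(\sqrt n)$ nearly-uniform $[n]$-valued variables; alternatively, one can refine each large block into multiple independent sub-representatives by conditioning $\bm u_{q_i}$ on coarse features, increasing the effective number of independent views above $n^{0.99}$ while retaining almost all of the entropy. Given such a strengthened tool, a Chernoff-style argument over the independent blocks yields the $\exp(-\Omega(n/\poly\log n))$ tail on the no-collision event, which combines with Step~1 to finish the proof. Producing this flexible collision lemma, and cleanly coupling it with the internal-collision bound so that the two error terms do not spoil each other, is the main technical challenge.
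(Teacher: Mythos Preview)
Your high-level approach matches the paper's: group outputs by their (unique) query, take one max-entropy representative from each group to get mutually independent $[n]$-valued variables, and apply a collision lemma to those. The paper's proof is literally a one-line application of \cref{lem:collision-for-d-1}, whose proof in turn carries out exactly this representative-picking strategy.

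Two remarks on the gaps you flag. First, Step~1 is a detour: the paper never treats internal and cross-group collisions separately. A collision among the representatives already forces a collision in $f(\bm u)$, so one can ignore internal collisions entirely. Second, the ``more flexible collision lemma'' you anticipate is precisely \cref{lem:high-entropy-independent}: it allows as few as $n^{1-\varepsilon}$ independent variables with per-variable entropy only $\delta\log n$ for $\delta\gtrsim \log\log n/\log n$, and gives collision probability $1-\exp(-\Omega(\delta^4 m^3/n^2))$. The missing technical step is controlling how much entropy survives when you keep one representative per group. The paper uses the elementary inequality $\sum_i a_i/b_i \ge (\sum_i a_i)^2/\sum_i a_ib_i$ with $a_i=\H(f_{J_i}(\bm u))$ and $b_i=|J_i|$, together with $\H(f_{J_i}(\bm u))\le \log n$, to get $\H(f_{I'}(\bm u))\ge (\H(f(\bm u)))^2/(m\log n)\ge \delta^2 m\log n$; then a further filter keeps only representatives with entropy at least $(\delta/2)\log n$. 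Your alternative suggestion of ``conditioning on coarse features'' to manufacture more independent views is not needed and would likely complicate rather than help.
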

Observe that with a stronger guarantee on the entropy this would follow from \cref{thm:main-nonadaptive}, since depth-$1$ decision forests are $1$-local functions. We prove this lemma in \cref{subsec:depth-1}.

In the bucketed case the reduction from a depth-$d$ to a depth-$1$ decision forest is quite straightforward: \emph{fix all input buckets, except for one.} Thus, we need to find a bucket $I_i$ such that after fixing inputs in all other buckets the output of the resulting forest \emph{typically} retains some entropy. The following simple fact says that after such conditioning the entropy is retained \emph{in expectation}:
\begin{restatable}{fact}{entaftcond}
\label{fact:entropy-with-condition}
Suppose $\bm x_1, \dots, \bm x_\ell$ are independent and $\bm y = f(\bm x_1, \dots, \bm x_\ell)$ then 
\(\H(\bm y) \le \sum_{i \in [\ell]} \H(\bm y \mid \bm x_{[\ell] \setminus i}).\)
\end{restatable}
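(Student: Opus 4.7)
The plan is to recast both sides in terms of mutual information and then apply the chain rule. Since $\bm y = f(\bm x_1,\dots,\bm x_\ell)$ is a deterministic function of $\bm x \coloneqq (\bm x_1,\dots,\bm x_\ell)$, we have $\H(\bm y \mid \bm x) = 0$. This yields two identities that pin down both sides of the claim: $\H(\bm y) = \I(\bm y; \bm x)$, and, writing $\bm x_{-i} \coloneqq \bm x_{[\ell]\setminus i}$,
\[
\H(\bm y \mid \bm x_{-i}) \;=\; \H(\bm y \mid \bm x_{-i}) - \H(\bm y \mid \bm x) \;=\; \I(\bm y;\bm x_i \mid \bm x_{-i}).
\]
So the fact reduces to the mutual-information inequality $\I(\bm y;\bm x) \le \sum_{i=1}^\ell \I(\bm y;\bm x_i \mid \bm x_{-i})$.

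For the left-hand side, I would use the standard chain rule to write $\I(\bm y;\bm x) = \sum_{i} \I(\bm y; \bm x_i \mid \bm x_{<i})$. Hence it suffices to establish the pointwise bound
\[
\I(\bm y; \bm x_i \mid \bm x_{<i}) \;\le\; \I(\bm y; \bm x_i \mid \bm x_{-i}) \qquad \text{for each } i \in [\ell].
\]
Expanding each side as $\H(\bm x_i \mid \text{cond}) - \H(\bm x_i \mid \bm y, \text{cond})$, the independence of $\bm x_1,\dots,\bm x_\ell$ forces $\H(\bm x_i \mid \bm x_{<i}) = \H(\bm x_i) = \H(\bm x_i \mid \bm x_{-i})$, so the first terms agree. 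For the second terms, observe that $\bm x_{-i}$ strictly extends the conditioning $\bm x_{<i}$ by $\bm x_{>i}$, and conditioning only decreases entropy, so $\H(\bm x_i \mid \bm y, \bm x_{-i}) \le \H(\bm x_i \mid \bm y, \bm x_{<i})$. Subtracting yields the desired inequality and completes the proof.

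This is a short information-theoretic manipulation without a genuine obstacle; the only subtle point is the passage from the chain-rule conditioning $\bm x_{<i}$ to the leave-one-out conditioning $\bm x_{-i}$, which is precisely where independence of the $\bm x_j$'s is needed (to equate $\H(\bm x_i \mid \bm x_{<i})$ with $\H(\bm x_i \mid \bm x_{-i})$). Equivalently, one may view the statement as a ``Han-type'' inequality for a random variable $\bm y$ that is a function of an independent product $\bm x$.
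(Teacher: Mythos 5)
Your proposal is correct. Every step checks out: $\H(\bm y\mid\bm x)=0$ gives $\H(\bm y)=\I(\bm y;\bm x)$ and $\H(\bm y\mid \bm x_{[\ell]\setminus i})=\I(\bm y;\bm x_i\mid \bm x_{[\ell]\setminus i})$; the chain rule and the termwise bound $\I(\bm y;\bm x_i\mid \bm x_{<i})\le \I(\bm y;\bm x_i\mid \bm x_{[\ell]\setminus i})$ (using mutual independence to equate $\H(\bm x_i\mid\bm x_{<i})=\H(\bm x_i)=\H(\bm x_i\mid\bm x_{[\ell]\setminus i})$, and monotonicity of conditional entropy for the subtracted terms) then finish the argument.

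The route differs from the paper's in packaging rather than substance. The paper first proves a two-variable superadditivity fact, $\H(\bm y\mid\bm z)\le \H(\bm y\mid\bm x_1,\bm z)+\H(\bm y\mid\bm x_2,\bm z)$ for independent $\bm x_1,\bm x_2,\bm z$, by a pure entropy chain-rule computation (never introducing mutual information), and then obtains the general statement by induction, repeatedly splitting off one coordinate with $\bm z=\bm x_{<i}$. Your termwise inequality, rewritten in entropy form, reads $\H(\bm y\mid\bm x_{<i})\le \H(\bm y\mid\bm x_{\le i})+\H(\bm y\mid\bm x_{[\ell]\setminus i})$, which is exactly the paper's two-variable fact instantiated with $\bm z=\bm x_{<i}$ and the block $\bm x_{>i}$ as the second variable; your sum over $i$ telescopes in the same way the paper's induction does. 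What your presentation buys is a one-shot argument via the mutual-information chain rule, which makes the ``Han-type'' nature of the inequality and the precise role of independence more transparent; what the paper's buys is a slightly more elementary proof using only entropy chain rules and an explicit reusable two-variable lemma. Either write-up is fine.
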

The proof of this fact is a simple chain rule computation and can be found in \cref{sec:entropy-fact-proof}.
We then find that there exists $i \in [d]$ such that
\begin{equation}\label{eq:entropy-lower-bound}
    \H(f(\bm u) \mid \bm u_{[d] \setminus i}) \ge (n \log n / 4)/d \gg n \log \log n.
\end{equation}
As observed before, for any $\beta \in [n]^s$ and $j \in [d]$, function $x \mapsto f(\beta_{[s] \setminus I_j}, x_{I_j})$ can be computed with a depth-$1$ decision forest, hence if we proved that
\begin{equation}
    \Pr_{\bm\beta \sim [n]^{s}}\left[\H(f(\bm u) \mid \bm u_{[d] \setminus i} = \bm\beta_{[s] \setminus I_i}) \ge 4(n \log \log n)\right] \ge 1 - \exp(-n^{\Omega(1)}),
    \label{eq:target-entropy-bound}
\end{equation}
we would be able to immediately apply \cref{lem:simplified-collision-lemma2} and finish the proof in the high-entropy case. 

Unfortunately, \eqref{eq:entropy-lower-bound} does not imply the probability lower bound by itself. However, for an \emph{average Lipschitz} forest $f$, retaining a high entropy under restriction \emph{typically} is implied by having a high conditional entropy:
\begin{restatable}{lemma}{simplifiedentconc}
    \label{lem:simplified-entropy-concentration}
        Let $f\colon [n]^s \to \Sigma^m$ be an average-$n^{0.3}$-Lipschitz depth-$\log^{O(1)}n$ decision forest. Suppose that $m \le n$ and $|\Sigma| = O(n)$. Let $I \subseteq [s]$ be a subset of the input cells. Then for $\bm r$ and $\bm u$ uniformly distributed over $[n]^s$ we have
    \[ \Pr\left[|\H(f(\bm u) \mid \bm u_I = \bm r_I) - \H(f(\bm u) \mid \bm u_I)| > n^{0.9}\right] \le \exp(-n^{\Omega(1)}).\]
\end{restatable}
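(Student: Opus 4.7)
Write $g(r) := \H\bigl(f(\bm u) \mid \bm u_I = r\bigr) = \H\bigl(f(r, \bm u_{[s]\setminus I})\bigr)$, so the claim is that $g(\bm r_I)$ concentrates around its mean $\E\,g(\bm r_I) = \H(f(\bm u)\mid \bm u_I)$. I would carry out the argument in three steps.

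\textbf{Step 1 (From average to pointwise Lipschitzness).} Apply \cref{lem:average-Lipschitz-to-Lipschitz} with $\varepsilon := \exp(-n^{1/2})$. Since $d \le \log^{O(1)} n$, this promotes the $n^{0.3}$ average-Lipschitz hypothesis to $(\mu,\varepsilon)$-Lipschitzness with $\mu = O(n^{0.3} \cdot d^2 \cdot n^{1/2}) = n^{0.8+o(1)}$: for each input cell $j$, the count $\bm\theta_j$ of trees that probe $j$ on a uniformly random $\bm u$ exceeds $\mu$ only on an event of probability at most $\varepsilon$.

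\textbf{Step 2 (One-coordinate stability of $g$).} For $r, r' \in [n]^I$ that differ only in coordinate $j$, couple $\bm y := f(r,\bm w)$ and $\bm y' := f(r',\bm w)$ through a shared $\bm w \sim [n]^{[s]\setminus I}$. Every $i$ in the disagreement set $\bm D := \{i : \bm y_i \ne \bm y'_i\}$ corresponds to a tree that probes cell $j$ in one of the two computations, hence $|\bm D| \le \bm\theta_j(r,\bm w) + \bm\theta_j(r',\bm w)$. Combining the entropy identity $|\H(\bm y) - \H(\bm y')| \le \max\{\H(\bm y\mid\bm y'), \H(\bm y'\mid\bm y)\}$ with the standard bound $\H(\bm y\mid\bm y') \le \H(\bm D) + \E[|\bm D|]\log|\Sigma|$ and the Lipschitz-a.e.\ guarantee from Step~1, we obtain $|g(r) - g(r')| \le O(\mu\log n) = n^{0.8+o(1)}$ on a $(1 - O(\varepsilon))$-measure event, while trivially $|g(r) - g(r')| \le m\log|\Sigma| = O(n\log n)$ in general.

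\textbf{Step 3 (Concentration via a martingale).} Enumerate $I = \{t_1, \ldots, t_{|I|}\}$ and consider the Doob martingale $M_i := \E[g(\bm r) \mid \bm r_{t_1}, \ldots, \bm r_{t_i}]$, which interpolates between $\E\, g(\bm r)$ and $g(\bm r)$. Step~2 implies that each increment $|M_i - M_{i-1}|$ is at most $O(\mu\log n)$ on a ``good'' event whose complement has probability $O(\varepsilon)$, and at most $O(n\log n)$ always. Feeding these two bounds into a typical-bounded-differences martingale inequality (e.g.\ Warnke's) should yield $\Pr\bigl[|g(\bm r) - \E\,g(\bm r)| > n^{0.9}\bigr] \le \exp(-n^{\Omega(1)})$.

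\textbf{Main obstacle.} The central difficulty is that the domain $[n]^I$ can have dimension as large as $n^{d+1}$, so a naive Azuma/McDiarmid application with the uniform per-coordinate Lipschitz constant $\mu\log n$ has exponent $\sim t^2/(|I|\cdot \mu^2 \log^2 n)$, which is vacuous at $t = n^{0.9}$ once $|I|$ becomes super-polynomial. The saving grace must be the finer consequence of average-Lipschitzness that $\sum_j \E[\bm\theta_j] \le nd = n\,\polylog(n)$: only polynomially many coordinates can meaningfully affect $g$. Threading this observation into the martingale---either by partitioning $I$ according to the magnitude of $\E[\bm\theta_j]$ and summing contributions separately, or by invoking a Bernstein-style martingale inequality that is sensitive to the predictable quadratic variation rather than the $L^\infty$ increment bound---is the technical heart of the argument.
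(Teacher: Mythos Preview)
Your plan is the paper's plan: establish a bounded-difference property for $r\mapsto \H(f(\bm u)\mid \bm u_I=r)$ and feed it to McDiarmid, with the crucial dimension-reduction coming from grouping the coordinates of $I$ using the budget $\sum_j \Exp[\bm\theta_j]\le md$. That clustering---which you flag as the ``saving grace'' in your obstacle paragraph---is exactly what the paper does in \cref{lem:first-entropy-concentration}: partition $I$ into $\ell=O(md/\mu)$ super-cells each with total expected query count at most $\mu$, and run McDiarmid over~$\ell$ coordinates rather than $|I|$.

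Two things in your write-up do not go through as stated. First, the order of operations: you apply \cref{lem:average-Lipschitz-to-Lipschitz} to the \emph{unclustered} forest in Step~1 and only cluster afterwards. A per-cell pointwise bound does not aggregate to a per-cluster pointwise bound (a cluster may contain many cells), so Step~2 gives no usable control on a super-coordinate. You should cluster first---the clustered forest is still average-$n^{0.3}$-Lipschitz, now in $\ell$ super-coordinates---and only then boost. Second, your choice $\varepsilon=\exp(-n^{1/2})$ is too small: it inflates the pointwise parameter to $\mu'=n^{0.8+o(1)}$, and even after clustering to $\ell=n^{0.7+o(1)}$ super-cells the McDiarmid exponent $t^2/(\ell\,\mu'^2\log^2 n)$ at $t=n^{0.9}$ is $n^{-0.5+o(1)}$, which is vacuous. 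Taking $\varepsilon=\exp(-n^{c})$ with any small constant $c<1/4$ gives $\mu'=n^{0.3+c+o(1)}$ and exponent $n^{0.5-2c-o(1)}=n^{\Omega(1)}$; this is essentially how the paper's parameters are set.

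One more remark on Step~2: the paper's bounded-difference argument (\cref{lem:entropy-deviation}) is a bit cleaner than your coupling. It uses that each tree queries each cell at most once, so the event ``tree $i$ queries cell $j$'' is independent of $\bm u_j$; hence $\Exp[\bm\theta_j\mid \bm u_j=y]=\Exp[\bm\theta_j]$ for every $y$. This yields the entropy deviation bound directly from the \emph{average} Lipschitz parameter of the (restricted) forest, with no ``good event'' over~$\bm w$. Your coupling reaches the same place after a Fubini step (for most $r$, the restricted forest is pointwise-Lipschitz, hence average-Lipschitz), which is the content of \cref{lem:Lipschitzness-after-conditioning} and \cref{cor:entropy-bd-differences}.
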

We prove this lemma in \cref{subsec:cond-entr}. Combined with \eqref{eq:entropy-lower-bound} this lemma implies \eqref{eq:target-entropy-bound} finishing the proof in the high-entropy case for average Lipschitz functions. We will see how to get rid of this assumption shortly.
\paragraph{Low entropy case.} 
Now suppose that $\H(f(\bm u)) \le n \log n / 4$. Observe that here, analogously to the nonadaptive case in \cref{thm:main-nonadaptive} the would have been resolved with \cref{lem:simple-containment} had we aimed for the constant distance bound. For the strong bounds we need to boost the distance. To this end, we prove the following lemma in \cref{sec:containment}:
\begin{restatable}{lemma}{containmentlemma}
\label{lem:containment}
    Suppose $f\colon[n]^s\to [n]^m$ is an average-$n^{0.1}$-Lipschitz depth-$d$ decision forest with $m \geq n^{0.99}$, $d = \log^{O(1)} n$, $\bm u \sim [n]^s$, and  $\H(f(\bm u))\leq m\log n / 4$. Then there exists a set $F \subseteq [n]^m$ of size $n^{3m/4}$ such that
    \[ \Pr[f(\bm u) \in F] \ge 1 - \exp(-n^{\Omega(1)}).\]
\end{restatable}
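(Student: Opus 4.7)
The plan is to begin with a moderate-sized container in the output space produced by the entropy hypothesis and then enlarge it to a Hamming neighborhood so as to capture nearly all the probability mass. First I would apply \cref{lem:simple-containment} with $k = m\log n / 4$ to $f(\bm u)$: this yields a set $F_0 \subseteq [n]^m$ of size $|F_0| \le n^{m/2}$ with $\Pr[f(\bm u) \in F_0] \ge 1/2$. Let $S \coloneqq f^{-1}(F_0) \subseteq [n]^s$, so that $\Pr[\bm u \in S] \ge 1/2$, and note that the indicator $\mathbbm{1}[\bm u \in S]$ depends on $\bm u$ only through $f(\bm u)$. Hence membership in $S$ is computable by a decision tree of depth at most $nd$, obtained by evaluating the $n$ trees of the forest in sequence.

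The next step is to show that a uniform $\bm u$ lies within Hamming distance $r_{\mathrm{in}} = n^{1/2 + o(1)}$ of $S$ with probability $1 - \exp(-n^{\Omega(1)})$. Feeding the depth-$nd$ tree for $S$ into \cref{thm:coupling-simplified} only gives the expectation bound $\Exp[\dist(\bm u, S)] = O(\sqrt{nd})$, which suffices for constant-probability closeness but not for the exponentially small failure I need. \emph{This is the main obstacle}: the bounded-differences constants arising from directly applying McDiarmid to a depth-$nd$ object are useless. The boost must exploit the average-$n^{0.1}$-Lipschitz hypothesis, which says that each input cell is queried only $\mu = n^{0.1}$ times in expectation across the forest. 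I would run an Azuma/McDiarmid-style martingale argument where the revealed quantity is $\mathbbm{1}[\bm u \in S]$ and the per-coordinate fluctuations are controlled by the \emph{average} Lipschitz parameter rather than the worst-case depth. This concentration upgrade of \cref{thm:coupling-simplified} is exactly what \cref{sec:containment} must supply, and forms the heart of the argument.

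Given input-space closeness, I would transfer it to the output space via \cref{lem:average-Lipschitz-to-Lipschitz}: setting $\varepsilon' = \exp(-n^{\Omega(1)})/r_{\mathrm{in}}$, the forest $f$ is $(\mu',\varepsilon')$-Lipschitz with $\mu' = O(n^{0.1} d^2 \log(1/\varepsilon')) = n^{0.1 + o(1)}$. A union bound along a shortest Hamming path from $\bm u$ to its nearest $\bm u' \in S$ shows that, outside an additional $\exp(-n^{\Omega(1)})$-probability event, each of the at most $r_{\mathrm{in}}$ single-cell flips alters at most $\mu'$ output coordinates. Hence $f(\bm u)$ differs from $f(\bm u') \in F_0$ in at most $r_{\mathrm{out}} \coloneqq r_{\mathrm{in}}\, \mu' = n^{0.6 + o(1)}$ output positions. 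Taking $F$ to be the $r_{\mathrm{out}}$-Hamming neighborhood of $F_0$ in $[n]^m$, the two high-probability events combine to give $\Pr[f(\bm u) \in F] \ge 1 - \exp(-n^{\Omega(1)})$, while
\[
|F| \;\le\; |F_0| \cdot \binom{m}{r_{\mathrm{out}}} \cdot n^{r_{\mathrm{out}}} \;\le\; n^{m/2 + O(r_{\mathrm{out}})} \;\le\; n^{3m/4},
\]
where the last inequality uses $r_{\mathrm{out}} \ll m$, which holds since $m \ge n^{0.99}$. This completes the proof modulo the concentration step identified above.
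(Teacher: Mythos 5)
Your scaffolding matches the paper's proof at the top level (apply \cref{lem:simple-containment} to get a container $G=F_0$ of size $n^{m/2}$, observe $f^{-1}(F_0)$ is recognized by a depth-$nd$ tree, invoke the coupling bound, transfer to the output space by Lipschitzness, and count a Hamming neighborhood), but the step you yourself flag as ``the heart of the argument''---upgrading the expectation bound $\Exp[\dist(\bm u, f^{-1}(F_0))]=O(\sqrt{nd})$ to an exponentially small failure probability---is exactly the step you do not supply, and the mechanism you sketch would not supply it. Two concrete problems. First, average-Lipschitzness of $f$ does not reduce the per-coordinate sensitivity of the \emph{input-space} quantity $\dist(\cdot, f^{-1}(F_0))$ (or of $\mathbbm{1}[\cdot \in S]$): that sensitivity is $1$ no matter what, and a Doob martingale over the raw input coordinates has $s$ steps with $s$ possibly as large as $n^{d+1}$, so any Azuma/McDiarmid bound forces deviations of order $\sqrt{s}\gg m$; this is precisely the obstruction the paper identifies when explaining why \cref{thm:harper-mcd} fails in the adaptive case. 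Second, the statement you aim for is too strong anyway: for a set recognized by a single depth-$nd$ tree (e.g.\ a threshold function of $nd$ cells) the distance $\dist(\bm u, S)$ fluctuates at scale $\sqrt{nd}$, so radius $n^{1/2+o(1)}$ can only give failure probability $\exp(-n^{o(1)})$, not $\exp(-n^{\Omega(1)})$.

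The paper never proves a high-probability input-space statement. Instead it (i) transfers to the output space already at the level of expectations, showing $\Exp[\dist(f(\bm u),G)]=O(n^{0.8})$ using $(n^{0.2},\exp(-n^{\Omega(1)}))$-Lipschitzness at the single point $\bm u$ (note this also repairs your path argument: a tree that on input $\bm u$ queries no cell where $\bm u$ and $y$ differ outputs the same value on $y$, so no union bound over the non-uniform intermediate points of a Hamming path is needed---as written, your appeal to $(\mu',\varepsilon')$-Lipschitzness at those points is not justified, since that guarantee is only for a uniform input); and (ii) concentrates the output-space distance $h(x)\coloneqq\dist(f(x),G)$ by first \emph{clustering} the input cells into $\ell\le 3md/\mu = n^{0.9}\,\polylog(n)$ blocks, each queried $O(\mu)$ times in expectation, so that $h$ has $n^{0.2}$-bounded differences over a $1-\exp(-n^{\Omega(1)})$-mass subset of the clustered product space (via \cref{lem:average-Lipschitz-to-Lipschitz,lem:Lipschitzness-after-conditioning}); then \cref{lem:mc-diarmid} with $\lambda=n^{0.8}$ gives exponent $\lambda^2/(n^{0.4}\ell)=n^{\Omega(1)}$, and $F\coloneqq\mathcal{N}_{O(n^{0.8})}(G)$ has size at most $n^{3m/4}$. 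This dimension reduction by clustering is the missing ingredient in your proposal; without it, neither an input-space nor an output-space martingale yields the exponential bound, so the proposal has a genuine gap at its central step.
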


\paragraph{Final step: enforcing average Lipschitzness.} The only obstacle that remains for the proof of \cref{thm:warmup-main} is the failure of average Lipschitzness for $f$. Similar to the nonadaptive case we enforce it by restricting the values of input cells that are queried too many times in expectation.  

We need some additional notation to state this result. Each leaf of a decision tree can be naturally identified with a partial assignment $\ell \in ([n] \cup \{\star\})^s$, where the non-$\star$ symbols correspond to the input cells queried in the path to the leaf and $\star$-symbols correspond to all the remaining symbols. A random leaf $\bm \ell$ of a decision tree is defined as the leaf reached by the computation on the random input $\bm u \sim [n]^s$. For a decision forest $f\colon [n]^s \to [n]^m$ and a partial assignment $\ell$, $f|_\ell$ denotes the decision forest $f$ after restricting the input cells according to $\ell$.
    \begin{restatable}{lemma}{strongavglipschitz}
    \label{lem:strong-avg-Lipschitz}
        Suppose $f\colon [n]^s \to [n]^m$ is an arbitrary depth-$d$ decision forest. There exists a $2nd/\mu \cdot \log(1/\varepsilon)$-depth decision tree $T$ querying symbols of $[n]^s$ such that for a random leaf $\bm \ell$ of $T$, $f|_{\bm \ell}$ is average-$\mu$-Lipschitz with probability $1-\varepsilon$.
    \end{restatable}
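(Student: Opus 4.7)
The plan is to build $T$ level by level via an adaptive restriction process. At each internal node, given the partial assignment $\ell$ read along the path from the root: if $f|_\ell$ is already average-$\mu$-Lipschitz, declare the node a leaf; otherwise pick any input cell $j$ for which $\Exp[\bm\theta_j(f|_\ell)] > \mu$ (where $\bm\theta_j(g)$ denotes the analogue of $\bm\theta_j$ for a forest $g$), and branch on $j$. Any branch that would exceed depth $2nd\log(1/\varepsilon)/\mu$ is forcibly cut off; since only such cut-off leaves can fail the conclusion, it suffices to show that a random run reaches one with probability at most~$\varepsilon$.

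The potential to track is the expected total path length
\[
\Phi(g) \;\coloneqq\; \sum_{i \in [m]} \Exp_{\bm v}\bigl[\text{depth of the path taken by } g_i \text{ on } \bm v\bigr] \;=\; \sum_{j} \Exp\bigl[\bm\theta_j(g)\bigr],
\]
the second equality by double-counting query pairs $(i,j)$. Initially $\Phi(f) \le md \le nd$, and this bound is preserved \emph{pointwise} (not merely in expectation) along every branch, since restriction never lengthens any root-to-leaf path. The key one-step identity is
\[
\Exp_{\bm v \sim [n]}\bigl[\Phi(f|_{\ell,\, j = \bm v})\bigr] \;=\; \Phi(f|_\ell) \;-\; \Exp\bigl[\bm\theta_j(f|_\ell)\bigr],
\]
because restricting $j = v$ shortcuts precisely the queries of cell $j$, subtracting $\bm\theta_j$ from the summed depth. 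When $j$ is a violating cell, this one-step drop strictly exceeds $\mu$.

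Let $\tau$ be the random depth at which the process halts. Optional stopping applied to the supermartingale $\Phi_{\min(t,\tau)} + \mu \cdot \min(t,\tau)$ gives $\Exp[\tau] \le \Phi(f)/\mu \le nd/\mu$; the same argument initiated at any interior node yields expected remaining time at most $nd/\mu$, because $\Phi$ there is again deterministically at most $nd$. Markov's inequality therefore implies: conditioned on not having halted by depth $2knd/\mu$, the process halts within the next $2nd/\mu$ steps with probability at least~$1/2$. Iterating this bound $\lceil\log_2(1/\varepsilon)\rceil$ times shows that the mass of leaves at the cut-off depth is at most~$\varepsilon$, as required.

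The main subtlety is precisely this last iterated-Markov step: a single application of Markov to $\Exp[\tau]$ would produce only a $1/\varepsilon$ dependence in the depth, whereas the claimed $\log(1/\varepsilon)$ dependence requires iterating. Iteration in turn relies on the \emph{deterministic} bound $\Phi \le nd$ at every node of $T$, which holds because each output tree retains depth at most $d$ under any restriction.
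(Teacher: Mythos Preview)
Your proposal is correct and follows essentially the same approach as the paper: the same adaptive restriction process, the same potential $\Phi$ (the paper's $\bm p_i$), the same one-step drop of at least $\mu$, and the same iteration idea exploiting that $\Phi \le md$ deterministically at every node. The only difference is presentational---the paper first packages the single-Markov bound as a separate lemma (depth $nd/(\varepsilon\mu)$ with failure $\varepsilon$), then boosts by hanging a fresh such tree at each failed leaf $\log(1/\varepsilon)$ times, whereas you run the iterated-Markov argument inline; these are equivalent since ``restarting at a failed leaf'' is exactly your observation that the potential bound $\Phi \le nd$ holds pointwise.
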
We then can derive a simple corollary (along the lines of the proof of \cref{thm:main-nonadaptive}) that immediately implies \cref{thm:warmup-main}.
    \begin{corollary}
    \label{cor:dichotomy}
        Suppose that $f\colon [n]^s \to [n]^n$ is such that for any partial assignment $\ell \in ([n] \cup \{\star\})^s$ that makes $f|_\ell$ average-$n^{0.1}$-Lipschitz, and for $\bm u \sim [n]^s$ we have either
        \begin{itemize}[noitemsep, topsep=2pt]
            \item \emph{(Collision):} $\Pr[\exists i \neq j \in [n]\colon (f|_\ell)_i(\bm u) = (f|_\ell)_j(\bm u)] \ge 1 - \exp(-n^{\Omega(1)})$.
            \item \emph{(Containment):} There is a set $F_\ell$ of size $n^{3n/4}$ such that $\Pr[f|_\ell(\bm u) \in F_\ell] \ge 1 - \exp(-n^{\Omega(1)})$.
        \end{itemize}
        Then $\Delta(f(\bm u), \bm \pi) \ge 1 - \exp(-n^{\Omega(1)})$, where $\bm \pi \sim S_n$.
    \end{corollary}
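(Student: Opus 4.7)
The plan is to combine \cref{lem:strong-avg-Lipschitz} with the two-case hypothesis to produce a single distinguishing event $A \subseteq [n]^n$ that almost always contains $f(\bm u)$ but almost never contains $\bm\pi$.

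First, I would apply \cref{lem:strong-avg-Lipschitz} to $f$ (thought of as a depth-$d$ decision forest with $d = \polylog n$) with $\mu = n^{0.1}$ and an error parameter $\varepsilon = \exp(-n^{c})$ for some sufficiently small constant $c > 0$. The resulting tree $T$ has depth $D = 2nd\log(1/\varepsilon)/\mu = n^{1-\Omega(1)}$, so it has at most $n^D \le n^{n/5}$ leaves. Let $L$ be these leaves with masses $p_\ell$, and let $G \subseteq L$ be the ``good'' leaves where $f|_\ell$ is average-$n^{0.1}$-Lipschitz; the lemma gives $\sum_{\ell \notin G} p_\ell \le \varepsilon$. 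The hypothesis then labels each $\ell \in G$ as either a \emph{collision leaf} (set $C$) or a \emph{containment leaf} (set $K$, each equipped with a witness set $F_\ell$ of size $n^{3n/4}$).

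Next, I would define
\[A \coloneqq \{y \in [n]^n : y \text{ has a collision}\} \;\cup\; \bigcup_{\ell \in K} F_\ell\]
and estimate both probabilities. For every good leaf $\ell$, the applicable clause of the hypothesis gives $\Pr[f|_\ell(\bm u) \in A] \ge 1 - \exp(-n^{\Omega(1)})$, so conditioning on the leaf reached by $T$ yields
\[\Pr[f(\bm u) \in A] \;\ge\; (1-\varepsilon)\bigl(1 - \exp(-n^{\Omega(1)})\bigr) \;\ge\; 1 - \exp(-n^{\Omega(1)}).\]
On the other hand $\bm\pi$ never has a collision, so using $n! \ge (n/e)^n$,
\[\Pr[\bm\pi \in A] \;\le\; \frac{|K|\cdot n^{3n/4}}{n!} \;\le\; \frac{n^{n/5} \cdot n^{3n/4}}{(n/e)^n} \;=\; \exp(-\Omega(n\log n)).\]
Subtracting gives $\Delta(f(\bm u), \bm\pi) \ge 1 - \exp(-n^{\Omega(1)})$, as desired.

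The only delicate point is the parameter balance: $\varepsilon$ should be $\exp(-n^{\Omega(1)})$ so that the lower bound on $\Pr[f(\bm u) \in A]$ is exponentially strong, yet the depth $D \propto \log(1/\varepsilon)/\mu$ of $T$ must also remain small enough for $|K| \cdot n^{3n/4} \ll n!$, i.e., roughly $D \le n/4$. With $d = \polylog n$ and $\mu = n^{0.1}$ there is ample slack, so this is not a serious obstacle---the genuinely creative step is simply recognizing that the union of the containment sets, together with the set of non-permutations, is the right event unifying both cases of the dichotomy.
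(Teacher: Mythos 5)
Your proposal is correct and follows essentially the same route as the paper's proof: apply \cref{lem:strong-avg-Lipschitz} with $\mu=n^{0.1}$ and exponentially small $\varepsilon$, take the witnessing event to be the non-permutations together with the union of the containment sets over the (few) leaves, and balance the leaf count against $n!$. The only cosmetic difference is your choice of the error parameter ($\exp(-n^{c})$ versus the paper's $\exp(-n^{0.01})$) and slightly different bookkeeping constants ($n^{n/5}$ versus $n^{n/20}$ leaves), which does not affect the argument.
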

    \begin{proof} We apply \cref{lem:strong-avg-Lipschitz} with $\epsilon=\exp(-n^{.01})$ and $\mu=n^{.1}$ to get a depth-$n^{.99}$ decision tree. Consider a leaf $\ell$ of this tree. By the assumption for each $\ell$ such that $f|_\ell$ is average-$\mu$-Lipschitz, we either get a collision with probability $1 - \exp(-n^{\Omega(1)})$, or there exists a set $F_\ell$ of size at most $n^{3n/4}$ such that $\Pr[f(\bm u) \in F_\ell \mid \bm u \in \ell] \ge 1 - \exp(-n^{\Omega(1)})$. The total number of leaves is bounded by $n^{n^{.99}} \ll n^{n/20}$. Thus, the size of the union of all $F_\ell$ over all leaves that fall in the containment case is at most $n^{n/20} \cdot n^{3n/4} \le n^{4n/5}$.

    Finally, we see that with probability $1 - \exp(-n^{\Omega(1)})$ a uniformly random $\bm u$ lands in a leaf $\bm \ell$ such that $f|_{\bm \ell}$ is average-$n^{.1}$-Lipschitz. If that happens, then either the sequence that we sample does not constitute a permutation, or it belongs to $\bigcup F_\ell$. Given that $\Pr[\bm \pi \in \cup_{\ell} F_\ell] \le n^{4n/5}/n! = \exp(-\Omega(n))$, this immediately implies that
    \( \Delta(f(\bm u), \bm \pi) \ge 1 - \exp(-n^{\Omega(1)}).\)
    \end{proof}

\subsection{The general proof}
In this section we finalize the proof of \cref{thm:main}. The only missing structural piece is the treatment of the high-entropy case. It is formalized as follows:
\begin{lemma}
\label{lem:high-entropy-case}
    Suppose $f\colon [n]^s \to [n]^m$ is a depth-$d$ average-$\mu$-Lipschitz decision forest such that $d = o(\sqrt{\log n}/\log \log n)$, $\mu = n^{0.1}$, and $m = \Omega(n)$. Suppose that $\H(f(\bm u)) \ge m \log n / 4$.
    Then $\Pr[\exists i \neq j \in [m]\colon f_i(\bm u) = f_j(\bm u)] \ge 1 - \exp(-n^{\Omega(1)})$. 
\end{lemma}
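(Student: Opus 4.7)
The plan is to iteratively reduce the depth of the forest from $d$ down to $1$ via a sequence of input restrictions paired with deletions of a controlled fraction of output trees. At each step I maintain three invariants: (a) the current forest is average-$\mu$-Lipschitz with $\mu$ worsened by only a fixed $\poly(\log n)$ factor over the whole run; (b) the number of surviving output cells shrinks by at most a $\poly(\log n)$ factor; and (c) the per-cell Shannon entropy rate drops by at most a $1-O(1/d)$ factor. Since $d = o(\sqrt{\log n}/\log \log n)$, after $d-1$ reductions I am left with a depth-$1$ forest on $m' \geq n^{1-o(1)} \geq n^{0.99}$ output cells whose total entropy is $\Omega(m'\log n) \gg 4 m' \log \log n$, and \cref{lem:simplified-collision-lemma2} then yields a collision with probability $1-\exp(-n^{\Omega(1)})$. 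This is exactly the collision we needed.

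A single reduction step on a depth-$d'$ forest $g\colon [n]^s \to [n]^J$ proceeds as follows. First, I promote average-$\mu$-Lipschitzness to almost-everywhere $(\mu',\exp(-n^{\Omega(1)}))$-Lipschitzness using \cref{lem:average-Lipschitz-to-Lipschitz}, and condition on this event (paying only an exponentially small failure term). I partition the output trees by the input cell queried at their root: $N_i = \{j \in J : g_j \text{ root-queries } i\}$. Since root queries are unconditional, the almost-everywhere bound gives $|N_i| \leq \mu'$ pointwise. Then I apply \cref{fact:entropy-with-condition} to the independent coordinates of $\bm u$, combined with a greedy/averaging argument that prefers high-fan-in root cells, to select a subset $I \subseteq [s]$ of root cells and define $J' := \bigcup_{i \in I} N_i$ so that $|J'| \geq |J|/\poly(\log n)$ and $\H(g_{J'}(\bm u) \mid \bm u_I) \geq (1-O(1/d))\cdot \rho \cdot |J'|\log n$, where $\rho = \H(g(\bm u))/(|J|\log n)$ is the current entropy rate. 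Finally, I fix $\bm u_I$ to a uniformly random value $\bm r$; by \cref{lem:simplified-entropy-concentration}, with probability $1-\exp(-n^{\Omega(1)})$ the realized conditional entropy $\H(g_{J'}(\bm u) \mid \bm u_I = \bm r)$ agrees with its expectation up to an additive $n^{0.9}$. On the restricted input, every tree in $J'$ has its root query answered, so its depth drops to $d'-1$; trees in $J \setminus J'$ are discarded, and all three invariants pass to the next iteration.

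The main obstacle is the greedy selection of $I$ inside a single reduction step. It must simultaneously satisfy three competing constraints: (i) $|J'|$ shrinks by at most a $\poly(\log n)$ factor; (ii) the conditional entropy $\H(g_{J'}(\bm u) \mid \bm u_I)$ stays within a $1-O(1/d)$ factor of $\rho\cdot|J'|\log n$; and (iii) the post-restriction average Lipschitzness degrades by only a fixed amount per step. Constraint (ii) is the delicate one: restricting too few inputs leaves most tree depths unchanged, while each additional input in $I$ drains up to $\log n$ units of entropy. The pointwise bound $|N_i| \leq \mu'$ is what makes the trade-off work, because it guarantees that the greedy procedure can cover many output cells using relatively few root-cell restrictions, paid for with only a small fraction of the available entropy. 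Once this selection is in place, \cref{lem:simplified-entropy-concentration} converts the expectation bound into a high-probability guarantee over the random restriction, and the iteration closes.
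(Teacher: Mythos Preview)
Your high-level iteration plan matches the paper's, and your use of \cref{lem:average-Lipschitz-to-Lipschitz}, \cref{lem:simplified-entropy-concentration}, and \cref{lem:simplified-collision-lemma2} is correct. The gap is in the single reduction step: the selection of $I$ with $\H(g_{J'}(\bm u)\mid \bm u_I)\ge (1-O(1/d))\,\rho\,|J'|\log n$ need not exist. Consider a depth-$2$ forest where all root cells $r_j$ are distinct (so every $|N_i|=1$) and each tree's output is essentially determined by its root answer, e.g.\ $g_j$ outputs $\bm u_{r_j}$ unless $\bm u_{r_j}=0$, in which case it queries a fresh cell and outputs that. This forest is average-$1$-Lipschitz with $\H(g(\bm u))\approx |J|\log n$, yet for \emph{every} choice of $I$ and $J'=\bigcup_{i\in I}N_i$ one has $\H(g_{J'}(\bm u)\mid \bm u_I)=o(|J'|\log n)$. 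Your greedy rule ``prefer high-fan-in root cells'' offers no leverage here because all fan-ins are $1$, and your appeal to \cref{fact:entropy-with-condition} does not manufacture the needed inequality: that fact only gives $\H(g_{J'}(\bm u))\le \H(g_{J'}(\bm u)\mid \bm u_I)+\H(g_{J'}(\bm u)\mid \bm u_{[s]\setminus I})$, so when the first term is small the second must be large, not the first.

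The paper's one-step lemma (\cref{lem:one-step-lemma}) addresses exactly this with two ingredients you are missing. First, it \emph{prunes}: trees in $J'$ are modified to return $\bot$ whenever they would query $I$ at a non-root level, so that after conditioning on $\bm u_{[s]\setminus I}$ the pruned forest has depth exactly~$1$. Second, it carries a \emph{dichotomy}: by \cref{fact:entropy-with-condition} either $\H(g_{J'}(\bm u)\mid \bm u_I)\ge(1-1/d)\H(g_{J'}(\bm u))$ (depth drops by one, case \ref{item:decrease-by-1}) or $\H(g_{J'}(\bm u)\mid \bm u_{[s]\setminus I})\ge(1/d)\H(g_{J'}(\bm u))$ (depth drops all the way to one, case \ref{item:drop-to-1}). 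The counterexample above lands squarely in the second case. Finally, the selection of $I$ is not greedy but \emph{random subsampling} at rate $\alpha=1/\log^6 n$: Shearer's inequality guarantees the entropy rate survives in $J'$, and the $\alpha^2$-sparsification of non-first queries to $I$ is what keeps the pruning cheap (\cref{claim:small-drop-when-adding-bot}). Without the dichotomy and the pruning, your iteration can stall after a single step.
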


Assuming this lemma, we can finish the proof.
\begin{proof}[Proof of \cref{thm:main}]
 Consider a partial assignment $\ell \in ([n] \cup \{\star\})^s$, suppose that $f|_\ell$ is average-$n^{0.1}$-Lipschitz. Then either $\H(f|_\ell(\bm u)) \ge n \log n / 4$, in which case we have a collision whp by \cref{lem:high-entropy-case}, or $\H(f|_\ell(\bm u)) \le n \log n / 4$, so we have a small container set by \cref{lem:containment}. We now apply \cref{cor:dichotomy} and finish the proof.
\end{proof}

We now proceed to the proof of \cref{lem:high-entropy-case}.

\subsubsection{Sharper tools}
So far, we have stated two simplified versions of our collision lemma: \cref{lem:simplified-collision-lemma} and \cref{lem:simplified-collision-lemma2}. The general adaptive case requires a stronger version (proved in \cref{sec:proof-of-collision-lemma}):
\begin{restatable}{lemma}{depthonecollision}
\label{lem:collision-for-d-1}
    Let $\bm u \sim [n]^s$ and $f\colon [n]^s \to ([n] \cup \{\bot\})^m$ be a depth-$1$ decision forest. Suppose that $\H(f(\bm u)) \ge \delta \cdot m \log n$, and $(\delta^2/4) m \ge n^{1-\epsilon}$ for $\delta = \delta(n)  \ge \max(4\log\log n /\log n, 8\epsilon)$. Then
    \[ \Pr[\exists i \neq j \in [m] \colon f_i(\bm u) = f_j(\bm u) \neq \bot] \ge 1 - \exp(-\Omega(\delta^4 m^3/n^2)).\]
\end{restatable}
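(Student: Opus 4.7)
The plan is to exploit the depth-$1$ structure of $f$ to reduce to a collection of mutually independent high-entropy random variables $z_1,\dots,z_{m'}$ on $[n] \cup \{\bot\}$, and then apply a Janson-style second-moment bound to obtain exponential concentration of the collision event.

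\textbf{Step 1 (Reduction to independent high-entropy representatives).} Since $f$ has depth $1$, each output has the form $f_i(\bm u) = g_i(u_{j_i})$ for a single-cell index $j_i \in [s]$ and a function $g_i\colon [n] \to [n] \cup \{\bot\}$. Partition the outputs by queried cell, $A_j = \{i : j_i = j\}$. Outputs in distinct groups are mutually independent (disjoint inputs), while the joint output within a single group is a deterministic function of one $[n]$-symbol, so $\H(f_{A_j}(\bm u)) \le \log n$. The hypothesis $\H(f(\bm u)) = \sum_j \H(f_{A_j}(\bm u)) \ge \delta m \log n$ therefore forces at least $\delta m$ non-empty groups. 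Combined with subadditivity $\sum_{i \in A_j}\H(f_i) \ge \H(f_{A_j})$, averaging produces $\Omega(\delta m)$ individual outputs of entropy $\ge \Omega(\delta \log n)$; picking one representative per relevant group yields $m' = \Omega(\delta m)$ mutually independent variables $z_k$ each with $\H(z_k) \ge \Omega(\delta \log n)$. For each such $z_k$, writing $\eta_k \coloneqq \Pr[z_k \ne \bot]$, the estimate $\H(z_k) \le H_2(\eta_k) + \eta_k \log n \le 1 + \eta_k \log n$ combined with $\delta \ge 4 \log\log n/\log n$ gives $\eta_k = \Omega(\delta)$.

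\textbf{Step 2 (Collision bound).} Setting $\lambda_a \coloneqq \sum_k \Pr[z_k = a]$ for $a \in [n]$ and $\Lambda \coloneqq \sum_a \lambda_a \ge \Omega(\delta m') = \Omega(\delta^2 m)$, consider $X \coloneqq \sum_{k<\ell} \mathbbm{1}[z_k = z_\ell \ne \bot]$. Independence and Cauchy-Schwarz give
\[
\mu \coloneqq \Exp[X] = \tfrac12 \sum_a \bigl(\lambda_a^2 - {\textstyle\sum_k}\Pr[z_k = a]^2\bigr) \;\ge\; \Omega(\Lambda^2/n) \;\ge\; \Omega(\delta^4 m^2/n),
\]
while the total dependency mass between events sharing a common $z$-variable satisfies $\Delta \le \sum_a \lambda_a^3$, which in the worst (spread) case is of order $\Lambda^3/n^2$. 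Janson's inequality $\Pr[X = 0] \le \exp(-\mu^2/(2(\mu+\Delta)))$ then yields an exponent $\ge \Omega(\min(\Lambda^2/n,\Lambda))$, and a careful comparison of parameter regimes shows that this dominates the target $\Omega(\delta^4 m^3/n^2)$ under the hypothesis $\delta^2 m \ge 4 n^{1-\epsilon}$; in the large-$\Lambda$ regime I can alternatively apply a Chernoff-type bound directly to the number of hits on the heaviest value $a^\star$, using $\lambda_{a^\star} \ge \Lambda/n$.

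\textbf{Main obstacle.} The trickiest step is the representative selection in Step 1: when the high-entropy outputs cluster into few groups, a single representative per group may lose too much individual entropy, since the joint entropy within a group is capped at $\log n$. I expect to resolve this via a case split in which the clustered subcase is handled by a within-group pigeonhole argument on the non-$\bot$ images of the $g_i$'s restricted to a common input cell---many high-entropy $g_i$ functions with non-$\bot$ image of size $\ge n^{\Omega(\delta)}$ cannot be pairwise disjoint, forcing within-group collisions. A secondary complication is the presence of the $\bot$ symbol, which breaks the standard birthday-paradox statement; this is handled by the entropy-to-non-$\bot$-mass bound of Step 1. Once the independent high-entropy representatives are in place, the Janson calculation of Step 2 is essentially routine modulo tracking parameters across the different $\Lambda/n$ regimes.
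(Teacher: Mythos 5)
Your overall plan (reduce the depth-$1$ forest to independent high-entropy representatives, then a second-moment concentration argument for the collision count) matches the paper's strategy, but both of your key quantitative steps have genuine gaps. In Step~1, the averaging argument only yields $\Omega(\delta m)$ output cells with individual entropy $\Omega(\delta\log n)$; it says nothing about how these spread over the groups $A_j$, and picking one representative per group needs exactly that. You flag this yourself, but your proposed fix for the clustered subcase does not work: two high-entropy functions of the same input cell can have identical (hence maximally overlapping) non-$\bot$ images and still never collide --- e.g.\ $g(u)=u$ and $g'(u)=u+1 \bmod n$ --- so ``images not pairwise disjoint'' does not force within-group collisions. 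The gap is fillable, but by a different argument: either note that every group all of whose members have entropy below $\tfrac{\delta}{2}\log n$ contributes at most $|A_j|\cdot\tfrac{\delta}{2}\log n$ by subadditivity, hence such groups account for at most $\tfrac{\delta}{2}m\log n$ of the total and at least $\delta m/2$ groups contain a member of entropy $\ge\tfrac{\delta}{2}\log n$; or follow the paper, which uses $\sum_j \H(f_{A_j})/|A_j|\ \ge\ \bigl(\sum_j\H(f_{A_j})\bigr)^2/\sum_j|A_j|\H(f_{A_j})$ (\cref{fact:frac-inequality}) to extract $(\delta^2/4)m$ independent representatives of entropy $\ge(\delta/2)\log n$.

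In Step~2, the claimed bound $\mu\ge\Omega(\Lambda^2/n)$ does not follow: $\Exp[X]=\tfrac12\sum_a\lambda_a^2-\tfrac12\sum_{k,a}\Pr[z_k=a]^2$, and entropy $\Omega(\delta\log n)$ does not rule out each $z_k$ having an atom of mass $1/2$ on a private value $a_k$. In that case the subtracted diagonal term is $\Omega(m')$, whereas $\Lambda^2/n=O(\delta^2 m'^2/n)$, and in the extremal regime $m\approx 4n^{1-\epsilon}/\delta^2$ allowed by the hypothesis one has $\delta^2 m'/n=O(\delta n^{-\epsilon})\ll 1$, so the diagonal term swamps your lower bound on $\mu$ and the final exponent does not follow. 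This is precisely why the paper first removes ``heavy'' values with $\Pr[z_k=a]>n^{-\delta/2}$ and proves (\cref{claim:high-entropy-var-new}) that a variable of entropy $\ge\tfrac{\delta}{2}\log n$ retains $\Omega(\delta)$ mass on light values; the hypothesis $\delta\ge 8\epsilon$ is there exactly so that the resulting $n^{-\delta/2}$ correction terms are negligible. Your proposal controls only the $\bot$-mass, not heavy non-$\bot$ atoms. A secondary issue is the appeal to Janson's inequality: the events $\{z_k=z_\ell=a\}$ are not increasing events over independent Bernoulli inclusions (for fixed $k$ the indicators $\mathbbm{1}[z_k=a]$ over $a$ are dependent), so the standard Janson form you quote does not directly apply; you would need a Suen-type variant, or simply McDiarmid's inequality applied to the $2$-bounded-difference smoothed collision count, as the paper does. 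That last point is easy to repair; the heavy-atom omission and the broken fix for clustering are the substantive gaps.
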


In the warm-up section we hid under the rug the need to use $(\mu, \delta)$-Lipschitzness, instead of average Lipschitzness. Although by \cref{lem:average-Lipschitz-to-Lipschitz} we can always assume average Lipschitzness, operating with almost-everywhere Lipschitzness directly comes in handy in the general proof, because the latter is preserved under restrictions:
\begin{lemma}
\label{lem:Lipschitzness-after-conditioning}
    Suppose $f\colon \Lambda^s \to \Sigma^m$ is a $(\mu, \delta)$-Lipschitz decision forest.
     Let $T$ be a decision tree querying symbols of a string in $\Lambda^s$.
     For $\bm \alpha$ be a leaf of $T$ where a uniformly random assignment lends,
      and let $f|_{\bm\alpha}$ 
     be the forest with the input cells restricted according to $\bm\alpha$. Then
    \[ \Pr[\text{ $f|_{\bm\alpha}$ is $(\mu, \sqrt{\delta})$-Lipschitz }]
        \ge 1 -\sqrt{\delta}.\]
\end{lemma}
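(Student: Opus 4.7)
The approach is a straightforward application of Markov's inequality, done in the right probability space. The key observation is that $(\mu,\delta)$-Lipschitzness is an \emph{averaged} property, so conditioning on a random leaf of $T$ only mildly degrades it.

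First, I would introduce the natural per-cell quantity. For every input cell $j\in[s]$ and every leaf $\alpha$ of $T$, define
\[
    P_j(\alpha)\ :=\ \Pr_{\bm u'}\!\left[\bm\theta_j^{f|_\alpha}(\bm u')>\mu\right].
\]
Since the random leaf $\bm\alpha$ is exactly the leaf reached by a uniformly random input, conditioning on $\bm\alpha=\alpha$ coincides with restricting $\bm u$ to be consistent with $\alpha$. Hence $P_j(\alpha)=\Pr_{\bm u}[\bm\theta_j^f(\bm u)>\mu \mid \bm u\text{ consistent with }\alpha]$, and by the tower property
\[
    \Exp_{\bm\alpha}\!\left[P_j(\bm\alpha)\right]\ =\ \Pr_{\bm u}\!\left[\bm\theta_j^f(\bm u)>\mu\right]\ \leq\ \delta,
\]
where the last inequality is the $(\mu,\delta)$-Lipschitz hypothesis on $f$. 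Applying Markov's inequality to the nonnegative random variable $P_j(\bm\alpha)$ gives, for each cell $j$ individually,
\[
    \Pr_{\bm\alpha}\!\left[P_j(\bm\alpha)>\sqrt\delta\right]\ \leq\ \frac{\Exp[P_j(\bm\alpha)]}{\sqrt\delta}\ \leq\ \sqrt\delta.
\]

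Second, I would combine the per-cell bounds into the desired joint statement. By definition, $f|_{\bm\alpha}$ fails to be $(\mu,\sqrt\delta)$-Lipschitz precisely when $P_j(\bm\alpha)>\sqrt\delta$ for at least one $j$. Observe that only those $j$ that can be queried by some tree of $f|_\alpha$ on some input can have $P_j(\alpha)>0$; all other cells contribute nothing. Collecting these surviving cells and taking the union of the bad events yields the desired probability bound via the Markov estimate above.

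The main obstacle lies in this final combining step: to get the clean $\sqrt\delta$ conclusion we must avoid a naive union bound over all $s$ input cells, which would cost a multiplicative factor of $s$. The natural way out is to exploit the pointwise identity $\sum_j \li \bm\theta_j(\bm u)>\mu\ri\leq md/\mu$ (where $d$ is the depth of $f$ and $m$ the number of trees), so the total "bad mass" $\sum_j \Pr[\bm\theta_j^f>\mu]$ is bounded by $md/\mu$ rather than $s\delta$; together with the fact that the event $P_j(\alpha)>\sqrt\delta$ implicitly restricts to the few cells overloaded on $\bm u\in\alpha$, one can absorb the combinatorial factor into the parameters. This tightening is the only delicate point of the proof; the rest is just the one-line Markov computation above.
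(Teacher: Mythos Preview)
Your per-cell Markov computation is exactly the paper's argument: the paper also fixes $j\in[s]$, lets $Q(\alpha,x)$ be the number of trees of $f|_\alpha$ querying cell $j$ on input $x$, assumes the failure event $E$ has probability exceeding $\sqrt\delta$, and writes $\Pr[Q>\mu]\ge\Pr[E]\cdot\Pr[Q>\mu\mid E]>\sqrt\delta\cdot\sqrt\delta=\delta$, contradicting $(\mu,\delta)$-Lipschitzness. So on the core step you and the paper coincide.

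The combining step you flag as ``the only delicate point'' is a genuine gap, and the paper's written proof does not close it either. There, $E$ is defined as the event that $f|_{\bm\alpha}$ fails to be $(\mu,\sqrt\delta)$-Lipschitz, which is an existential statement over cells; but $Q$ is tied to the single fixed $j$, so the inequality $\Pr[Q>\mu\mid E]>\sqrt\delta$ is unjustified when the violating cell may vary with $\bm\alpha$. What both arguments actually establish is the per-cell bound $\Pr_{\bm\alpha}[P_j(\bm\alpha)>\sqrt\delta]\le\sqrt\delta$ for every $j$. Your suggested route through the pointwise inequality $\sum_j\li\bm\theta_j>\mu\ri\le md/\mu$ does not recover the clean $\sqrt\delta$ conclusion: at best it replaces the naive factor $s$ by $md/\mu$, giving something like $(md/\mu)\sqrt\delta$ after a union bound, still not $\sqrt\delta$. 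In the paper's applications this is immaterial, since $\delta=\exp(-n^{\Omega(1)})$ while $s,m,d\le\exp(n^{o(1)})$, so any such polynomial blowup is absorbed; but as a standalone statement the all-$j$ conclusion with the stated bound does not follow from either argument without a union-bound loss.
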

\begin{proof}
    Fix some $j \in [s]$. Let $Q(\alpha, x)$ be the number of trees in 
    $f|_{\alpha}$ that query the cell $j$ on the input $x$. Let $E$ be the event 
    ``$f|_{\bm\alpha}$ is \emph{not} $(\mu, \sqrt{\delta})$-Lipschitz''. 
    Suppose for contradiction that $\Pr[E] > \sqrt{\delta}$.
    Consequently we get
    \[ \Pr_{\bm \alpha, \bm u}[Q(\bm \alpha, \bm u) > \mu]
     \ge \Pr[E] \cdot \Pr[Q(\bm \alpha, \bm u) > \mu \mid E] 
     > \sqrt{\delta} \cdot \sqrt{\delta} = \delta.\]
    Since $Q(\alpha, x)$ is also the number of trees in $f$ that query the cell $j$ on the 
    joint input $(\alpha, x)$ we have a contradiction with $(\mu, \delta)$-Lipschitzness 
    of $f$.
\end{proof}

A helpful trick for dealing with Lipschitz decision forests is to turn an unlikely undesirable event (say an input cell is queried too many times) into an impossible event by terminating the computation of a tree if it is about to do something undesirable (e.g. query that too popoular input cell). The following fact helps to argue that such terminations do not affect entropy too much. 
\begin{fact}
\label{fact:mixture-of-rv}
Let $\bm x$ be a random variable supported over $(\Sigma \cup \{\bot\})^n$ and let $b\colon (\Sigma \cup \{\bot\})^n \to \mathbb{Z}_{\ge 0}$ be the function that counts non-$\bot$ elements in the input. Then $\H(\bm x) \le \log (n+1) + \Exp[b(\bm x)] \log (n |\Sigma|)$.
\end{fact}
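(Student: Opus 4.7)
The plan is to split the entropy of $\bm x$ along the chain rule according to $b(\bm x)$, which counts the ``informative'' (non-$\bot$) coordinates. Concretely, I would write
\[
  \H(\bm x) \;=\; \H(b(\bm x)) + \H(\bm x \mid b(\bm x)),
\]
and bound the two terms separately. The first term is easy: since $b(\bm x) \in \{0, 1, \dots, n\}$ takes at most $n+1$ values, we immediately get $\H(b(\bm x)) \le \log(n+1)$.

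For the second term I would condition on the value of $b(\bm x)$. Given $b(\bm x) = k$, the random variable $\bm x$ is supported on strings with exactly $k$ non-$\bot$ coordinates. The number of such strings is $\binom{n}{k}|\Sigma|^k \le n^k |\Sigma|^k = (n|\Sigma|)^k$, so by the standard support-size bound on Shannon entropy, $\H(\bm x \mid b(\bm x) = k) \le k\log(n|\Sigma|)$. Averaging over $k$ gives
\[
  \H(\bm x \mid b(\bm x)) \;=\; \sum_{k=0}^{n} \Pr[b(\bm x) = k]\cdot \H(\bm x \mid b(\bm x) = k) \;\le\; \Exp[b(\bm x)] \cdot \log(n|\Sigma|),
\]
and adding the two bounds yields the claim. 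There is no real obstacle here; the only thing to get right is the decomposition step and the crude support bound $\binom{n}{k}|\Sigma|^k \le (n|\Sigma|)^k$, both of which are routine.
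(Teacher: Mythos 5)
Your proof is correct and follows essentially the same route as the paper: both decompose $\H(\bm x) = \H(b(\bm x)) + \H(\bm x \mid b(\bm x))$ via the chain rule, bound the first term by $\log(n+1)$, and bound the conditional term using the support-size estimate $\binom{n}{k}|\Sigma|^k \le (n|\Sigma|)^k$ before averaging over $k$. No gaps.
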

\begin{proof}
    Observe that $\H(\bm x \mid b(\bm x) = \ell) \le \ell \log(n |\Sigma|)$, since the support size of $(\bm x \mid b(\bm x) = \ell)$ is at most $(n|\Sigma|)^\ell$. We then write by the chain rule:
    \begin{align*}
        \H(\bm x) &= \H(b(\bm x), \bm x) \\
                  &= \H(b(\bm x)) + \H(\bm x \mid b(\bm x)) \\
                  &= \H(b(\bm x)) + \Exp_{\bm \ell \sim b(\bm x)}[\H(\bm x \mid b(\bm x) = \bm \ell)]\\
                  &\le \H(b(\bm x)) + \Exp_{\bm \ell \sim b(\bm x)}[\bm \ell \log(n |\Sigma|)] \\
                  &\le \log (n+1) + \Exp[b(\bm x)] \cdot \log (n |\Sigma|).\qedhere
    \end{align*}
\end{proof}

\subsubsection{Handling the high-entropy case.}
In this section, we prove \cref{lem:high-entropy-case}.
The proof proceeds by repeatedly reducing the depth of the forest until the depth is $1$, so we are in a position to apply \cref{lem:collision-for-d-1}. 

In what sense do we reduce the depth? We will find a set $I \subseteq [s]$ and a set $J \subseteq [m]$ such that whp over the assignment to $I$ the projection $f_J$ after assigning the input cells in $I$ has high entropy and depth at most $d-1$. This is formalized in the following key lemma:
\begin{lemma}
\label{lem:one-step-lemma}
    Suppose $f\colon [n]^s \to ([n] \cup \{\bot\})^m$ is a depth-$d$ average-$\mu$-Lipschitz decision forest with $m \ge n^{0.99}$, and $\mu \le n^{0.3}$. Suppose that for $\bm u \sim [n]^s$ we have $\H(f(\bm u)) \ge c m \log n$ with $c = \omega(1/\log n)$. Then there exists a set $I \subseteq [s]$, a set $J \subseteq [m]$ of size $|J| \ge m / \log^6(n)$ and a forest $g\colon [n]^s \to ([n] \cup \{\bot\})^J$ such that for every $j \in J$ and all $x \in [n]^s$ we have $g_j(x) \in \{f_j(x), \bot\}$ and one of the following conditions holds:
    \begin{enumerate}[noitemsep, label=(C\arabic*)]
            \item With probability $1-\exp(-n^{\Omega(1)})$ over $\bm\beta \sim [n]^I$, $\H(g(\bm u) \mid \bm u_I = \bm \beta) \ge (1-3/d) c |J| \log n$ and $g$ has depth $d-1$ after any assignment to the input cells in $I$. \label{item:decrease-by-1}
            \item With probability $1-\exp(-n^{\Omega(1)})$  over $\bm\beta \sim [n]^I$, $\H(g(\bm u) \mid \bm u_I = \bm \beta) \ge (1  / (3d)) c |J| \log n$ and $g$ has depth $1$ after any assignment to the input cells in $I$. \label{item:drop-to-1}
    \end{enumerate}
\end{lemma}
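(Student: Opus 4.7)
The plan is to dichotomize on the distribution of the forest's root queries. Let $v_j \in [s]$ denote the cell queried at the root of tree $f_j$, and let $R := \{v_j : j \in [m]\}$. By subadditivity, $\H(f(\bm u)) \le \H(\bm u_R) + \H(f(\bm u) \mid \bm u_R)$, and since the cells in $R$ are distinct coordinates of the uniform $\bm u$ we have $\H(\bm u_R) = |R|\log n$. The dichotomy then arises from comparing $|R|$ to the threshold $(3/d) c m$: one side will give C1 and the other C2.

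In the small-$|R|$ branch ($|R| \le (3/d) cm$), I would take $I := R$, $J := [m]$, and $g := f$. Every root lies in $I$, so fixing $\bm u_I$ reduces each tree's depth to $d-1$, satisfying the depth requirement of C1. The conditional entropy $\H(f(\bm u) \mid \bm u_R) = \H(f) - \H(\bm u_R) \ge (1-3/d) c m \log n$ is only in expectation over $\bm\beta \sim [n]^I$, and I upgrade it to a whp bound via \cref{lem:simplified-entropy-concentration}, whose hypotheses hold because $f$ is average-$n^{0.3}$-Lipschitz and $m \le n$. Its $n^{0.9}$ additive deviation is absorbed into the $(1-3/d)$ constant since $cm\log n = \omega(n^{0.99})$, and $|J| = m$ trivially meets the size requirement.

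In the large-$|R|$ branch ($|R| > (3/d) cm$), the root queries themselves carry substantial entropy, and I would construct a depth-$1$ forest that captures it. A natural attempt is $I := [s] \setminus R$ so that only root cells remain dynamic, and for each $j$ define $g_j(\bm u) := f_j(\bm u)$ when the path of $f_j$ on $\bm u$ makes no query to $R$ other than $v_j$, and $g_j(\bm u) := \bot$ otherwise. After any assignment of $\bm u_I$, each $g_j$ depends only on $\bm u_{v_j}$ (either returning the fully determined value or $\bot$), so $g$ has depth $1$. To control the entropy loss from the $\bot$-truncation, I would pick $J$ to exclude trees with large truncation probability via Markov on the average-$\mu$-Lipschitz bound (the total expected cross-$R$ queries is $\le |R|\mu$), then use \cref{fact:mixture-of-rv} to turn the per-tree $\bot$-probability into a quantitative entropy deficit. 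The target $(1/(3d)) c |J| \log n$ would then follow from $\H(\bm u_R) \ge (3/d) cm\log n$ after subtracting this deficit, and \cref{lem:simplified-entropy-concentration} (with Lipschitzness preserved via \cref{lem:Lipschitzness-after-conditioning}) converts the expectation to a whp statement.

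The hard part will be making the large-$|R|$ construction work uniformly in the sub-regime where $|R|$ is a constant fraction of $m$: the naive Markov step for the $\bot$-probability can become vacuous ($|R|\mu$ may exceed $m$), so a more refined selection of $J$ --- for instance stratifying trees by their path structure, or shrinking $I$ to exclude only a carefully identified set of ``informative'' non-root cells rather than all of $[s]\setminus R$ --- is very likely needed. This is where the interplay of average Lipschitzness, its preservation under restrictions, \cref{fact:mixture-of-rv}, and the very two-case structure of the lemma must be carefully exploited.
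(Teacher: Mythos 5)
Your small-$|R|$ branch is essentially sound (modulo a parameter tweak: with the threshold set exactly at $(3/d)cm$ there is no slack left to absorb the $n^{0.9}$ deviation from \cref{lem:simplified-entropy-concentration}, so you should cut at, say, $(2/d)cm$), but the large-$|R|$ branch contains a genuine conceptual gap. You dichotomize on the \emph{number} of distinct root cells and then claim that when $|R|$ is large "the root queries themselves carry substantial entropy," deducing the target bound from $\H(\bm u_R)\ge (3/d)cm\log n$. This conflates the entropy of the \emph{input} cells in $R$ (which is always exactly $|R|\log n$, since the input is uniform) with the portion of the \emph{output} entropy that survives conditioning on $\bm u_{[s]\setminus R}$. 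These can be completely decoupled: take trees that query their (distinct) root cells, ignore the answers, and output the value of a fresh non-root cell. Then $|R|=m$, $\H(f(\bm u))=m\log n$, your construction gives $g=f$ (no path ever touches $R$ off-root, so nothing is pruned), yet $\H(g(\bm u)\mid \bm u_{[s]\setminus R}=\beta)=0$ for every $\beta$, so (C2) fails outright; for $d\ge 4$ this example lands in your large-$|R|$ branch. The correct dichotomy, used in the paper, is not on $|R|$ but on where the \emph{output} entropy lives: by \cref{fact:entropy-with-condition}, either $\H(g_J(\bm u)\mid \bm u_I)\ge(1-1/d)\H(g_J(\bm u))$ or $\H(g_J(\bm u)\mid \bm u_{[s]\setminus I})\ge(1/d)\H(g_J(\bm u))$, and these two cases are exactly (C1) and (C2).

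The second problem — the one you flag yourself, that pruning to $\bot$ on cross-queries into $R$ can be vacuous because $|R|\mu$ may exceed $m$ — is in fact the crux of the lemma, and the paper's solution is visible in the statement's $m/\log^6 n$ factor: one does \emph{not} take $I$ to be all first-query cells. Instead, partition the trees by their first-queried cell ($\le 2\mu$ trees per cell, so $\ge m/2\mu$ cells), include each such cell in $\bm I$ independently with probability $\alpha=1/\log^6 n$, and let $\bm J$ be the trees rooted in $\bm I$. Shearer's inequality (\cref{lem:shearer}) shows the retained trees keep an $\alpha$-fraction of $\H(f(\bm u))$ in expectation — \emph{linear} in $\alpha$ — while the expected number of non-first queries from $f_{\bm J}$ into $\bm I$ is only $O(\alpha^2 md)$ — \emph{quadratic} in $\alpha$ — so after a union of Markov/Hoeffding arguments one fixes an $(I,J)$ for which the $\bot$-pruning deficit (bounded via \cref{fact:mixture-of-rv}) is a negligible fraction of the retained entropy. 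Your fallback suggestions (stratifying by path structure, shrinking $I$ heuristically) do not supply this subsampling-plus-Shearer mechanism, and without it neither the entropy retention nor the pruning bound goes through; combined with the incorrect dichotomy above, the proposal as written does not yield the lemma.
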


\begin{proof}[Proof of \cref{lem:high-entropy-case} given \cref{lem:one-step-lemma}]

We apply \cref{lem:average-Lipschitz-to-Lipschitz} to get that for every $\delta > 0$ we have that $f$ is $(3\mu d^2 \log(1/\delta), \delta)$-Lipschitz. We take $\delta \coloneqq \exp(-\mu)$, so $f$ is $(3(\mu d)^2, \delta)$-Lipschitz.
Then we will iterate \cref{lem:one-step-lemma} until the restricted $f$ is depth-$1$ according to the following algorithm.
\begin{algorithmic}[1]
\Input{$f \colon [n]^s \to ([n] \cup \{\bot\})^m$}
\Output{$f' \colon [n]^{R} \to ([n] \cup \{\bot\})^{J}$ of depth $1$. A partial assignment $\bm \alpha$ to $[n]^s$}
\vspace{0.5em}
\State Let $f' \gets f$ with $R \gets [s]$ and $J \gets [m]$.
\State $\bm \alpha$ be an empty assignment.
\While{Depth of $f'$ is larger than $1$}
\State Apply \cref{lem:one-step-lemma} to get $g\colon [n]^R \to ([n] \cup \{\bot\})^{J'}$, $J' \subseteq J$ and $I \subseteq R$.
\State Sample $\bm\beta \sim [n]^I$.
\State $\bm \alpha \gets \bm \alpha \cup \bm \beta$.
\State Update $f' \gets g|_{\bm \beta}$; $R \gets R \setminus I$; $J \gets J'$.
\If{$f'$ is not $((3\mu d)^2, \sqrt{\delta})$-Lipschitz}
   \State {\bf fail} \label{line:fail}
\EndIf
\EndWhile
\end{algorithmic}
By \cref{lem:Lipschitzness-after-conditioning} the line \eqref{line:fail} is executed at any point in the algorithm with probability at most $\sqrt{\delta}$ over $\bm \alpha$. With probability $(1-\exp(-n^{\Omega(1)}))^d$ over the random assignment $\bm \alpha$ the entropy of $f'(\bm u)$ satisfies for some $j \in [d]$
\[ \frac{\H(f'(\bm u))}{|J|\log n} \ge \prod_{i=j}^d (1-O(1/i)) \cdot \Omega(1/j) \ge \Omega(1/d) = \omega(\log\log n/\sqrt{\log n}).\]
On the other hand $|J| \ge m(\log^6 n)^{-d} = n \cdot 2^{o(\sqrt{\log n})}$. 

Now we reduced the problem to the case $d=1$, so we can apply \cref{lem:collision-for-d-1} to get that $f_J$ has a non-$\bot$ collision with probability $1-\exp(-n^{\Omega(1)})$ over $\bm u \sim [n]^R$. Hence with probability $(1-\sqrt{\delta})(1-\exp(-n^{\Omega(1)}))$ there is a collision in $f(\bm \alpha, \bm u)$ as required.
\end{proof}

\subsubsection{Proof of \cref{lem:one-step-lemma}} The proof goes as follows:
\begin{enumerate}[noitemsep]
    \item We are going to identify sets $I \subseteq [s]$ and $J \subseteq [m]$ such that the \emph{first
        queries} in trees $f_J$ always come from $I$ and only some $o(|J|)$ trees query $I$ after their
        first query whp over the input. \label{item:first-step}
    \item We will prune the trees $f_J$ into $g_J$ such that $g_J$ \emph{never} query $I$ after the first
        query at the expense of sometimes returning $\bot$. \label{item:second-step}
    \item By \cref{fact:entropy-with-condition} we have that either $\H(g_J(\bm u) \mid \bm u_I) \ge
        \H(g_J(\bm u)) \cdot (1-1/d)$ or $\H(g_J(\bm u) \mid \bm u_{[s] \setminus I}) \ge \H(g_J(\bm u))
        \cdot (1/d)$, so in the former case \ref{item:decrease-by-1} is satisfied and in the latter
        \ref{item:drop-to-1} is. \label{item:third-step}
\end{enumerate}
In the steps \eqref{item:first-step} and \eqref{item:second-step} we need to make sure that the entropy
\emph{rate} of $g_J(\bm u)$ does not drop too much compared to the entropy \emph{rate} of $f(\bm u)$. In
the step \eqref{item:third-step} the key is to use the fact that the conditional entropy
\emph{concentrates} i.e. not only conditioning on a \emph{random variable $\bm u_I$} does not reduce it
too much, but conditioning on the event $\bm u_I = \bm\beta$ also does not reduce it too much whp over
$\bm\beta \sim \bm u_I$.

\begin{figure}
    \centering
    \tikzset{
    rectangle-set/.style = {
        #1,
        pattern = {Lines[angle = 45, distance = 1pt, line width = 0.2pt]},
        pattern color = #1,
        thick,
        rounded corners = 3pt
    }
}
\begin{tikzpicture}
    \def\a{4.5}
    \def\b{3}
    \def\d{3}

    \draw[thin, densely dashed, rounded corners = 1pt] (-\a, \d) rectangle (\a, \d + 0.3);
    \draw[thin, dashed, rounded corners = 1pt] (-\b, 0) rectangle (\b, 0.3);

    \node at (\a + 0.5, \d + 0.65) {$[n]^s$};
    \node at (\b + 0.5, 0.65) {$[n]^m$};

    \node[inner sep = 3pt] (I) at (0, \d + 2) {$I$};
    \path[name path = temp] (-\a, \d + 0.4) -- ++(2 * \a, 0);

    \foreach \x [count = \i] in {0.03, \b - 0.5, 4.97}{
        \draw[rectangle-set = {ForestGreen}] (-\b + \x, 0.02) rectangle ++(1, 0.26);
        \node[in-out-node = {ForestGreen}] (a\i) at (-\b + \x + 0.5, \d + 0.15) {};

        \fill[ForestGreen, path fading = south] (a\i) -- (-\b + \x, 0.3) -- ++(1, 0) -- (a\i);

        \path[name path = a] (I) -- (a\i);
        \path [name intersections = {of = temp and a, by = tt}];
        \draw[thick, ->] (I) -- (tt);
    }

    \draw[thick, VioletRed] (a1) .. controls ++(0, -3) and (0, 3) .. (0, 0.3);

    \node[text centered, text depth = 0.1cm, minimum height = 0.5cm] at (-\b + 0.03 + 0.5, -0.3)
        {$J_{1}$};
    \node[text centered, text depth = 0.1cm, minimum height = 0.5cm] at (0, -0.3) {$J_{\ell / 2}$};
    \node[text centered, text depth = 0.1cm, minimum height = 0.5cm] at (-\b + 4.97 + 0.5, -0.3)
        {$J_{\ell}$};
    
    \node at (-\a - 2, 0) {};
    \node at (\a + 2, 0) {};
\end{tikzpicture}
    \caption{\small The picture illustrates the choice of $\bm J$. We first sample the set $\bm I \subseteq [\ell]$ of input cells and take to $\bm J$ only the trees that query a symbol in $\bm I$ as their first query. The undesirable events for us are \textbf{non-first} queries from $\bm J$ to $\bm I$, it is represented as the red line in the picture. The main point is that for a fixed $i \in [\ell]$ and $j \in [m]$ this happens with probability $\alpha^2$, but the expected size of $\bm J$ is an $\alpha$-fraction of $[m]$. Hence, the subsampling procedure sparsifies the undesirable events.}
    \label{fig:distinguishing-first-queries}
\end{figure}

\paragraph{First step: isolating first queries.} 
First, we are going to choose some trees in $f_i$ such that the set of input cells $I$ that are queried first by $f_i$ is unlikely to be queried as a non-first query by \emph{any} of the chosen trees. 

Let us partition $f_1, \dots, f_m$ into subsets $J_1 \sqcup \dots \sqcup J_\ell = [m]$ such that trees indexed with $J_i$ first query the input cell $i$ (wlog we may assume that the first queries form the set $[\ell]$). By the Lipschitzness assumption we have that $|J_i| \le 2\mu$ for every $i \in [\ell]$, hence $\ell \ge m/(2\mu)$. Now let $\bm I \subseteq [\ell]$ be a random set where each element from $[\ell]$ is included independently with probability $\alpha = 1/\log^6 n$. 
Let $\bm J \coloneqq \bigcup_{i \in \bm I} J_i$ be the set of output cells that first query an input cell from $\bm I$. 

We would like to argue that the expectation over $\bm I$ of the expectation over $\bm u$ of the number of non-first queries $f_{\bm J}$ make to $\bm I$ is low, see \cref{fig:distinguishing-first-queries} for the illustration. Let $p_{ij} \coloneqq \Pr[f_i(\bm u) \text{ queries } j]$. Then
\[ \Exp\left[\sum_{a \neq b \in \bm I} \sum_{(i,j) \in J_a \times J_b} p_{ij}\right] = \sum_{a \neq b \in [\ell]} \Pr[a \in \bm I] \Pr[b \in \bm I] \sum_{(i,j) \in J_a \times J_b} p_{ij} \le \alpha^2 md. \]
We now argue that there exists $I \subseteq [\ell]$ and corresponding $J \coloneqq \bigcup_{i \in I} J_i$ satisfying three conditions:
\begin{enumerate}[noitemsep, label=(R\arabic*)]
    \item $\H(f_J(\bm u)) \ge (1 - 1/d) \cdot \alpha \H(f(\bm u))$. \label{item:retaining-entropy}
    \item $|J| \le (1 + 1/d) \cdot \alpha m$. \label{item:not-too-many-retained}
    \item $\Exp[\text{ number of non-first queries to $I$ that $f_J$ make }] \le \alpha^{-1/2} d \cdot \alpha^2 md$. \label{item:expectation}
\end{enumerate}
\begin{claim}
    Conditions \ref{item:retaining-entropy}-\ref{item:expectation} are satisfied by $\bm I$ and $\bm J$ with positive probability.
\end{claim}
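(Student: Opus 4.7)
The plan is a standard probabilistic-method argument: show that each of (R1), (R2), (R3) individually fails with probability $o(1)$ over the random choice of $\bm I$ (and hence of $\bm J$), then union-bound to conclude that the three hold simultaneously with probability $1-o(1)$, which is in particular positive.

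Condition (R3) is the easiest. The excerpt already bounded the double expectation (over $\bm I$ and $\bm u$) of the count of non-first queries to $\bm I$ by $\alpha^2 m d$, so Markov's inequality applied to the inner expectation, viewed as a function of $\bm I$, immediately yields failure probability at most $\alpha^{1/2}/d = o(1)$.

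Condition (R2) will follow from a second-moment argument. I would write $|\bm J| = \sum_{i \in [\ell]} |J_i| \cdot \mathbbm{1}[i \in \bm I]$ as an independent sum with mean $\alpha m$. Using the Lipschitz bound $|J_i| \le 2\mu$, the variance is at most $\alpha \sum_i |J_i|^2 \le 2\alpha\mu m$, and Chebyshev then bounds the probability of a deviation exceeding $\alpha m / d$ by $2 \mu d^2 / (\alpha m)$, which is $o(1)$ for the parameters here ($\mu \le n^{0.3}$, $\alpha \ge \log^{-6} n$, $m \ge n^{0.99}$, $d = o(\sqrt{\log n})$).

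The one nontrivial step is (R1), because $\H(f_{\bm J}(\bm u))$ is not obviously a sum of independent contributions in $\bm I$. My plan is to pass to a linear proxy via the chain rule. Fixing any ordering of $[m]$, set $h_j := \H(f_j(\bm u) \mid f_{<j}(\bm u))$, so that $\sum_j h_j = \H(f(\bm u)) =: H \ge c m \log n$. For every $J \subseteq [m]$,
\[
\H(f_J(\bm u)) \;\ge\; \sum_{j \in J} h_j,
\]
because conditioning on the full prefix $f_{<j}$ (a superset of $f_{J \cap [j-1]}$) only shrinks the conditional entropy. It therefore suffices to control $Y := \sum_{j \in \bm J} h_j$. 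Grouping by buckets gives $Y = \sum_{i \in [\ell]} a_i \mathbbm{1}[i \in \bm I]$ with $a_i := \sum_{j \in J_i} h_j \le |J_i| \log n \le 2\mu \log n$; hence $\Exp[Y] = \alpha H$ and $\mathrm{Var}(Y) \le \alpha \sum_i a_i^2 \le 2\alpha\mu \log n \cdot H$. Chebyshev then gives $\Pr[Y < (1-1/d)\alpha H] \le 2\mu d^2 \log n / (\alpha H) = o(1)$ using $H \ge c m \log n$ with $c = \omega(1/\log n)$, and since $\H(f_{\bm J}(\bm u)) \ge Y$, condition (R1) fails with probability $o(1)$. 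The main obstacle is precisely finding an additive proxy for the entropy; once the chain-rule lower bound is available, everything else is routine second-moment concentration and a union bound.
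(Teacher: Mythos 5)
Your proposal is correct, and for the key condition \ref{item:retaining-entropy} it takes a genuinely different route from the paper. The paper applies Shearer's inequality as a black box to get only $\Exp_{\bm J}[\H(f_{\bm J}(\bm u))] \ge \alpha \H(f(\bm u))$, and then, because $\H(f_{\bm J}(\bm u))$ is not a sum of independent terms, it must resort to a reverse-Markov argument (using the upper bound $\H(f_{\bm J}(\bm u)) \le |\bm J|\log n$ together with the Hoeffding bound on $|\bm J|$) which yields \ref{item:retaining-entropy} only with probability $p \ge c/(3d) = \omega(1/(d\log n))$; the claim then follows by checking that this small probability exceeds $\sqrt{\alpha}/d + \exp(-n^{\Omega(1)})$, the failure probabilities of \ref{item:not-too-many-retained} and \ref{item:expectation}. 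Your linear proxy $Y = \sum_{j\in\bm J} h_j$ with $h_j = \H(f_j(\bm u) \mid f_{<j}(\bm u))$ essentially inlines the standard proof of Shearer/Han: the chain rule plus monotonicity of conditional entropy gives the pointwise bound $\H(f_J(\bm u)) \ge \sum_{j\in J} h_j$ for every fixed $J$, and since $Y$ is a sum of independent bucket contributions $a_i\lb i\in\bm I\rb$ with $a_i \le 2\mu\log n$, Chebyshev gives \ref{item:retaining-entropy} with probability $1-o(1)$ (your parameter check goes through: $2\mu d^2\log n/(\alpha c m \log n) = o(1)$ given $\mu \le n^{0.3}$, $m \ge n^{0.99}$, $\alpha = \log^{-6} n$, $c = \omega(1/\log n)$). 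This buys a strictly stronger conclusion — all three conditions hold with probability $1-o(1)$ rather than merely positive probability — and removes the delicate balancing of a tiny success probability against the other failure probabilities. Your use of Chebyshev instead of Hoeffding for \ref{item:not-too-many-retained} gives a weaker ($o(1)$ rather than $\exp(-n^{\Omega(1)})$) bound, but that exponential bound is only used internally in the paper's own argument for \ref{item:retaining-entropy}, so nothing downstream is lost.
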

\begin{proof} 
 By Markov's inequality $\bm I$ satisfies the condition \ref{item:expectation} with probability $1-\sqrt{\alpha}/d$. 
 
  Let us compute the probability that \ref{item:not-too-many-retained} is not satisfied by $\bm J$. For an event $A$, let $\lb A \rb$ denote the random variable that is $1$ if $A$ occurs and $0$ otherwise. Then
\begin{align*} \Pr[|\bm J| > (1+1/d) \alpha m] &= \Pr\left[\sum_{i \in [\ell]} |J_i| \lb i \in \bm I \rb > (1+1/d) \alpha m\right]\\
&= \Pr\left[\sum_{i \in [\ell]} |J_i| \lb i \in \bm I \rb - \Exp[|\bm J|] > 1/d \cdot \alpha m\right]\\
\text{(Hoeffding's inequality) }&\le \exp(-\Omega((\alpha/d)^2 \cdot m^2 / \sum_{i \in [\ell]} |J_i|^2))\\
&= \exp(-\Omega(\alpha^2 m/(\mu d)^2))\\
\text{(since $m \ge n^{0.99}$, and $\mu \le n^{0.3}$) }&=\exp(-n^{\Omega(1)}).
\end{align*}
Now we turn to \ref{item:retaining-entropy}. The key step is to show that the entropy is retained in expectation: $\Exp_{\bm J}[\H(f_{\bm J}(\bm u))] \ge \alpha \H(f(\bm u))$. This holds by Shearer's inequality:
\begin{lemma}[Shearer's Inequality \cite{CGFS86}]
    \label{lem:shearer}
    Suppose $\bm S \subseteq [n]$ is a distribution over subsets of $[n]$ such that for every $i \in [n]$ $\Pr[i \in \bm S] \ge \kappa$. Then for any random variable $\bm x \sim \Sigma^n$ we have
    \[ \H(\bm x) \le \frac{1}{\kappa} \Exp[H(\bm x_{\bm S})].\]
\end{lemma}

On the other hand
\begin{align*}
\Exp\nolimits_{\bm J}[\H(f_{\bm J}(\bm u))] 
\le&\;\hphantom{+} \Pr[\H(f_{\bm J}(\bm u)) < (1-1/d) \cdot \alpha \H(f(\bm u))] \cdot (1 - 1/d) \cdot \alpha \H(f(\bm u)) 
\\&+ \Pr[\H(f_{\bm J}(\bm u)) \ge (1-1/d) \cdot \alpha \H(f(\bm u)) \land |\bm J| \le 2\alpha m] \cdot 2 \alpha m \log n \\
&+ \Pr[|\bm J| > 2 \alpha m] \cdot \H(f(\bm u))\\
\le&\, (1-1/d) \cdot \alpha \H(f(\bm u)) + p \cdot 2 \alpha m \log n + \exp(-n^{\Omega(1)}),
\end{align*}
where $p \coloneqq \Pr[\H(f_{\bm J}(\bm u)) \ge (1 - 1/d) \cdot  \alpha \H(f(\bm u))]$. By rearranging we get
\[ \frac{\H(f(\bm u)) / d - \exp(-n^{\Omega(1)})}{2 m \log n} \le p. \] 
 Since $\H(f(\bm u)) = c m \log n$ we then get $p \ge c/3d = \omega(1/(d \log n))$, so since $\omega(1/(d \log n))- \sqrt{\alpha}/d - \exp(-n^{\Omega(1)}) = \omega(1/(d \log n)) > 0$ there exists $I$ satisfying \ref{item:retaining-entropy}, \ref{item:not-too-many-retained}, \ref{item:expectation}. 
\end{proof}

\paragraph{Second step: pruning the trees.}
Having $I \subseteq [\ell]$ and $J \subseteq [m]$ satisfying the conditions \ref{item:retaining-entropy}-\ref{item:expectation} we now define trees $g_j\colon [n]^s \to [n] \cup \{\bot\}$ for each $j \in J$ as follows: $g_j$ follows the behavior of $f_j$ until it is about to query an input cell from $I$ in which case it returns $\bot$. 
\begin{claim} 
\label{claim:small-drop-when-adding-bot}
$\H(g_J(\bm u)) \ge \H(f_J(\bm u)) - O(\alpha^{3/2} m d^2 \log n) \ge (1-1/d) \H(f_J(\bm u))$. 
\end{claim}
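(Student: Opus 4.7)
The plan is to prove the claim in two moves: a chain-rule bound that converts the gap $\H(f_J(\bm u)) - \H(g_J(\bm u))$ into a controllable error term measuring the expected number of pruned coordinates, and then a parameter check against conditions \ref{item:retaining-entropy} and \ref{item:expectation}.

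For the first move I would exploit the coordinate-wise relation $g_j(x) \in \{f_j(x), \bot\}$. Let $N(\bm u) := |\{j \in J : g_j(\bm u) = \bot\}|$. Once $g_J(\bm u) = v$ is revealed, every non-$\bot$ coordinate of $f_J(\bm u)$ is determined by $v$, and each $\bot$-coordinate lies in $[n]$, so $|\supp(f_J(\bm u) \mid g_J(\bm u) = v)| \le n^{N(v)}$. This gives the conditional-entropy bound $\H(f_J(\bm u) \mid g_J(\bm u)) \le \Exp[N(\bm u)] \log n$, and combined with the standard chain-rule inequality
\[
\H(f_J(\bm u)) \;\le\; \H\bigl(g_J(\bm u), f_J(\bm u)\bigr) \;=\; \H(g_J(\bm u)) + \H(f_J(\bm u) \mid g_J(\bm u)),
\]
it yields $\H(g_J(\bm u)) \ge \H(f_J(\bm u)) - \Exp[N(\bm u)] \log n$.

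To control $\Exp[N(\bm u)]$ I would appeal to \ref{item:expectation}: by construction of $g_j$, the event $g_j(\bm u) = \bot$ occurs precisely when $f_j$ on input $\bm u$ reads a cell of $I$ after its first query, so every pruned tree contributes at least one non-first query to $I$. Hence $N(\bm u)$ is dominated by the total number of non-first queries that $f_J$ makes to $I$ on input $\bm u$, and \ref{item:expectation} gives $\Exp[N(\bm u)] \le \alpha^{3/2} m d^2$, establishing the first inequality.

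For the second inequality it suffices to check $\alpha^{3/2} m d^2 \log n \le (1/d) \H(f_J(\bm u))$. Condition \ref{item:retaining-entropy} together with the hypothesis $\H(f(\bm u)) \ge c m \log n$ gives $\H(f_J(\bm u)) \ge (1-1/d) \alpha c m \log n$, so the inequality reduces (up to the harmless $1-1/d$ factor, which is at least $1/2$ since the lemma is applied for $d \ge 2$) to $\sqrt{\alpha}\,d^3 \lesssim c$. Since $\alpha = 1/\log^6 n$, $c = \omega(1/\log n)$, and the ambient regime from \cref{lem:high-entropy-case} enforces $d = o(\sqrt{\log n}/\log\log n)$, this holds with ample slack. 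The whole argument is essentially the chain-rule bookkeeping in the first move followed by the parameter match; I do not foresee any substantive obstacle beyond making sure the support-size bound and the accounting of pruned trees vs.\ non-first queries to $I$ are correctly aligned.
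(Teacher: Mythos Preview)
Your proposal is correct and follows essentially the same approach as the paper: bound $\H(f_J(\bm u)) - \H(g_J(\bm u))$ by the entropy of the ``missing'' coordinates, control that entropy via the expected number of pruned coordinates using \ref{item:expectation}, and then verify the parameter inequality using \ref{item:retaining-entropy} and the choice $\alpha = 1/\log^6 n$. The only cosmetic difference is that the paper introduces an auxiliary variable $\bm b$ (recording $f_j(\bm u)$ at the pruned positions) and invokes \cref{fact:mixture-of-rv} to bound $\H(\bm b)$, whereas you bound $\H(f_J(\bm u)\mid g_J(\bm u))$ directly via the support-size estimate; these are the same computation.
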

\begin{proof}
    The second inequality is satisfied since 
    \[\alpha^{3/2} m d^2 \log n = \alpha m d^2 / \log^2 n \le \alpha m / \log n = o(\alpha c m \log n / d) = o(\H(f_J(\bm u)) / d).\]
    
    We now prove the first inequality. Consider a random variable $\bm b \in ([n] \cup \{\bot\})^J$ such that $\bm b_j = \bot$ if $g_j(\bm u) \neq \bot$ and $\bm b_j = f_j(\bm u)$ if $g_j(\bm u) = \bot$. Then $f_J(\bm u)$ is uniquely determined by $\bm b$ and $g_J(\bm u)$, so $\H(f_J(\bm u)) \le \H(g_J(\bm u)) + \H(\bm b)$. Then, since the expected number of non-$\bot$ symbols in $\bm b$ is bounded by \ref{item:expectation}, we conclude since by \cref{fact:mixture-of-rv} 
    \(\H(\bm b) \le  \log n (2 + 2\alpha^{3/2} m d^2).\)
\end{proof}
\paragraph{Third step: restricting the inputs.}
By \cref{fact:entropy-with-condition} we either have $\H(g_J(\bm u) \mid \bm u_I) \ge \H(g_J(\bm u)) \cdot (1-1/d)$ or $\H(g_J(\bm u) \mid \bm u_{[s] \setminus I})\ge \H(g_J(\bm u)) \cdot (1/d)$. Suppose that the former holds, then, since $g$ is average-$\mu$-Lipschitz and $\mu \le n^{0.3}$, \cref{lem:simplified-entropy-concentration} implies
\begin{align*}
 \Pr_{\bm r \sim \bm u}[\H(g_J(\bm u) \mid \bm u_I = \bm r_I) \ge (1-2/d) \H(g_J(\bm u))] &\ge\\
  \Pr_{\bm r \sim \bm u}[|\H(g_J(\bm u) \mid \bm u_I = \bm r_I) -\H(g_J(\bm u) \mid \bm u_I)| \le (1/d) \H(g_J(\bm u))]  &\ge\\
  \Pr_{\bm r \sim \bm u}[|\H(g_J(\bm u) \mid \bm u_I = \bm r_I) -\H(g_J(\bm u) \mid \bm u_I)| \le n^{0.9}]&\ge 1 - \exp(-n^{\Omega(1)}).
 \end{align*}
 The second inequality holds since $\H(g_J(\bm u)) / d \ge m / \poly(\log n) \gg n^{0.9}$. If $\H(g_J(\bm u) \mid \bm u_{[s] \setminus I})\ge \H(g_J(\bm u)) \cdot (1/d)$ holds we apply \cref{lem:simplified-entropy-concentration} analogously. Then either \ref{item:decrease-by-1} or \ref{item:drop-to-1} is satisfied since $\H(g_J(\bm u)) \ge (1-1/d) \H(f_J(\bm u))$ by \cref{claim:small-drop-when-adding-bot}.

\subsection{Entropy after assignment}
\label{sec:entropy-fact-proof}
In this section, we prove \cref{fact:entropy-with-condition}, the proof is a simple chain rule computation. 

\entaftcond*
We first prove it in the special case of $\ell=2$:
\begin{fact}
\label{fact:entropy-with-condition2}
Suppose $\bm x_1$, $\bm x_2$, and $\bm z$ are independent and $\bm y = f(\bm x_1, \bm x_2, \bm z)$ then 
\[\H(\bm y \mid \bm z) \le \H(\bm y \mid \bm x_1, \bm z) + \H(\bm y \mid \bm x_2, \bm z).\]
\end{fact}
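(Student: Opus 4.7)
The plan is to reduce the inequality to the statement $\I(\bm y; \bm x_2 \mid \bm z) \le \H(\bm y \mid \bm x_1, \bm z)$, which in turn follows from the chain rule for mutual information together with the conditional independence $\bm x_1 \perp \bm x_2 \mid \bm z$ (implied by the full independence of the triple $(\bm x_1, \bm x_2, \bm z)$).

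First I would rewrite the left-hand side as a mutual information: $\H(\bm y \mid \bm z) - \H(\bm y \mid \bm x_2, \bm z) = \I(\bm y; \bm x_2 \mid \bm z)$, so it suffices to show $\I(\bm y; \bm x_2 \mid \bm z) \le \H(\bm y \mid \bm x_1, \bm z)$. Next, I would apply the chain rule by inserting $\bm x_1$ into the first argument of the mutual information:
\[
\I(\bm y; \bm x_2 \mid \bm z) \;\le\; \I(\bm y, \bm x_1;\, \bm x_2 \mid \bm z) \;=\; \I(\bm x_1; \bm x_2 \mid \bm z) \;+\; \I(\bm y; \bm x_2 \mid \bm x_1, \bm z).
\]
The first inequality is just monotonicity of mutual information under adjoining more variables. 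The first term on the right vanishes because $\bm x_1$ and $\bm x_2$ are independent (also jointly with $\bm z$, so conditional independence holds too), and the second term is bounded above by $\H(\bm y \mid \bm x_1, \bm z)$ since mutual information is always at most the entropy of either argument.

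Combining these steps yields $\H(\bm y \mid \bm z) - \H(\bm y \mid \bm x_2, \bm z) \le \H(\bm y \mid \bm x_1, \bm z)$, which on rearrangement is exactly the fact. I do not foresee a real obstacle here; the only thing that needs explicit mention is that the hypothesis that $\bm x_1, \bm x_2, \bm z$ are \emph{mutually} independent is used to kill $\I(\bm x_1; \bm x_2 \mid \bm z)$, and that the conclusion does \emph{not} require $\bm y$ to be a function of $(\bm x_1, \bm x_2, \bm z)$ — the inequality holds for any random variable $\bm y$ on the same probability space (the $\bm y = f(\bm x_1, \bm x_2, \bm z)$ hypothesis is only needed for the subsequent $\ell$-term version via induction).
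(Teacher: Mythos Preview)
Your proof is correct and is essentially the paper's argument rephrased in mutual-information language: the paper's key inequality $\H(\bm x_2 \mid \bm y, \bm x_1, \bm z) \le \H(\bm x_2 \mid \bm y, \bm z)$ is exactly your monotonicity step $\I(\bm y; \bm x_2 \mid \bm z) \le \I(\bm y, \bm x_1; \bm x_2 \mid \bm z)$, and both proofs use the independence hypothesis only as $\I(\bm x_1; \bm x_2 \mid \bm z) = 0$. Your remark that $\bm y = f(\bm x_1, \bm x_2, \bm z)$ is not actually needed is correct and mildly sharpens the statement; the paper's own chain-rule computation also survives if one replaces the equality $\H(\bm x_1, \bm x_2 \mid \bm z) = \H(\bm x_1, \bm x_2, \bm y \mid \bm z)$ by the inequality $\le$.
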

\begin{proof} We write
    \begin{align*}
        \H(\bm x_1 \mid \bm z) + \H(\bm x_2 \mid \bm z) &= \H(\bm x_1, \bm x_2 \mid \bm z) \\&= \H(\bm x_1, \bm x_2, \bm y \mid \bm z)\\
      \text{(chain rule for Shannon entropy) }  &= \H(\bm y \mid \bm z) + \H(\bm x_1 \mid \bm y, \bm z) + \H(\bm x_2 \mid \bm y, \bm x_1, \bm z) \\
      \text{(entropy decreases with conditioning) }  &\le \H(\bm y \mid \bm z) + \H(\bm x_1 \mid \bm y, \bm z) + \H(\bm x_2 \mid \bm y, \bm z)\\
        &=\H(\bm y, \bm x_1 \mid \bm z) + \H(\bm y, \bm x_2 \mid \bm z) - \H(\bm y \mid \bm z) 
    \end{align*}
By  applying $\H(\bm y, \bm x_i \mid \bm z) = \H(\bm y \mid \bm x_i, \bm z) + \H(\bm x_i \mid \bm z)$ for $i \in \{1,2\}$ and rearranging we get the claim.
\end{proof}
Now we can derive \cref{fact:entropy-with-condition}.
\begin{proof}[Proof of \cref{fact:entropy-with-condition}]
Applying \cref{fact:entropy-with-condition2} with empty $\bm z$ we get:
\[
    \H(f(\bm x))  \le \H(f(\bm x) \mid \bm x_1) + \H(f(\bm x) \mid \bm x_{[d] \setminus 1}).
\]
Then we continue to rewrite $\H(f(\bm x) \mid \bm x_1) \le \H(f(\bm x) \mid \bm x_1, \bm x_2) + \H(f(\bm x) \mid \bm x_1, \bm x_{[n] \setminus \{1,2\}})$ again using \cref{fact:entropy-with-condition2} with $\bm z = \bm x_1$. We then get the claim by a simple induction.
\end{proof}

\section{Lipschitz Decision Forests}
\label{sec:lipschitzness}
Two of our technical lemmas are in fact some form of concentration, \cref{lem:simplified-entropy-concentration} is precisely the concentration of conditional entropy, and \cref{lem:containment} can be seen as the concentration of \emph{distance from a random point to a set}. In both cases we reduce the question to the McDiarmid's inequality.

\paragraph{McDiarmid's inequality.}
A function $f\colon \Lambda^s \to \mathbb{R}$ has \emph{$c$-bounded differences over $S \subseteq \Lambda^s$} if for every $x,x' \in S$ that differ in one coordinate we have $|f(x) - f(x')| \le c$.

The following concentration inequality is well-known and can be found, for example, in \cite{Combes15}. 

\begin{lemma}[McDiarmid's inequality]
\label{lem:mc-diarmid}
    Suppose $f\colon \Lambda^s \to \mathbb{R}$ has $c$-bounded differences over $S$ and for $\bm u \sim [n]^s$ we have $\Pr[\bm u \in S] = 1-\delta$. Then
    \[ \Pr[|f(\bm u) - \Exp[f(\bm u) \mid \bm u \in S]| \ge \lambda + \delta c s] \le 2(\delta + \exp(-\Omega(\lambda^2/(c^2 s))).\]
\end{lemma}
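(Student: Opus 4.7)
The plan is to reduce to the classical McDiarmid concentration inequality, which assumes $c$-bounded differences hold \emph{everywhere} on $\Lambda^s$, by replacing $f$ with a surrogate $\tilde{f}$ that is globally $c$-Lipschitz with respect to the Hamming metric $d_H$ and coincides with $f$ on essentially all of $S$. The natural candidate is the Lipschitz envelope
\[
  \tilde{f}(x)\;\coloneqq\;\inf_{y\in S}\bigl(f(y)+c\cdot d_H(x,y)\bigr).
\]
A one-line triangle-inequality argument on $d_H$ shows that $\tilde{f}$ has $c$-bounded differences over the entire domain, so the classical McDiarmid inequality applies and yields
\[
  \Pr\bigl[\,|\tilde{f}(\bm u)-\Exp[\tilde{f}(\bm u)]|\ge \lambda\,\bigr]\;\le\;2\exp\bigl(-\Omega(\lambda^2/(c^2 s))\bigr).
\]

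Next I would relate $\tilde{f}$ back to $f$. Clearly $\tilde{f}(x)\le f(x)$ on $S$ (take $y=x$ in the infimum). I would argue that in fact $\tilde{f}(x)=f(x)$ on a subset $S'\subseteq S$ with $\Pr[\bm u\in S']\ge 1-\delta$: the remaining points of $S$ are those where the infimum is realized via a Hamming-shortcut through $S^c$, and the total probability of such shortcut-points can be charged against the measure $\delta$ of the forbidden region. I would then bound the mean-shift $|\Exp[\tilde{f}(\bm u)]-\Exp[f(\bm u)\mid \bm u\in S]|$ by $\delta cs$: fixing any anchor $x_S\in S$, global $c$-Lipschitzness of $\tilde{f}$ gives $|\tilde{f}(x)-\tilde{f}(x_S)|\le cs$, and the two expectations differ only through the weight-$\delta$ event $\{\bm u\notin S\}$.

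Combining, on the event $\{\bm u\in S'\}$, which has probability at least $1-\delta$, we have $f(\bm u)=\tilde{f}(\bm u)$. Thus the concentration of $\tilde{f}$ around $\Exp[\tilde{f}(\bm u)]$ transfers to concentration of $f$ around $\Exp[f(\bm u)\mid \bm u\in S]$, with $\delta$ added to the failure probability (accounting for $\bm u\notin S'$) and $\delta cs$ added to the deviation threshold (absorbing the mean shift). This produces exactly the bound stated in the lemma.

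The main obstacle is the second step: controlling the exceptional subset of $S$ on which $\tilde{f}<f$ because of shortcuts passing through $S^c$. If this envelope argument turns out to be technically fragile, the backup plan is a direct analysis of the Doob exposure martingale $M_i=\Exp[f(\bm u)\mid \bm u_{\le i}]$: the increments $|M_i-M_{i-1}|$ are at most $c$ on a conditional event of measure $1-\delta$ (using the bounded-differences hypothesis on $S$), so truncating the rare bad increments and applying Azuma--Hoeffding to the truncated martingale yields the same bound up to constants, with the $\delta$ and $\delta cs$ terms arising respectively as the union-bound cost of the bad events and the maximum distortion they can introduce into the martingale's terminal value.
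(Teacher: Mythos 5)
The paper never proves this lemma---it is quoted with a pointer to Combes' note, and your envelope construction $\tilde f(x)=\inf_{y\in S}(f(y)+c\,d_H(x,y))$ is exactly the argument in that reference, including the mean-shift bound $|\Exp[\tilde f(\bm u)]-\Exp[\tilde f(\bm u)\mid \bm u\in S]|\le \delta c s$ via global $c$-Lipschitzness of $\tilde f$. So the route is the right one. The genuine gap is precisely the step you flagged, and it cannot be repaired as you propose: with the paper's definition of ``$c$-bounded differences over $S$'' (only pairs of points of $S$ differing in \emph{one} coordinate are constrained), the set of $x\in S$ where $\tilde f(x)<f(x)$ can have probability close to $1$ even when $\delta$ is tiny, because nothing controls $|f(x)-f(y)|$ for $x,y\in S$ whose one-coordinate-at-a-time paths all leave $S$. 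Concretely, take $\Lambda=\{0,1\}$, let $S$ be all strings of Hamming weight $\neq s/2$ (so $\delta=\Theta(1/\sqrt{s})$), and let $f$ be $0$ below the middle layer and $M$ above it: every adjacent pair inside $S$ lies on the same side, so $f$ has $c$-bounded differences over $S$ for arbitrarily small $c$, yet $f$ is not concentrated at all around $\Exp[f(\bm u)\mid \bm u\in S]\approx M/2$. This shows both that the ``charge the shortcut points to $\delta$'' idea fails and that the statement is simply false under the adjacent-pair reading of the hypothesis; your martingale backup plan runs into the same obstruction (the bad increments cannot be truncated at cost $\delta cs$ when $f$ has no a priori range bound).

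The fix is to use the hypothesis in the form the cited reference (and the paper's applications) actually intend: $|f(x)-f(y)|\le c\cdot d_H(x,y)$ for \emph{all} $x,y\in S$, not just adjacent pairs. Under that pairwise condition the envelope satisfies $\tilde f=f$ on all of $S$ exactly (taking $y\in S$ in the infimum gives $f(y)+c\,d_H(x,y)\ge f(x)$), no exceptional subset $S'$ is needed, and the rest of your argument---classical McDiarmid for $\tilde f$, the $\delta cs$ mean shift, and a union bound with the event $\bm u\notin S$---yields the stated inequality verbatim. So your write-up should either strengthen the hypothesis to the pairwise Hamming--Lipschitz form or verify that the sets $S$ arising in the paper's applications allow coordinate-by-coordinate interpolation inside $S$; as written, the flagged step is a real gap rather than a technicality.
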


The bounded difference property is very similar in spirit to Lipschitzness, the topic of this section. We restate the definition for convenience:
\lipschitzdef*

This concept was studied (with somewhat different notation) by Beck, Impagliazzo, and Lovett \cite{BIL12}, who proved a concentration inequality for Lipschitz forests.
\begin{theorem}[{\cite[Theorem~1.7]{BIL12}}]
\label{thm:avg-Lipschitz-concetration}
Suppose that $f\colon \{0,1\}^s \to \{0,1\}^m$ is an average-$\mu$-Lipschitz depth-$d$ decision forest. Then
\[ \Pr_{\bm u \sim \{0,1\}^s}\left[\Big|\sum_{i \in [n]} f_i(\bm u) - \Exp\big[\sum_{i \in [n]} f_i(\bm u)\big]\Big| > d \cdot \sqrt{\mu \cdot n \cdot \log(d^4/\varepsilon)}\right] \le \varepsilon.\]
\end{theorem}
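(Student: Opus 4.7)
I would combine Lemma~\ref{lem:average-Lipschitz-to-Lipschitz} (which turns average Lipschitzness into pointwise Lipschitzness almost everywhere) with a variance-aware concentration inequality applied to the Doob martingale of $g(\bm u)\coloneqq\sum_{i\in[n]}f_{i}(\bm u)$. First, invoke Lemma~\ref{lem:average-Lipschitz-to-Lipschitz} with failure parameter $\delta'=\varepsilon/(4s)$ to conclude that $f$ is $(C,\delta')$-Lipschitz for $C=O(\mu d^{2}\log(s/\varepsilon))$. A union bound over the $s$ input cells then shows that the good event $\mathcal G\coloneqq\{u:\theta_{j}(u)\le C\text{ for every }j\in[s]\}$ has probability at least $1-\varepsilon/4$, and on $\mathcal G$ resampling a single coordinate $j$ can affect only the trees that query $j$ on the old or the new input, so it changes $g$ by at most $2C$.

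Next, I would bound the variance of $g$ via the Efron--Stein inequality. On $\mathcal G$ the summand $(g(\bm u)-g(\bm u^{(j)}))^{2}$ is at most $2C\cdot(\theta_{j}(\bm u)+\theta_{j}(\bm u^{(j)}))$, and using the pointwise bound $\sum_{j}\theta_{j}(\bm u)\le nd$ (the total path length of all $n$ trees on any input), this yields $\operatorname{Var}(g)\le O(C\cdot nd)$. Plugging this variance together with the $2C$-bounded-difference property on $\mathcal G$ into Freedman's martingale Bernstein inequality for the coordinate-exposure Doob martingale $M_{t}=\Exp[g\mid\bm u_{1},\ldots,\bm u_{t}]$ produces a tail bound of the form
\[\Pr\bigl[|g-\Exp g|>\lambda\bigr]\;\le\;2\exp\!\left(-\frac{\lambda^{2}}{O(Cnd)+C\lambda}\right)+\varepsilon/4,\]
and choosing $\lambda=\Theta\!\bigl(d\sqrt{\mu n\log(d^{4}/\varepsilon)}\bigr)$ essentially completes the argument, modulo the dependence on $d$ and $\log s$ discussed below.

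\textbf{Main obstacle.} The tricky point is matching the precise $d\sqrt{\mu n\log(d^{4}/\varepsilon)}$ rate: going through Lemma~\ref{lem:average-Lipschitz-to-Lipschitz} as above introduces an unwanted $\log s$ factor (hidden inside~$C$) and a $\sqrt{d}$ slack (because the Efron--Stein bound has an extra factor of $C$ rather than the purely average~$\mu$). Both losses must be removed to obtain a rate independent of the input dimension~$s$. The natural fix is to replace the coordinate-by-coordinate martingale with one that processes the trees' decision paths \emph{query by query}: there are at most $nd$ query steps in expectation, the average variance contribution per step is controlled directly by $\mu$ (not by~$C$), and the tree structure supplies the stopping rule that saves the extra $\sqrt{d}$. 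This refined martingale construction is the technical core of~\cite{BIL12} and would be the hardest part of a full proof.
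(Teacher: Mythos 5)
This statement is not proved in the paper at all: it is quoted verbatim as \cite[Theorem~1.7]{BIL12}, and the authors immediately explain that they cannot even use it black-box, proving instead the weaker-but-sufficient \cref{lem:second-moment} by a $k$-th moment induction over restricted forests (no martingales). So the relevant comparison is whether your sketch actually establishes the stated bound, and by your own admission it does not. The theorem claims a deviation of $d\sqrt{\mu n\log(d^4/\varepsilon)}$ that is completely independent of the input dimension $s$ and depends on the \emph{average} Lipschitz parameter $\mu$. Your route through \cref{lem:average-Lipschitz-to-Lipschitz} converts $\mu$ into a pointwise bound $C=O(\mu d^2\log(s/\varepsilon))$, and Efron--Stein plus Freedman then yields a tail at scale roughly $\sqrt{Cnd\log(1/\varepsilon)}=d^{3/2}\sqrt{\mu n}\cdot\mathrm{polylog}(s/\varepsilon)$ --- off by a $\sqrt{d}$ factor and carrying a $\log s$ that the theorem forbids. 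The paragraph where you propose to fix this (a query-by-query martingale whose per-step variance is governed by $\mu$ rather than $C$) is exactly the technical core of the BIL12 proof, and you explicitly defer it rather than carry it out. That deferred construction \emph{is} the proof; what remains in your write-up is a plan plus an acknowledgment that the plan, as executed, proves a quantitatively weaker statement.

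Two further points you gloss over would need care even for the weaker bound. First, Efron--Stein and Freedman require control of differences/increments globally, not only on the good event $\mathcal G$: the Doob martingale increments are conditional expectations that average over the bad event, so you need a truncation or conditioning argument (compare the paper's \cref{lem:mc-diarmid}, which pays an additive $\delta c s$ term for exactly this reason --- a term that would reintroduce $s$-dependence here). Second, your $2C$ bounded-difference claim needs \emph{both} $\bm u$ and $\bm u^{(j)}$ to lie in $\mathcal G$, which again forces the exceptional-set bookkeeping. If your goal is a self-contained argument in the spirit of this paper rather than of \cite{BIL12}, the moment-induction proof of \cref{lem:second-moment} (restrict the forest along a fixed transcript, bound how much the expectation can grow, and induct on the moment order) is the template to follow; it sidesteps martingales and exceptional sets entirely, at the price of proving a one-sided, additive-form bound rather than the two-sided $\sqrt{\mu n}$-scale inequality stated in the theorem.
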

Unfortunately, we cannot use their results in a black-box way, the first reason is that they are stated only for the binary alphabet, whereas we need it for \emph{exponential-size} alphabet, and the second is because of the $\sqrt{n}$ multiplier in the deviation bound. We would like to apply such concentration bound in the case of low expectation, so this multiplier would be too weak. For these reasons we prove the following simpler deviation bound:
\begin{lemma}
\label{lem:second-moment}
    Suppose that $f\colon \Lambda^s \to [0,1]^m$ is a depth-$d$ average-$\mu$-Lipschitz decision forest.
    Then if $\kappa \coloneqq \Exp\left[\sum_{i \in [m]} f_i(\bm u)\right]$, we have for every $\varepsilon > 0$
    \[\Pr\left[\sum_{i \in [m]} f_i(\bm u) \ge 2(\kappa + \log (1/\varepsilon) d \mu)\right] \le \varepsilon.\]
\end{lemma}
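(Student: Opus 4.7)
The plan is to prove this concentration bound via the Chernoff method: bound the moment generating function $\Exp[e^{\lambda S}]$ of $S \coloneqq \sum_{i\in[m]} f_i(\bm u)$ and apply Markov's inequality. Setting $\lambda = 1/(2d\mu)$, it suffices to establish
\[
\Exp[e^{\lambda S}] \;\le\; e^{2\lambda\kappa},
\]
since then Markov gives $\Pr[S \ge t] \le e^{2\lambda\kappa - \lambda t}$, and substituting $t = 2\kappa + 2\log(1/\varepsilon)d\mu$ yields exactly $\varepsilon$.

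To bound the MGF I would set up a Doob martingale along a fixed ordering of the input cells: $M_j \coloneqq \Exp[S \mid \bm u_{\le j}]$ with increments $D_j \coloneqq M_j - M_{j-1}$, so that $e^{\lambda S} = e^{\lambda\kappa}\prod_j e^{\lambda D_j}$. Iterating conditional expectations then reduces the task to bounding each $\Exp[e^{\lambda D_j} \mid \bm u_{<j}]$ separately. The key structural observation, already used in the proof of \cref{lem:Lipschitzness-after-conditioning}, is that whether tree $i$ queries cell $j$ depends only on $\bm u_{<j}$ (along the path) and not on $\bm u_j$ itself; hence re-sampling $\bm u_j$ changes the outputs of at most $\bm\theta_j$ trees, each by at most $1$. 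Thus the range of $D_j$ is pointwise controlled by the conditional expectation of $\bm\theta_j$ given $\bm u_{<j}$, and a Bennett/Bernstein-style MGF bound for bounded-range martingale increments applies.

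After applying Bennett coordinatewise and telescoping the product, the proof reduces to showing $\sum_j \Exp[\operatorname{Var}(D_j \mid \bm u_{<j})] \le O(\kappa \, d\mu)$. This is the main technical obstacle: the naive estimate $\sum_j \Exp[\bm\theta_j^2]$ can be as large as $s\mu^2$, which would introduce a useless dependence on the number of cells $s$. The required refinement is to exploit both the deterministic bound $\sum_j \bm\theta_j \le md$ (each of the $m$ trees makes at most $d$ queries) \emph{and} the pointwise average-Lipschitz bound $\Exp[\bm\theta_j] \le \mu$, arguing that the increments align with the contributions of individual trees to $S$ rather than spreading across all $s$ cells --- concretely, that the "variance budget" of cell $j$ scales with the expected $f_i$-mass of trees that actually query $j$. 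Once the summed conditional-variance bound is in hand, the Bennett calculation with $\lambda = 1/(2d\mu)$ delivers $\Exp[e^{\lambda S}] \le e^{2\lambda\kappa}$ and hence the lemma.
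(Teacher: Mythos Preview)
Your proposal has a genuine gap. The structural claim that ``whether tree $i$ queries cell $j$ depends only on $\bm u_{<j}$'' is false for an arbitrary fixed ordering of the cells: a tree's computation path may pass through cells with indices both smaller and larger than $j$ before it reaches $j$, so $\bm\theta_j$ is \emph{not} $\bm u_{<j}$-measurable. (What is true is that $\bm\theta_j$ does not depend on $\bm u_j$ itself, but that is much weaker.) Hence the Doob increment $D_j$ is not pointwise controlled by any $\bm u_{<j}$-measurable quantity that the hypothesis $\Exp[\bm\theta_j]\le\mu$ would let you bound, and the Bennett step as you set it up does not go through. You also acknowledge that the summed-conditional-variance bound $\sum_j \Exp[\operatorname{Var}(D_j\mid\bm u_{<j})]\le O(\kappa d\mu)$ is the main obstacle, but the proposed resolution (``variance budget aligned with $f_i$-mass'') is not an argument; without the measurability claim above there is no reason a cell-by-cell Doob decomposition should avoid the $s\mu^2$ blow-up you yourself flag.

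The paper avoids ordering the cells altogether and bounds the $k$-th moment of $S$ directly by induction on $k$. It writes $\Exp[S^k] = \sum_i \Exp[f_i(\bm u) S^{k-1}]$, conditions on the full transcript $\tau_i(\bm u)=\alpha$ of tree $i$, and then \emph{modifies} the remaining forest by replacing every query to a cell in $\alpha$ by the constant leaf $1$. The modified forest no longer depends on the conditioning event, is still average-$\mu$-Lipschitz, and has its expectation inflated by at most $d\mu$ (at most $d$ cells in $\alpha$, each queried $\le\mu$ times in expectation). The induction then gives $\Exp[S^k]\le(\kappa+kd\mu)^k$, and Markov with $k=\log(1/\varepsilon)$ finishes. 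This decoupling-via-transcript trick, which respects the adaptive tree structure instead of fighting it, is precisely the idea your martingale approach is missing.
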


As a corollary of \cref{lem:second-moment} we get, following the simplified proof of \cite[Corollary~1.8]{BIL12} that

\avgliptolip*
\begin{proof}
    As above, for every $j \in [s]$ let $\bm \theta_j$ be number of trees in $f$ querying $j$ on the input $\bm u$.
    Let $g^{\ell,j}_i$ be the depth-$\ell$ decision tree that returns $1$ if the $\ell$-th query of $f_i$ is to $j$ and $0$ otherwise. Clearly $g^{\ell, j}$ is obtained by pruning $f$ up to the $\ell$-th layer and replacing all $j$-labels with $1$ and all others with $0$. 
    
    Then $\bm \theta_j = \sum_{\ell \in [d]} \sum_{i \in [n]} g^{\ell,j}_i(\bm u)$. The forest $g^{\ell, j}_i$ for fixed $j$ and $\ell \in [d]$ , $i \in [n]$ is average-$d\mu$-Lipschitz, since every tree in the forest is obtained by pruning a tree in $f$ and every tree from $f$ corresponds to $d$ trees in the forest. Thus, by \cref{lem:second-moment} we have 
    \[ \Pr[\bm \theta_j \ge 2(\mu + \log(1/\varepsilon) d^2 \mu)] \le \varepsilon,\]
    which implies the claim.
\end{proof}

\subsection{Proof of \cref{lem:second-moment}}
     We denote $\tau_i(\bm u) \in (\Lambda \times [s])^d$ the transcript of running $f_i$ on $\bm u$ (the queries and the outcomes). We abuse notation to denote by $f_i(\alpha)$ the value of $f_i$ given the transcript $\alpha \in (\Lambda \times [s])^d$. 

     We now estimate the $k$-th moment of $\sum_{i \in [m]} f_i(\bm u)$ by induction on $k$. We are going to prove the statement by induction on $k$. Let $F(\kappa, k)$ be the maximum of $\Exp[(\sum_{i \in [m]} h_i(\bm u))^k]$ over all depth-$d$ average-$\mu$-Lipschitz decision forests $h$ with $\Exp[\sum_{i \in [m]} h_i(\bm u)] = \kappa$. The base of induction is $k=1$ where trivially $F(\kappa, 1) = \kappa$.
    \begin{align*}
        \Exp\left[\Big(\sum_{i \in [m]} f_i(\bm u)\Big)^k \right]
        &= 
        \sum_{i,j_1,\dots,j_{k-1} \in [m]} \Exp[f_i(\bm u) f_{j_1}(\bm u) \cdot \dots \cdot f_{j_{k-1}}(\bm u)]\\
        &=
        \sum_{i \in [m]} \sum_{\alpha \in (\Lambda \times [s])^d} \Pr[\tau_i(\bm u) = \alpha] f_i(\alpha) \cdot \Exp\left[\Big(\sum_{j \in [m]} f_j(\bm u)\Big)^{k-1} \,\middle|\, \tau_i(\bm u) = \alpha\right]\\
     \end{align*}
     Let us now estimate $\Exp\left[(\sum_{j \in [m]} f_j(\bm u))^{k-1} \mid \tau_i(\bm u) = \alpha\right]$ for fixed $i,\alpha$.
    Consider a decision forest $g^\alpha$ such that $g_i^\alpha$ is a copy of $f_i$ where all queries to $\alpha$ are replaced with leaves labeled with $1$, in other words, when we are about to query something in $\alpha$, we return $1$ instead. Such a transformation will preserve the $\mu$-Lipschitz property, as for every $j$, the number of queries can only decrease. We then further estimate the expectation
    \begin{align*}
        \Exp\left[\Big(\sum_{j\in [m]} f_j(\bm u)\Big)^{k-1} \,\middle|\, \tau_i(\bm u) = \alpha\right] &\le \Exp\left[\Big(\sum_{j\in [m]} g_j^\alpha(\bm u)\Big)^{k-1} \,\middle|\, \tau_i(\bm u) = \alpha\right] \\
      \text{ (as $g_j^\alpha$ never query $\alpha$) } &=
        \Exp\left[\Big(\sum_{j\in [m]} g_j^\alpha(\bm u)\Big)^{k-1}\right] \\
        &\le F\left(\Exp\left[\sum_{j \in [m]} g^\alpha_j(\bm u)\right], k-1\right)
    \end{align*}
    Now let us compute the new expectation:
    \begin{align*}
         \Exp\left[\sum_{j\in [m]} g_j^\alpha(\bm u)\right] &= 
         \Exp\left[\sum_{j\in [m]} (g_j^\alpha(\bm u) - f_j(\bm u))\right] + \kappa \\
         &\le d \mu + \kappa.
    \end{align*}
    Here we use that $(g_j^\alpha(\bm u) - f_j(\bm u)) \neq 0$ only if $f_j$ queries $\alpha$, hence the expectation of the sum of these over $j \in [n]$ is bounded by the expected number of queries to $\alpha$, which is at most $d\mu$. Putting all together we get
    \begin{align*} \Exp\left[\Big(\sum_{i \in [m]} f_i(\bm u)\Big)^k\right] &\le F(d \mu + \kappa, k-1) \cdot \sum_{i \in [m];\; \alpha \in (\Lambda \times [s])^d} f_i(\alpha)\Pr[\tau_i(\bm u) = \alpha]\\
    &\le F(d \mu + \kappa, k - 1) \cdot \kappa\\
    &\le (\kappa + k d \mu)^k.
    \end{align*}
    Finally, by Markov inequality
     \[\Pr\left[\sum_{i \in [m]} f_i(\bm u) \ge a\right] = \Pr\left[\Big(\sum_{i \in [m]} f_i(\bm u)\Big)^k \ge a^k\right] \le \frac{\Exp\left[\Big(\sum_{i \in [m]} f_i(\bm u)\Big)^k\right]}{a^k} \le \left(\frac{\kappa + kd\mu}{a}\right)^k\]
    and we get the desired inequality by substituting $a = 2(\kappa + d\mu\cdot\log (1/\varepsilon))$ and $k = \log (1/\varepsilon)$.

\subsection{Conditional entropy concentration}\label{subsec:cond-entr}
The main property of Lipschitz forests is that conditional entropy concentrates in the following sense:
\begin{lemma}
\label{lem:first-entropy-concentration}
    Suppose $f\colon [n]^s \to \Sigma^m$ is an $(\mu, \delta)$-Lipschitz depth-$d$ decision forest with $\mu = m^{\Theta(1)}$ and $\delta = \exp(-m^{\Omega(1)})$. Let $I \subseteq [s]$ be a subset of the input cells. Then for $\bm r$ and $\bm u$ uniformly distributed over $[n]^s$ we have
    \[ \Pr[|\H(f(\bm u) \mid \bm u_I = \bm r_I) - \H(f(\bm u) \mid \bm u_I)| > \lambda \log(m|\Sigma|) \sqrt{\mu m d}] \le \exp(-\Omega(\lambda^2)) + \exp(-m^{\Omega(1)}).\]
\end{lemma}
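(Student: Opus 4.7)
The plan is to realize $h(r) := \H(f(\bm u) \mid \bm u_I = r_I)$ as a function on $[n]^I$ with $\Exp_{\bm r}[h(\bm r)] = \H(f(\bm u) \mid \bm u_I)$, and to prove that $h$ concentrates around its expectation via a martingale/Bernstein-type argument. Three ingredients: (i) a bounded-differences estimate for $h$ via coupling, (ii) a high-probability ``good set'' in $[n]^I$ on which the coordinate-wise bound is sharp, and (iii) global variance control that beats the naive McDiarmid denominator.

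For (i), given $r, r' \in [n]^I$ differing only at coordinate $j$, I would couple the two conditional distributions $f(\bm u) \mid \bm u_I = r_I$ and $f(\bm u) \mid \bm u_I = r'_I$ by sharing the randomness on $\bm u_{[s]\setminus I}$. A tree $f_i$ produces identical outputs under this coupling unless it queries cell $j$ on at least one of the two inputs, so the set $B$ of differing output coordinates satisfies $\Exp|B| \le \phi_j(r) + \phi_j(r')$, where $\phi_j(r) := \Exp_{\bm u_{[s]\setminus I}}[Q_j(r_I, \bm u_{[s]\setminus I})]$ and $Q_j(\bm u)$ counts the trees that query $j$ on $\bm u$. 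Combining $\H(X) \le \H(Y) + \H(B) + \Exp[|B|]\log|\Sigma|$ with a mixture-type bound $\H(B) \le O(\log m + \Exp|B|\log m)$ (in the spirit of Fact~\ref{fact:mixture-of-rv}) yields
\[
|h(r) - h(r')| \;\le\; O\bigl((\phi_j(r) + \phi_j(r'))\log(m|\Sigma|)\bigr).
\]

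For (ii), define $T := \{\,r \in [n]^I : \Pr_{\bm u_{[s]\setminus I}}[\exists j : Q_j(r_I, \bm u) > \mu] \le \mu/m\,\}$. The $(\mu,\delta)$-Lipschitz hypothesis and a union bound give $\Pr_{\bm u}[\exists j : Q_j(\bm u) > \mu] \le s\delta$, and Markov then yields $\Pr[T^c] \le s\delta\cdot(m/\mu) = \exp(-m^{\Omega(1)})$ in the stated regime $\mu = m^{\Theta(1)}$, $\delta = \exp(-m^{\Omega(1)})$. On $T$ we get $\phi_j(r) \le \mu + m\cdot(\mu/m) = 2\mu$ for every $j$. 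Combined with the deterministic capacity bound $\sum_j \phi_j(r) \le md$ (each of the $m$ trees performs at most $d$ queries per input), this produces the key inequality $\sum_j \phi_j(r)^2 \le 2\mu m d$ valid on $T$.

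For (iii), reveal the coordinates of $\bm r$ one at a time and form the Doob martingale $M_k = \Exp[h(\bm r) \mid \bm r_{j_1}, \ldots, \bm r_{j_k}]$. The estimate from (i) bounds each increment $|M_k - M_{k-1}|$ by $O(\log(m|\Sigma|))$ times the conditional expectation of $\phi_{j_k}$, and the total predictable quadratic variation is dominated by $O(\log^2(m|\Sigma|))\sum_j \phi_j(\bm r)^2 \le O(\mu m d\log^2(m|\Sigma|))$ on $T$. A Freedman/Bernstein inequality for martingales (restricted to the event $T$, with $T^c$ absorbed into the $\exp(-m^{\Omega(1)})$ failure term) then gives the advertised $\exp(-\Omega(\lambda^2))$ tail. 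The main obstacle is precisely this last step: vanilla McDiarmid with a uniform $c$ over $[n]^I$ would pay $c^2|I|$, which can be as large as $\mu^2 \cdot m n^{d-1}$ and therefore miss the sharp $\mu m d$ factor we need; exploiting the \emph{global} query budget $\sum_j \phi_j(r) \le md$ through the predictable variation of the martingale, rather than a pointwise bound on each $\phi_j$, is what rescues the exponent.
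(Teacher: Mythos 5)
Your route is genuinely different from the paper's. The paper first \emph{clusters} the coordinates of $I$ into $\ell \le 3md/\mu$ blocks, each with expected query load at most $\mu$, so that the conditional-entropy function on the clustered domain has $O(\mu \log (m|\Sigma|))$-bounded differences over a high-probability set (via \cref{lem:entropy-deviation} and \cref{cor:entropy-bd-differences}, which carry the same content as your coupling step (i)), and then a single application of McDiarmid with an exceptional set (\cref{lem:mc-diarmid}) over these $\ell$ coordinates gives $c^2\ell = O(\mu\, m d \log^2(m|\Sigma|))$, i.e.\ the advertised deviation scale. You instead keep all original coordinates and try to beat the naive $c^2|I|$ loss with a Doob martingale and Freedman's inequality, controlling the predictable variation by $\sum_j \phi_j(\bm r)^2 \le (\max_j\phi_j)\sum_j\phi_j \le 2\mu\cdot md$. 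Both arguments exploit the same global budget $\sum_j \phi_j \le md$; yours trades the clustering trick for a heavier concentration inequality.

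There are, however, concrete soft spots in your version. First, step (i) as written does not yield the increment bound you then use: the entropy of the disagreement set $B$ carries an additive term ($O(\log m)$ in the mixture-type bound you quote), and for coordinates with $\phi_j \ll 1$ — which is the typical situation when $|I|$ is huge — this term is \emph{not} dominated by $\phi_j\log(m|\Sigma|)$. Keeping it, the quadratic variation acquires an extra $|I|\log^2 m$, which is exactly the blow-up you set out to avoid. The fix is to use the sharper subadditivity bound $\H(B) \le \Exp[|B|]\log\bigl(em/\Exp[|B|]\bigr)$ and then verify $\sum_j \phi_j^2\log^2(1/\phi_j) = O(md)$ (true since $x\log^2(1/x)=O(1)$ on $(0,1)$), so the argument is repairable, but the inequality you state is not the one you need. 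Second, your good set $T$ rests on $s\delta\cdot m/\mu = \exp(-m^{\Omega(1)})$, yet the lemma imposes no bound on $s$ or on $n^d$; you must argue separately that only at most $m\cdot n^{d}$ cells are ever queried and that this count is subexponential in $m$ (true in the paper's applications, not from the hypotheses alone), whereas the paper's clustered proof is dimension-free. Third, running Freedman ``restricted to $T$'' needs care: $\sum_j\phi_j(\bm r)^2$ is not adapted to the revelation filtration, and bad assignments still enter the conditional expectations defining the increments, so a truncation/stopping-time or Lipschitz-extension step is required — the analogue of the exceptional-set McDiarmid the paper invokes. With these repairs your approach goes through and is a legitimate alternative; the paper's clustering simply sidesteps all three issues at the cost of the (mild) alphabet blow-up.
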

Before proceeding to the proof of this lemma, let us derive a simpler form that we use in \cref{sec:main-proof}:
\simplifiedentconc*
\begin{proof}
    Wlog we may assume $m=n$ by adding $n-m$ trivial decision trees to $f$. Then, taking $\lambda=n^{0.1}$, we get $\lambda \log(m |\Sigma|) \sqrt{\mu m d} = \poly(\log n) \cdot n^{0.75} \ll n^{0.9}$. 
\end{proof}

The key step in the proof of \cref{lem:first-entropy-concentration} is to show that the conditional entropy has bounded differences:
\begin{lemma}
\label{lem:entropy-deviation}
    Suppose $f\colon \Lambda^s \to \Sigma^m$ is an average-$\mu$-Lipschitz decision forest.
    Then for $\bm y \sim \Lambda$, $\bm z \sim \Lambda^{s-1}$, and every $y \in \Lambda$ we have with $\bm o \coloneqq f(\bm y, \bm z)$ that
    \[ |\H(\bm o \mid \bm y = y) - \H(\bm o)| \le \log (m+1)+ \mu \log (m|\Sigma|). \]
\end{lemma}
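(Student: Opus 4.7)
The plan is to compare $\bm o = f(\bm y,\bm z)$ with a ``cell-1-avoiding'' truncation of the forest whose output is independent of $\bm y$, and to control the discrepancy via a short error vector using the average-Lipschitz hypothesis.

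First, I would set up the following. WLOG the coordinate carrying $\bm y$ is cell $1$. Define a new forest $g$ by modifying each tree $f_i$ so that the instant its simulation would query cell $1$, it halts and outputs $\bot$. Every query preceding a potential query to cell $1$ goes to some cell in $\{2,\ldots,s\}$, so both the truncated output and the indicator ``$f_i$ queries cell $1$ on $(\bm y,\bm z)$'' are functions of $\bm z$ alone; I write $g(\bm z)$ and $D(\bm z) \subseteq [m]$ for these. The crucial consequence is that the random subset $D(\bm z)$ is independent of $\bm y$, so for every $y \in \Lambda$, $\Exp[|D(\bm z)| \mid \bm y = y] = \Exp[|D(\bm z)|] \le \mu$ by the average-$\mu$-Lipschitzness of $f$ applied at cell $1$.

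Next, I would couple $\bm o$ to $g(\bm z)$ via the error vector $\bm e \in (\Sigma \cup \{\bot\})^m$ defined by $\bm e_i = \bm o_i$ when $i \in D(\bm z)$ and $\bm e_i = \bot$ otherwise. One checks that $\bm o$ is a deterministic function of $(g(\bm z), \bm e)$ (read off $g_i(\bm z)$ where $\bm e_i = \bot$, else $\bm e_i$), and symmetrically $g(\bm z)$ is a deterministic function of $(\bm o, \bm e)$ (set $g_i(\bm z) = \bot$ where $\bm e_i \neq \bot$, else $\bm o_i$). Subadditivity of Shannon entropy then yields $|\H(\bm o) - \H(g(\bm z))| \le \H(\bm e)$, and since $g(\bm z)$ is independent of $\bm y$ so that $\H(g(\bm z) \mid \bm y = y) = \H(g(\bm z))$, the same argument conditional on $\bm y = y$ gives $|\H(\bm o \mid \bm y = y) - \H(g(\bm z))| \le \H(\bm e \mid \bm y = y)$.

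Finally, I would apply \cref{fact:mixture-of-rv} to $\bm e$: its expected non-$\bot$ count is $\Exp[|D(\bm z)|] \le \mu$, giving $\H(\bm e) \le \log(m+1) + \mu \log(m|\Sigma|)$, and the identical bound for $\H(\bm e \mid \bm y = y)$, because the number of non-$\bot$ entries is a function of $\bm z$ alone so conditioning on $\bm y = y$ does not change its distribution. The triangle inequality then produces the claimed bound (up to an absorbed constant factor of $2$, which is immaterial for the applications in \cref{sec:main-proof}).

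The main subtle point -- and the only real obstacle -- is the observation that ``$f_i$ queries cell $1$'' is determined by $\bm z$ alone. Without it, a worst-case $y$ could conceivably blow up the conditional expected support size of $\bm e$, making the bound on $\H(\bm o \mid \bm y = y)$ via $\H(g(\bm z))$ too weak. The remainder of the argument is a mechanical accounting of the coupling and a single invocation of \cref{fact:mixture-of-rv}.
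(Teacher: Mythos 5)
Your proposal is correct and is essentially the paper's argument: the paper also exploits the key observation that whether $f_i$ ever queries the fixed cell is determined by $\bm z$ alone (before the first such query the path only reads $\bm z$), and it also bounds the entropy of a sparse error vector via \cref{fact:mixture-of-rv} using $\Exp[|D(\bm z)|]\le\mu$. The only difference is the routing: you compare both $\H(\bm o)$ and $\H(\bm o\mid \bm y=y)$ to the intermediate truncated forest $g(\bm z)$, paying the error-vector entropy twice and thus proving the bound $2\bigl(\log(m+1)+\mu\log(m|\Sigma|)\bigr)$ rather than the stated constant; this is harmless for every downstream use (e.g.\ \cref{cor:entropy-bd-differences} only needs $O(\mu\log(m|\Sigma|))$-bounded differences), but it is a genuinely weaker inequality than the one claimed. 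The paper avoids the factor $2$ by skipping the intermediary: for the direction $\H(\bm o\mid\bm y=y)\le \H(\bm o)+\cdots$ it reconstructs $f(y,\bm z)$ from $\bm o=f(\bm y,\bm z)$ together with the single error vector recording $f_i(y,\bm z)$ on the trees $i\in D(\bm z)$ (outside $D(\bm z)$ the two outputs agree), and symmetrically for the reverse direction with the error vector recording $f_i(\bm y,\bm z)$ on $D(\bm z)$; each direction then costs only one application of \cref{fact:mixture-of-rv}. If you want the lemma exactly as stated, replace your two-step comparison through $g(\bm z)$ by this direct one-error-vector-per-direction comparison; everything else in your write-up stands as is.
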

\begin{proof}
    Fix some $y \in \Lambda$ and let $\bm o' \coloneqq f(y, \bm z)$. Let $\bm a \in (\Sigma \cup \{\bot\})^m$ be the random variable such that for each $i \in [m]$ we have $\bm a_i = \bm o'_i$ if the first input cell was queried by $f_i$ and $\bm a_i = \bot$ if it was not. Then $\H(\bm o') \le \H(\bm o, \bm a) \le \H(\bm o) + \H(\bm a)$, since $\bm o'$ is uniquely determined by $\bm o$ and $\bm a$. We then observe that the expected number of non-$\bot$ elements in $\bm a$ is at most $\mu$ in expectation: indeed $f$ is average-$\mu$-Lipschitz and fixing the first input cell does not change the expected number of queries to it, since the trees in $f$ query each symbol at most once. Then by \cref{fact:mixture-of-rv} we get $\H(\bm a) \le\log (m+1) + \mu \log(m |\Sigma|) $
    Hence $\H(\bm o \mid \bm y = y) \le \H(\bm o) + \log (m+1) + \mu \log(m |\Sigma|)$. The reverse direction is proved analogously.
\end{proof}

\begin{corollary}
\label{cor:entropy-bd-differences}
    Suppose $f \colon \Lambda^s \to \Sigma^m$ is a $(\mu, \delta)$-Lipschitz decision forest, with $\delta = \exp(-m^{\Omega(1)})$, and $\mu = m^{\Theta(1)}$. Let $I \subseteq [s]$ be a subset of input cells, and let a function $h\colon \Lambda^{I} \to \mathbb{R}$ map $y \mapsto \H(f(\bm u) \mid \bm u_{I} = y)$. Then there exists a set $S \subseteq \Lambda^{I}$ such that $\Pr_{\bm r \sim \Lambda^{I}}[\bm r \in S] \ge 1 - |I| \cdot \sqrt{\delta}$ and $h$ has $O(\mu \log (m |\Sigma|))$-bounded differences over $S$.
\end{corollary}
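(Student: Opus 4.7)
The plan is to define $S$ as the set of partial assignments $y \in \Lambda^I$ such that, for every single coordinate $j \in I$, restricting $f$ to $\bm u_{I \setminus j} = y_{I \setminus j}$ yields a forest in which the \emph{expected} number of trees querying cell $j$ (over random values of the remaining inputs) is at most $O(\mu)$. Once such an $S$ is fixed, I would pick $y, y' \in S$ agreeing outside position $j$, form the restricted forest $g \colon \Lambda^{([s] \setminus I) \cup \{j\}} \to \Sigma^m$ by fixing $\bm u_{I \setminus j} = y_{I \setminus j} = y'_{I \setminus j}$, and invoke \cref{lem:entropy-deviation} twice with $j$ playing the role of the ``first'' coordinate. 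This gives
\[
    |h(y) - h(y')| \;\le\; |\H(g(\bm u') \mid \bm u'_j = y_j) - \H(g(\bm u'))| + |\H(g(\bm u') \mid \bm u'_j = y'_j) - \H(g(\bm u'))|,
\]
which is bounded by $2\log(m+1) + O(\mu) \cdot \log(m|\Sigma|) = O(\mu \log(m|\Sigma|))$.

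To define the good set precisely, let $Q_j(\bm u)$ be the number of trees querying cell $j$ on input $\bm u \sim \Lambda^s$. Since a tree either queries $j$ or not based purely on the values it probes strictly before $j$, $Q_j$ does not depend on $\bm u_j$. By $(\mu, \delta)$-Lipschitzness, $\Pr_{\bm u}[Q_j(\bm u) > \mu] \le \delta$, so Markov's inequality gives
\[
    \Pr_{\bm u_{I \setminus j}}\Bigl[\Pr_{\bm u_{[s] \setminus (I \setminus j)}}[Q_j(\bm u) > \mu \mid \bm u_{I \setminus j}] > \sqrt{\delta}\Bigr] \;\le\; \sqrt{\delta}.
\]
Call $y_{I \setminus j}$ \emph{$j$-good} if the inner probability is at most $\sqrt{\delta}$. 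Since $Q_j \le m$ deterministically, for any $j$-good $y_{I \setminus j}$ the conditional expectation of $Q_j$ is at most $\mu + m\sqrt{\delta} \le 2\mu$, using $\delta = \exp(-m^{\Omega(1)})$ and $\mu = m^{\Theta(1)}$. I then define
\[
    S \;\coloneqq\; \{\, y \in \Lambda^I \;:\; y_{I \setminus j} \text{ is $j$-good for every } j \in I \,\},
\]
and a union bound over $|I|$ coordinates gives $\Pr_{\bm r}[\bm r \in S] \ge 1 - |I|\sqrt{\delta}$, as required.

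Finally, for the bounded-differences step, I would just note that the restricted forest $g$ above is not necessarily average-$O(\mu)$-Lipschitz in all its coordinates, but examining the proof of \cref{lem:entropy-deviation} shows that the only property actually used is the bound on the \emph{expected number of queries to the specific cell being conditioned on}. Precisely this bound is guaranteed by $y_{I \setminus j}$ being $j$-good, so \cref{lem:entropy-deviation} applies with $\mu$ replaced by $2\mu$ to both of $\H(g(\bm u') \mid \bm u'_j = y_j)$ and $\H(g(\bm u') \mid \bm u'_j = y'_j)$, yielding the desired bound.

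The main step that needs care is the first one: one must verify that $Q_j$ truly is a function of $\bm u$ that is independent of $\bm u_j$ (so that Markov over $\bm u_{I \setminus j}$ is valid and conditioning on $y_j = a$ or $y_j = a'$ preserves the bound), and one must confirm that the proof of \cref{lem:entropy-deviation} goes through using only the expected-queries-to-one-cell bound rather than full average Lipschitzness. Both are direct upon inspection, so the remainder of the argument is essentially bookkeeping with union bounds and the triangle inequality.
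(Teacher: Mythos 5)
Your proposal is correct and follows essentially the same route as the paper: a per-coordinate good set obtained by a Markov/averaging argument (failure probability $\sqrt{\delta}$ per coordinate, union bound over $I$), followed by applying \cref{lem:entropy-deviation} to the forest restricted on $I\setminus\{j\}$ and using the triangle inequality. The only deviation is in how the good set is defined: the paper requires the restricted forest to be $(\mu,\sqrt{\delta})$-Lipschitz (via \cref{lem:Lipschitzness-after-conditioning}), hence average-$(\mu+\sqrt{\delta}m)$-Lipschitz, so that \cref{lem:entropy-deviation} applies as a black box, whereas you control only the expected number of queries to the single cell $j$ being changed and therefore must invoke \cref{lem:entropy-deviation} white-box; your observations that its proof uses only the expected query count to the conditioned cell, and that this count does not depend on that cell's value, are accurate, so your weaker good-set condition suffices.
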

\begin{proof}
    Let us fix some $i \in I$ and show that there exists a set $S_i$ with $\Pr[\bm r \in S_i] \ge 1 - \sqrt{\delta}$ such that $h$ has bounded differences in the $i$-th coordinate in the set $S_i$. The claim then follows by the union bound.

    Let $S_i$ be set of $r \in \Lambda^{I}$ such that $f_{r_{I \setminus \{i\}}}$ is $(\mu,\sqrt{\delta})$-Lipschitz, by \cref{lem:Lipschitzness-after-conditioning} the forest $f|_{\bm r_{I \setminus \{i\}}}$ is $(\mu, \sqrt{\delta})$-Lipschitz with probability $1-\sqrt{\delta}$ over the choice of $\bm r \sim \Lambda^{I}$. Then $g_r \coloneqq f|_{r_{I \setminus \{i\}}}$ is also average-$(\mu + \sqrt{\delta}m)$-Lipschitz. Since $\sqrt{\delta}m = O(\mu)$, an application of \cref{lem:entropy-deviation} implies that for every $y \in \Lambda$
    \[ |\H(g_r(\bm y, \bm z) \mid \bm y = y) - \H(g_r(\bm y, \bm z))| \le O(\mu \log (m |\Sigma|)),\]
    where $\bm y \sim \Lambda^{\{i\}}$ and $\bm z \sim \Lambda^{[s] \setminus I}$. Then notice that $h(x,y) = \H(g_y(\bm y, \bm z) \mid \bm y = y)$, so the above shows that $h$ has $O(\mu \log (m |\Sigma|))$-bounded differences in $S_i$ for the $i$-th coordinate, as required. 
\end{proof}

\begin{proof}[Proof of \cref{lem:first-entropy-concentration}]
     First we reduce the number of the input cells in $I$: we cluster it and increase the alphabet, so each cluster becomes a single symbol. The challenge is to do it so that average-$\mu$-Lipschitzness does not suffer. Let $\bm \theta_i$ for $i \in [s]$ be the number of queries to the input cell $i$ the trees in $f$ collectively make on the input $\bm u \sim [n]^s$. Observe that $\Exp[\sum_{i \in [s]} \bm \theta_i] \le md$, since each tree makes at most $d$ queries on the given input. Let $I_1, \dots, I_\ell$ be the partition of $I$ such that $\sum_{i \in I_j} \Exp[\bm \theta_i] \le \mu$ for every $j \in [\ell]$ and $\ell$ is minimized. Then no two sets in $I_1, \dots, I_\ell$ can be united so the property is satisfied, hence $\sum_{i \in I_j} \Exp[\bm \theta_i] \ge \mu / 2$ for all $j \in [\ell]$ except perhaps one, thus $\ell \le 2md / \mu + 1 \le 3md/\mu$. Then we set $\Lambda \coloneqq [n]^{\max_{j \in [\ell]} |I_j|}$ and for every tree in $f$ replace each query to the input cell $i \in I$ with the query to the cluster $I_j \ni i$. This does not affect the depth of the trees and each new input cell is queried at most $\mu$ times in expectation. Thus, the obtained forest $f'\colon \Lambda^\ell \times [n]^{[s] \setminus I} \to [n]^m$ is average-$\mu$-Lipschitz depth-$d$ decision forest such that $f(\bm u) \equiv f'(\bm y, \bm z)$ for $\bm y \sim \Lambda^\ell$, $\bm z \sim [n]^{[s] \setminus I}$. 
    
    Let $h\colon \Lambda^{\ell} \to \mathbb{R}$ map $y\in \Lambda^\ell$ to $\H(f'(y, \bm z)) = \H(f(y', \bm u_{[s] \setminus I}))$, where $y' \in [n]^I$ is the unclustered version of $y$. Then $\Exp[h(\bm r)] = \H(f(\bm u) \mid \bm u_I)$. 
    
    By \cref{cor:entropy-bd-differences} we have that $h$ has $O(\mu \log(m |\Sigma|)$-bounded differences on a set $S \subseteq \Lambda^\ell$ with probability mass $1 - \ell \sqrt{\delta}$. 
    So by \cref{lem:mc-diarmid} we have 
    \begin{align*}
         \Pr_{\bm r \sim [n]^{I}}[|h(\bm r) - \Exp_{\bm y \sim \bm r}[h(\bm y)]| \ge \lambda + o(1)]
         &= \Pr_{\bm r \sim [n]^{I}}[|h(\bm r) - \Exp_{\bm y \sim \bm r}[h(\bm y)]| \ge \lambda + O(\delta\ell\mu \log(m|\Sigma|)) \cdot \ell \sqrt{\delta}]
         \\\text{(by \cref{lem:mc-diarmid}) } &\le
    2\exp\left(-\Omega\left(\frac{\lambda^2}{c^2 \ell}\right)\right) + 2\ell\sqrt{\delta}\\
    &\le \exp\left(-\Omega\left(\frac{\lambda^2}{\mu \log^2(m |\Sigma|) m d}\right)\right) + \exp(-m^{\Omega(1)}).
    \end{align*}
   Replacing $\lambda$ with $\lambda' \coloneqq \lambda + o(1)$ we get the claimed inequality.  
\end{proof}

\subsection{Enforcing average Lipschitzness: proof of \cref{lem:strong-avg-Lipschitz}}
\label{sec:fixing-to-get-lipschitzness}
In this section we show that exhaustively fixing input cells that violate average Lipschitzness eventually to random values eventually makes any bounded depth decision forest average-Lipschitz. In \cite[Claim~3.10]{BIL12} it is shown for average-$\sqrt{n}$-Lipschitzness and above, we show it for the values below $\sqrt{n}$.
We need the following classical fact:
\begin{lemma}[Expected stopping time, see e.g. \cite{KK18}]
\label{lem:stopping-time}
    Suppose $\{\bm x_i\}_{i \in \mathbb{Z}_{>0}}$ is a sequence of random variables. Define the stopping time to be
    \(\bm t \coloneqq \min\{t \in \mathbb{Z}_{>0} \mid \bm x_i \ge N\},\) and assume it is finite. Then whenever $\Exp[\bm x_i - \bm x_{i-1} \mid \bm x_{<i}, i \le \bm t] \ge \varepsilon$, we have $\Exp[\bm t] \le N / \varepsilon$.
\end{lemma}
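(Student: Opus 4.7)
The plan is to prove the lemma via a standard Wald-style telescoping identity applied to the drifted partial sums. Shift the process so that $\bm x_0 = 0$ without loss of generality (replace $N$ by $N - \bm x_0$ otherwise).

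For each finite horizon $n \ge 1$ I would consider the stopped variable $\bm x_{\bm t \wedge n}$ and telescope the increments:
\[
\bm x_{\bm t \wedge n} \;=\; \sum_{i=1}^{n} (\bm x_i - \bm x_{i-1})\,\mathbbm{1}[i \le \bm t].
\]
Because $\bm t$ is a stopping time, the event $\{i \le \bm t\}$ is measurable with respect to $\sigma(\bm x_{<i})$, so the drift hypothesis can be pulled inside via the tower property term-by-term. The outcome is the one-sided optional-stopping estimate
\[
\Exp[\bm x_{\bm t \wedge n}] \;=\; \sum_{i=1}^n \Exp\!\Bigl[\mathbbm{1}[i \le \bm t]\cdot \Exp[\bm x_i - \bm x_{i-1} \mid \bm x_{<i},\, i \le \bm t]\Bigr] \;\ge\; \varepsilon \sum_{i=1}^n \Pr[i \le \bm t] \;=\; \varepsilon\,\Exp[\bm t \wedge n].
\]

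To finish, I would let $n \to \infty$. Almost-sure finiteness of $\bm t$ and monotone convergence give $\Exp[\bm t \wedge n] \nearrow \Exp[\bm t]$ on the right. On the left, the definition of $\bm t$ gives $\bm x_{\bm t - 1} < N$, so in the monotone, bounded-increment settings where this lemma is invoked in the paper one obtains $\Exp[\bm x_{\bm t}] \le N$, and rearranging yields $\Exp[\bm t] \le N/\varepsilon$.

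The main obstacle—really the only point requiring any care—is the potential overshoot of $\bm x_{\bm t}$ above $N$ at the final step. The cleanest workaround is to run exactly the same telescoping argument with $\bm t - 1$ in place of $\bm t$: since $\bm x_{\bm t - 1} < N$ holds unconditionally, this yields $\Exp[\bm t - 1] \le N/\varepsilon$, and the single trailing step is swallowed into the asymptotic constants in every application of the lemma within this paper (where $N/\varepsilon$ is always polynomially large). Everything else is routine bookkeeping; the drift inequality is immediate from the hypothesis.
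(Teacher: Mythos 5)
The paper never actually proves this lemma---it is quoted as a classical drift theorem and delegated to \cite{KK18}---so your proposal can only be judged against the standard argument, whose first half you reproduce correctly: $\{i \le \bm t\}$ is determined by $\bm x_{<i}$, so the telescoping plus the tower property does give $\Exp[\bm x_{\bm t \wedge n}] \ge \varepsilon\,\Exp[\bm t \wedge n]$ (taking $\bm x_0 = 0$). The gap is in how you close it. You rightly flag overshoot as the obstacle, but your fix---rerunning the telescoping with $\bm t - 1$ in place of $\bm t$---is invalid: $\bm t - 1$ is not a stopping time for this filtration, since $\{i \le \bm t - 1\}$ requires $\bm x_i < N$ and hence depends on $\bm x_i$ itself, so the indicator is not $\sigma(\bm x_{<i})$-measurable and the drift hypothesis (which conditions on $\{i \le \bm t\}$) cannot be inserted term by term. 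Worse, conditioning on not having crossed $N$ at step $i$ biases the increment downward, which is exactly the direction that breaks the inequality. Indeed, without an extra hypothesis the statement (and your claim $\Exp[\bm t - 1] \le N/\varepsilon$) is false: take $N = 10$, $\varepsilon = 1$, let $\bm x_1$ be $100$ or $-98$ each with probability $1/2$, and use deterministic $+1$ increments afterwards; the drift condition holds with $\varepsilon = 1$, yet $\Exp[\bm t] = 55$ and $\Exp[\bm t - 1] = 54$, far above $N/\varepsilon = 10$. So the slack is not ``a single trailing step swallowed into constants''; it can be unboundedly large.

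What actually closes the argument---both in the cited reference and in the one place the lemma is used, \cref{lem:getting-avg-Lipschitz}---is a no-overshoot (boundedness) hypothesis. There the process is $\bm x_i = \bm p_0 - \bm p_i \le \bm p_0 = N$ for all $i$, so $\Exp[\bm x_{\bm t \wedge n}] \le N$ for every $n$, your first display gives $\varepsilon\,\Exp[\bm t \wedge n] \le N$, and monotone convergence yields $\Exp[\bm t] \le N/\varepsilon$ with no overshoot analysis at all. Your instinct to appeal to ``the settings where the lemma is invoked'' is the right one, but it should be turned into an explicit hypothesis (e.g.\ $\bm x_i \le N$ for all $i$, or a bound on the increments entering the constant) rather than patched by the $\bm t - 1$ device, which does not follow from the stated drift condition.
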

We first show a weaker version of the lemma, we will then see how to easily boost it to get exponential success probability.
\begin{lemma}
\label{lem:getting-avg-Lipschitz}
    Suppose $f\colon [n]^s \to [n]^m$ is an arbitrary depth-$d$ decision forest. There exists a $nd/(\varepsilon \mu)$-depth decision tree $T$ querying symbols of $[n]^s$ such that for a random leaf $\bm \ell$ of $T$, $f|_{\bm \ell}$ is average-$\mu$-Lipschitz with probability $1-\varepsilon$.
\end{lemma}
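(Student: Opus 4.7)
The plan is to construct $T$ greedily by an adaptive input-fixing procedure and track a natural potential. Define
$$\Phi(\alpha) \;\coloneqq\; \sum_{j\in[s]\setminus\supp(\alpha)} \Exp\!\bigl[\bm\theta_j(f|_\alpha)\bigr],$$
the expected total number of cell-queries made across all $m$ trees of $f|_\alpha$ on a uniformly random remaining input. Since each of the $m$ trees has depth at most $d$, we have $\Phi(\varnothing)\le md$.

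The tree $T$ is built recursively. At an internal node labeled by partial assignment $\alpha$: if $f|_\alpha$ is already average-$\mu$-Lipschitz we mark the node a leaf; otherwise we pick any un-fixed cell $j$ witnessing the violation (i.e.\ $\Exp[\bm\theta_j(f|_\alpha)] > \mu$), add a query to $j$ in $T$, and recurse on each of the $n$ child partial assignments $\alpha\cup\{j\mapsto v\}$. The random leaf $\bm\ell$ of the statement arises from running $T$ on a uniformly random input.

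The heart of the argument is the drift identity
$$\Exp_{\bm v\sim[n]}\bigl[\Phi(\alpha\cup\{j\mapsto \bm v\})\bigr] \;=\; \Phi(\alpha)\;-\;\Exp[\bm\theta_j(f|_\alpha)] \;<\; \Phi(\alpha) - \mu,$$
which holds because each tree queries any given cell at most once along any root-to-leaf path, so the queries to $j$ simply disappear upon fixing $j$, and the marginal distribution on the remaining cells is unchanged. Letting $\bm t$ be the random depth at which the walk hits a leaf and $\Phi_t\coloneqq\Phi(\bm\alpha_t)$, telescoping the drift across the walk gives $\Exp[\Phi_{\bm t}] \le \Phi(\varnothing) - \mu\cdot\Exp[\bm t]$, and nonnegativity of $\Phi$ yields $\Exp[\bm t]\le md/\mu$ (one can alternatively invoke \cref{lem:stopping-time} applied to $\bm x_i\coloneqq\Phi(\varnothing)-\Phi_i$ with $N=\Phi(\varnothing)$).

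Markov's inequality on $\bm t$ then gives $\Pr[\bm t > md/(\varepsilon\mu)] \le \varepsilon$, so truncating $T$ at depth $md/(\varepsilon\mu)$ (which matches the claimed $nd/(\varepsilon\mu)$ in the intended regime $m\le n$) produces the desired tree: any branch that would descend deeper can be cut off arbitrarily, and with probability at least $1-\varepsilon$ the random leaf $\bm\ell$ falls in an uncut branch, so $f|_{\bm\ell}$ is average-$\mu$-Lipschitz. The only potentially delicate point is the drift identity — ensuring that fixing $j$ to a uniform value really removes exactly $\Exp[\bm\theta_j]$ from the potential — but this reduces to the fact that distinct partial assignments refine each other cleanly; no deeper obstacle is expected.
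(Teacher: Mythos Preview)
Your proposal is correct and matches the paper's proof essentially line for line: both track the same potential (expected total number of cell-queries across all trees), greedily fix a cell violating the $\mu$-bound to a uniformly random value, use the resulting $\mu$-drop in the potential to bound the expected stopping time by $md/\mu$, and then truncate via Markov's inequality. Your observation that the claimed depth $nd/(\varepsilon\mu)$ is really $md/(\varepsilon\mu)$ with the intended $m\le n$ is apt, and your drift identity is indeed an exact equality (the paper states the corresponding step as an inequality, which is conservative but unnecessary).
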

\begin{proof}
    We define the tree by describing a random walk from its root to a leaf. Let $\bm a_1, \bm a_2, \dots \in [s]$ be the random variables describing what input cells are fixed at each step of the walk and $\bm b_1, \bm b_2, \dots \in [n]$ describe the values they are fixed to. All $\bm b_i$ are independent and uniform over $[n]$.  Let $f_{\bm a_{<i}, \bm b_{<i}}$ denote the forest with the input cell $\bm a_j$ is fixed to $\bm b_j$ for all $j < i$. Then $\bm a_i$ is defined as an element of $[n] \setminus \{\bm a_1, \dots, \bm a_{i-1}\}$ such that the expectation of the number of queries to it by $f_{\bm a_{<i}, \bm b_{<i}}$ is the largest (breaking ties arbitrarily). If that value is less than $\mu$ the process stops, so in that case $\bm t = i-1$. 

    Let $\bm p_i$ be the expected number of queries made by $f_{\bm a_{\le i}, \bm b_{\le i}}$ on a uniform random input $\bm u \sim [n]^{[s] \setminus \{\bm a_1, \dots, \bm a_i\}}$. We claim that $\Exp[\bm p_{i-1} - \bm p_{i} \mid \bm a_{<i}, \bm b_{<i}] \ge \mu$. Let $\bm q_{i, S}$ be the expected number of queries to $S \subseteq [s]$ made by $f_{\bm a_{\le i}, \bm b_{\le i}}$ on a uniform random input, so $\bm p_i = \bm q_{i, [s]}$. Then
    \begin{align*}
         \Exp[\bm p_{i-1} - \bm p_{i} \mid \bm a_{<i}, \bm b_{<i}] & = \bm p_{i-1} - \Exp[\bm q_{i, [s] \setminus \bm a_i} + \bm q_{i, \bm a_i} \mid \bm a_{<i}, \bm b_{<i}]\\
         \text{($\bm a_i$ is already assigned) } &= \bm p_{i-1}  - \Exp[\bm q_{i, [s] \setminus \bm a_i} \mid \bm a_{<i}, \bm b_{<i}] \\
         &= \Exp[\bm q_{i-1, [s] \setminus \bm a_i} - \bm q_{i, [s] \setminus \bm a_i} + \bm q_{i-1, \bm a_i} \mid \bm a_{<i}, \bm b_{<i}]\\
         \text{(expected number of queries decreases) }&\ge \Exp[\bm q_{i-1, \bm a_i} \mid \bm a_{<i}, \bm b_{<i}]\\
         &\ge \mu.
    \end{align*}
    Applying \cref{lem:stopping-time} to $\bm p_0 - \bm p_i$ we get that $\Exp[\bm t] \le \bm p_0 / \mu \le nd/\mu$. Hence, the expected depth of a leaf of the tree is at most $nd/\mu$. Let us then prune all branches of the tree at depth more than $nd/(\varepsilon \mu)$ and get observe that we cut just $\varepsilon$-fraction of the leaves by Markov's inequality on $\bm t$. 
\end{proof}
We are now ready to prove the stronger version.
\strongavglipschitz*
\begin{proof}
    Applying \cref{lem:getting-avg-Lipschitz} with $\varepsilon=1/2$ we get that there exists a tree of depth $2nd/\mu$-depth decision tree $T_0$ such that for its random leaf $\bm \ell$ we have that $f|_{\bm\ell}$ is average-$\mu$-Lipschitz with probability $1/2$. We say that a leaf $\ell$ is \emph{successful} if $f|_\ell$ is average-$\mu$-Lipschitz, otherwise it is \emph{failed}. Let us construct the tree $T_1$ by taking $T_0$ and for its every leaf $\ell$ such that $f|_{\ell}$ is \emph{not} average-$\mu$-Lipschitz hang a tree $T'$ obtained by applying \cref{lem:getting-avg-Lipschitz} to $f|_{\ell}$ with $\varepsilon=1/2$. Define $T_2, \dots, T_{\log (1/\varepsilon)}$ the same way: $T_i$ is obtained from $T_{i-1}$ by hanging trees given by \cref{lem:getting-avg-Lipschitz} to all its failed leaves.

    Then consider a random walk down $T_{\log (1/\varepsilon)}$: with probability $1/2$ it ends in a successful leaf of $T_0$, conditioned on it passing through a failed leaf of $T_0$ with probability $1/2$ it ends in a successful leaf of $T_1$. Hence with probability $1 - 2^{-\log(1/\varepsilon)} = 1-\varepsilon$ the walk down $T_{\log(1/\varepsilon)}$ terminates in a successful leaf.
\end{proof}

\section{Containment Lemma}
\label{sec:containment}
In this section we prove a lemma that formalizes the intuition that if the sampled distribution has low entropy it must be very far from the target high-entropy distribution. While this is always true in a weak sense (see \cref{lem:simple-containment}), for Lipschitz functions the distance can be boosted to exponentially close to $1$.
\containmentlemma*

Our starting point is \cref{lem:simple-containment}: since $\H(f(\bm u)) \ge m \log n / 4$, the lemma implies that there exists an event $G \subseteq [n]^m$ of size $n^{m/2}$ such that $\Pr[f(\bm u) \in G] \ge 1/2$. In other words, at least half of the elements in $[n]^s$ are mapped to a set of outputs of size at most $n^{m/2}$. Our goal is to enlarge this preimage so that it covers almost the entire $[n]^s$, while keeping the image size relatively small, as shown in \cref{fig:low-entropy-error-boosting}. 
To proceed further, we will need the following lemma, which we will prove later:
\begin{lemma}
    \label{lem:coupling}
    Let $T\colon \Sigma^M \to \{0, 1\}$ be a depth-$k$ decision tree with
    $\mu \coloneqq \Pr_{\bm x\sim \Sigma^M}[T(\bm x)=1]$. Then, there exists $2$-dimensional
    distribution $\mathcal{D}$ with marginals uniform over $\Sigma^M$ and $T^{-1}(1)$ respectively such that
    \[
        \Exp_{(\bm x, \bm y)\sim\mathcal{D}}[\dist(\bm x, \bm y)] \leq O\left(\sqrt{k\log(1/\mu)}\right).
    \]
\end{lemma}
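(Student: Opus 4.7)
The plan is to build the coupling recursively following the structure of the decision tree, and to control the expected Hamming distance using Pinsker's inequality together with the chain rule for KL divergence. Let $v_0$ be the root of $T$, querying coordinate $i_0$, and for each $\sigma \in \Sigma$ let $T_\sigma$ be the subtree under answer $\sigma$ with acceptance probability $\mu_\sigma$ (so $\mu = \tfrac{1}{|\Sigma|}\sum_\sigma \mu_\sigma$). Under the target distribution (uniform on $T^{-1}(1)$) the marginal of $\bm y_{i_0}$ is $q(\sigma) = \mu_\sigma/(|\Sigma|\mu)$, while under the source the marginal of $\bm x_{i_0}$ is uniform. I will first take the optimal (maximal) coupling of these two marginals at the root; then, conditional on $\bm y_{i_0} = \sigma$, I recursively couple $\bm x_{-i_0}$ (uniform on $\Sigma^{M-1}$) with $\bm y_{-i_0}$ (uniform on $T_\sigma^{-1}(1)$). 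Because the coordinates of the source are independent, $\bm x_{-i_0}$ remains uniform on $\Sigma^{M-1}$ regardless of $\bm x_{i_0}$, so the overall marginal on $\bm x$ stays uniform on $\Sigma^M$ throughout the recursion.

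Writing $D(T)$ for the expected distance under this coupling, the recursion reads $D(T) = \Delta(q,\mathrm{unif}) + \sum_\sigma q(\sigma)\, D(T_\sigma)$, with base case $D = 0$ for leaves. Pinsker's inequality gives $\Delta(q,\mathrm{unif}) \le \sqrt{\DKL(q\|\mathrm{unif})/2}$, so unrolling the recursion by induction produces
\[
    D(T) \;\le\; \sum_{v} w_v \sqrt{\mathrm{KL}_v /2},
\]
where the sum ranges over internal nodes $v$ of $T$, $w_v$ is the probability that $\bm y$'s root-to-leaf path passes through $v$, and $\mathrm{KL}_v$ is the local KL at $v$ between the $Q$- and $P$-conditional distributions over the next query outcome; here $P$ denotes the leaf distribution induced by uniform $\bm x$ (at each node, uniform on $\Sigma$) and $Q$ the leaf distribution induced by uniform $\bm y$.

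To finish, I apply Cauchy--Schwarz:
\[
    D(T)^2 \;\le\; \Big(\sum_v w_v\Big)\Big(\sum_v w_v\, \mathrm{KL}_v/2\Big).
\]
The first factor equals $\Exp_Q[d_{\bm L}]$, the expected depth of the leaf reached by $\bm y$, which is at most $k$ since $T$ has depth $k$. The second factor is $\tfrac{1}{2}\DKL(Q\|P)$ by the chain rule for KL divergence. A direct computation on the leaf distributions $Q(\ell) = |\Sigma|^{-d_\ell}/\mu$ (on accepting leaves) and $P(\ell) = |\Sigma|^{-d_\ell}$ yields $\DKL(Q\|P) = \log(1/\mu)$. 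Combining, $D(T)^2 \le k\log(1/\mu)/2$, which is the desired bound. The main subtlety I expect is the careful bookkeeping at each recursive step: verifying that the marginals of the coupling remain exactly correct and that the local KLs collected by the induction line up precisely with the chain-rule decomposition of $\DKL(Q\|P)$.
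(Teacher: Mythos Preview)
Your proof is correct and follows essentially the same approach as the paper's: both construct the coupling by walking down the tree along $\bm y$'s path, applying the optimal coupling at each node, and both bound the expected distance via Pinsker plus Cauchy--Schwarz. The only difference is cosmetic: the paper phrases the telescoping via entropy ($k\log|\Sigma|-\H(\bm p)=\log(1/\mu)$, after padding $T$ to full depth), whereas you phrase it directly via the chain rule for $\DKL(Q\|P)=\log(1/\mu)$, which sidesteps the padding step.
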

The set $f^{-1}(G)$ can be recognized by a decision tree $T'\colon [n]^s \to \{0,1\}$ of depth $nd$: $T(x)$ just computes $f_i(x)$ one by one and then accepts iff the resulting vector $(f_1(x), \dots, f_n(x))$ is in $G$. Then we are in a position to apply \cref{lem:coupling} with $\mu \coloneqq 1/2$: let $(\bm u, \bm y)$ be coupled with $\bm u \sim [n]^s$, $\bm y \sim f^{-1}(G)$ and
\begin{equation}
    r \coloneqq O(\sqrt{nd})\geq \Exp\left[\dist(\bm u, \bm y)\right] \geq
    \Exp\left[\min_{ y \in f^{-1}(G)}\dist(\bm u,  y)\right] = \Exp[\dist(\bm u, f^{-1}(G))], \label{eq:radius-def}
\end{equation}
At this point we use Lipschitzness. First, we apply \cref{lem:average-Lipschitz-to-Lipschitz} to get that $f$ is in fact $(n^{0.2}, \delta)$-Lipschitz for $\delta = \exp(-n^{\Omega(1)})$. 
We now claim that we can use Lipschitzness to say that $f(\bm u)$ is close to $G$ in expectation.
\begin{claim} \label{claim:lipschitz-distance}
    $\Exp[\dist(f(\bm u), G)] = \Exp[\min_{o \in G} \dist(f(\bm u), o)] \le O(n^{0.8})$.
\end{claim}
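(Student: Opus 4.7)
The plan is to invoke the coupling $(\bm u, \bm y) \sim \mathcal{D}$ from \cref{lem:coupling} already set up in \eqref{eq:radius-def}, so that $\bm u$ is uniform over $[n]^s$, $\bm y$ is uniform over $f^{-1}(G)$, and $\Exp[\dist(\bm u, \bm y)] \le r = O(\sqrt{nd})$. Since $f(\bm y) \in G$, we have $\dist(f(\bm u), G) \le \dist(f(\bm u), f(\bm y))$, and it therefore suffices to upper bound $\Exp[\dist(f(\bm u), f(\bm y))]$.

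The first step I would take is to prove the purely combinatorial inequality
\[
    \dist(f(u), f(y)) \le \sum_{j \in \Delta(u, y)} \bm\theta_j(u)
\]
for every pair $u, y \in [n]^s$, where $\Delta(u, y) \coloneqq \{j : u_j \neq y_j\}$. This holds because whenever $f_i(u) \neq f_i(y)$ the decision-tree computations of $f_i$ on $u$ and $y$ must diverge, and the first point of divergence has to be a query to some cell $j \in \Delta(u, y)$ (at every other queried cell the two computations see the same answer). Hence each tree $i$ contributing to $\dist(f(u), f(y))$ is responsible for at least one query of $f_i$ to some $j \in \Delta(u, y)$ on input $u$.

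Given this, I would close the proof using the previously established $(n^{0.2}, \delta)$-Lipschitz property with $\delta = \exp(-n^{\Omega(1)})$. Let $B \coloneqq \{u \in [n]^s : \exists j,\ \bm\theta_j(u) > n^{0.2}\}$. Since $d = \log^{O(1)} n$ and $f$ has $n$ trees of depth $d$, the number of input cells is $s \le n^{d+1} = n^{\polylog(n)}$, and a union bound over $j \in [s]$ gives $\Pr_{\bm u}[\bm u \in B] \le s \delta = \exp(-n^{\Omega(1)})$. Splitting the expectation,
\[
    \Exp\!\left[\sum_{j \in \Delta(\bm u, \bm y)} \bm\theta_j(\bm u)\right] \le n^{0.2} \cdot \Exp[|\Delta(\bm u, \bm y)|] + \Pr[\bm u \in B] \cdot md \le n^{0.2} \cdot r + \exp(-n^{\Omega(1)}),
\]
where on $B$ we use the crude bound $\sum_j \bm\theta_j(u) \le md$. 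With $r = n^{1/2 + o(1)}$ this yields $O(n^{0.7 + o(1)}) \le O(n^{0.8})$, as claimed.

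The step I expect to be the main obstacle is the combinatorial inequality in the first step. A naive path-based decomposition of $\dist(f(u), f(y))$ through one-coordinate-at-a-time hybrids would force us to control query counts at intermediate states whose marginals are neither uniform on $[n]^s$ nor uniform on $f^{-1}(G)$, and hence do not cleanly inherit the $(\mu,\delta)$-Lipschitz guarantee. The observation above sidesteps this by charging every output disagreement to a query made on the input $u$ itself, allowing the Lipschitz property to be invoked only against the uniform marginal of $\bm u$.
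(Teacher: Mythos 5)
Your proposal is correct and is essentially the paper's own argument: it uses the coupling bound $\Exp[\dist(\bm u,\bm y)]\le O(\sqrt{nd})$ from \eqref{eq:radius-def}, charges every output disagreement between $f(\bm u)$ and $f(\bm y)$ to a query made on $\bm u$ to a coordinate where $\bm u$ and $\bm y$ differ (the paper phrases this contrapositively: trees that never query a disagreeing cell on $\bm u$ output the same value), and then applies the already-established $(n^{0.2},\delta)$-Lipschitzness off an exponentially unlikely bad set, handled by a crude $md$ bound. The intermediate quantity $\sum_{j\in\Delta(\bm u,\bm y)}\bm\theta_j(\bm u)$ is only a cosmetic refinement of the paper's direct bound $\dist(f(x),f(y))\le n^{0.2}\dist(x,y)$ for $x\notin E$, so there is no substantive difference.
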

\begin{proof}
      Let $E \subseteq [n]^s$ be the set of inputs where $(n^{0.2}, \delta)$-Lipschitzness is violated, i.e. some input cell $i \in [s]$ is queried by more than $n^{0.2}$ trees of $f$. We then have $\Pr_{\bm u \sim [n]^s}[\bm u \in E] \le s \delta = \exp(-n^{\Omega(1)})$. Consider any $x \not\in E$ and $y \in [n]^s$. Then on the input $x$ the set of input cells $\{i \in [s] \mid x_i \neq y_i\}$ is queried by at most $n^{0.2} \cdot \dist(x,y)$ trees in $f$. Let $D_x \subseteq [m]$ be the set of these trees. The output of the trees outside $D_x$ does not change when we replace the input $x$ with $y$: $f_{[n] \setminus D_x}(x) = f_{[n] \setminus D_x}(y)$. Consequently $\dist(f(x), f(y)) \le n^{0.2} \cdot \dist(x,y)$. Thus, we write
\[\Exp\left[  \dist( f(\bm u), G) \right] \leq \Exp\left[n^{0.2} \cdot \dist(\bm u, f^{-1}(G)) \,\middle|\, \bm u \not\in E\right] + s \cdot \Pr[\bm u \in E] = O(n^{0.8}). \qedhere\]
\end{proof}
Let $\mu = n^{0.1}$ be the average Lipschitzness parameter of $f$. Following the same strategy of clustering coordinates as in \cref{lem:first-entropy-concentration}, we get a partition $I_1, I_2, \ldots, I_\ell$ of $[s]$ with property 
$$\sum_{i\in I_j}\Exp[\bm \theta_j] \leq 2\mu$$
for every $j\in [\ell]$ ($\bm \theta_j$ here, as before, denotes number of trees querying $j$ on input $\bm u$), with $\ell \leq 3 md/\mu$. Let $f'\colon [n]^{I_1}\times[n]^{I_2}\times\ldots\times[n]^{I_\ell}\to[n]^m$ be the clustered version of $f$. By construction, it is average-$2\mu$-Lipschitz, so we apply \cref{lem:average-Lipschitz-to-Lipschitz} (setting the input alphabet $\Lambda \coloneqq [n]^{\max_{i \in [\ell]} |I_i|}$) and find that $f'$ is $(6\mu d^2\log(1/\delta), \delta)$-Lipschitz, i.e. $(n^{0.2}, \delta)$-Lipschitz for $\delta = \exp(-n^{\Omega(1)})$. We will abuse the notation and write $f'(x)$ for $x\in [n]^s$ meaning $f'(x_{I_1}, x_{I_2}, \ldots, x_{I_\ell})$, and identify $[n]^s$ with $[n]^{I_1}\times[n]^{I_2}\times\ldots\times[n]^{I_\ell}$.

Let us define the function $h\colon [n]^{I_1}\times[n]^{I_2}\times\ldots\times[n]^{I_\ell}\to\mathbb{R}$ as follows:
\[h(x) \coloneqq \dist(f'(x), G) = \min_{o\in G}\dist(f'(x), o).\]

Let $S \subseteq [n]^s$ be the set of inputs, such that every position of $f'$ is queried at most $n^{0.2}$ times. By the definition of Lipschitzness, $\Pr_{\bm u\sim [n]^s}[\bm u\in S] \geq 1-s\delta$. Analogously to \cref{claim:lipschitz-distance}, $h$ has $n^{0.2}$-bounded differences over $S$, so we apply \cref{lem:mc-diarmid} and get
\[
\Pr_{\bm u\sim [n]^s}\Bigl[
  |h(\bm u) - \Exp_{\bm u\sim [n]^s}[h(\bm u)\mid \bm u \in S]| 
  \geq \lambda + s\delta \cdot n^{0.2}\cdot \tfrac{2md}{\mu}
\Bigr] 
\leq 2s\delta + 2\exp\!\left(
  -\Omega\!\left(
    \frac{\lambda^2}{n^{0.4}\cdot md/\mu}
  \right)\right).
\]

So, substituting $\lambda = n^{0.8}$ and noting that \(\left|\Exp[h(\bm u)\mid \bm u\in S] - \Exp\left[ \dist(f(\bm u), G) \right]\right| \leq s\delta m,\)
we get 
\(\Pr\left[ f(\bm u), G) \geq Cn^{0.8}\right]\leq \exp(-n^{\Omega(1)})\)
for a large enough constant $C > 0$.

Finally, we define $F \coloneqq \mathcal{N}_{Cn^{0.8}}(G)$, where $\mathcal{N}_r(P) \coloneqq \{x\mid\exists y\in P~\dist(x, y)\leq r\}$. We can upper bound the size of $F$ as
$|F|\leq |G|\cdot{n\choose Cn^{0.8}}\cdot n^{Cn^{0.8}}\leq n^{3m/4}$
and for such $F$, we have $\Pr[f(\bm u)\in F] \geq 1-\exp(-n^{\Omega(1)}).$

\subsection{Proof of \cref{lem:coupling}}
\renewcommand{\path}{\mathsf{path}}

Recall that the statistical (total variation) distance between two \emph{distributions} $\nu_1$ and $\nu_2$ can be defined through optimal couplings:
\[ \Delta(\nu_1, \nu_2) = \min\left\{\Pr_{\bm a,\bm b \sim \mathcal{C}}[\bm a \neq \bm b] \,\middle|\, \mathcal{C}\colon\text{distribution with marginals $\nu_1$, $\nu_2$}\right\}.   \]
We say that $\mathcal{C}$ that achieves this minimum\footnote{We work with variables over a finite support, so the minimum always exists.} is \emph{the optimal coupling} of $\nu_1$ and $\nu_2$.

The coupling $\mathcal{D}$ claimed by \cref{lem:coupling} is basically a composition of couplings $\mathcal{C}$ for each pair of symbols in $\bm x$ and $\bm y$: for each node of $T$ querying $i$ we enforce that the distributions of $\bm x_i$ and $\bm y_i$ conditioned on both $\bm x$ and $\bm y$ passing through the node are optimally coupled, see \cref{alg:coupling}.
\medskip

    Wlog we assume that $T$ is a full depth-$k$ decision tree, and at any path, it does not query the same input cell twice.
    Let $\bm x \sim \Sigma^M$. We denote by $\path(r)$ the computation path of $r$ in $T$. We construct $\bm y \in T^{-1}(1)$ using \cref{alg:coupling}.
    
    \begin{algorithm}
    \caption{\small The algorithm defining the coupling $\mathcal{D}$.}
    \label{alg:coupling}
    \begin{algorithmic}[1]
    \State $\bm y \gets \bm x$.
    \State $v \gets$ root of $T$.
    \While{$v$ is not a leaf}
      \State Let $i \in [M]$ be the coordinate queried by $v$, and $\{w_i\}_{i \in \Sigma}$ be its children.
      \State Let $\mathcal{C}$ be the optimal coupling of $\bm x_i$ and $(\bm z_i \mid \path(\bm z) \ni v)$ where $\bm z \sim T^{-1}(1)$. 
      \State Define $(\bm y_i \mid \path(\bm y) \ni v)$ such that $(\bm x_i, (\bm y_i \mid \path(\bm y) \ni v))$ is distributed according to $\mathcal{C}$.
      \State $v \gets w_{\bm y_i}$.
    \EndWhile
    \end{algorithmic}
\end{algorithm}

Let $\bm z \sim T^{-1}(1)$. By definition for every internal node $v$ of $T$ that queries $i$, the distributions $(\bm y_i \mid \path(\bm y) \ni v)$ and $(\bm z_i \mid \path(\bm z) \ni v)$ coincide.
Thus, the next branch taken from $v$ on $\path(\bm y)$ has the same conditional distribution as for $\bm z$; by backward induction over the depth, $\path(\bm y)$ is uniform over accepting paths. The coordinates not queried by $T$ remain uniform ($\bm y_j = \bm x_j$), therefore $\bm y$ is uniform over $T^{-1}(1)$. 

Now it remains to bound $\mathrm{dist}(\bm x, \bm y)$. Let $\bm p =(\bm p_1,\ldots,\bm p_k) \in \Sigma^k$ encode a uniformly random \emph{accepting} root-to-leaf path in $T$, and for $\tau \in \{0,1\}^j$ let $v(\tau)$ denote the node in $T$ that is reached by going from the root according to $\tau$. At a node $v$ querying $i$ let 
\[\delta_v \coloneqq \Pr[\bm y_i \neq \bm x_i \mid \path(\bm x) \ni v\, \land\, \path(\bm y) \ni v] = \Delta(\bm x_i, (\bm y_i \mid \path(\bm y) \ni v)).\] Then
\[
\Exp[\mathrm{dist}(\bm x, \bm y)] \leq \sum_{j \in [k]} \Exp[\delta_{v(\bm p_{\le j})}].
\]

Pinsker's inequality (see \cite[Lemma~11.6.1]{InfTheoryBook}) 
applied to $(\bm y_i \mid \path(\bm y) \ni v)$ and the uniform distribution $\bm u \sim \Sigma$ (i.e. the distribution of $\bm x_i$) states
\[ \delta_v^2 = \Delta((\bm y_i \mid \path(\bm y) \ni v), \bm u)^2 \le {2 \ln 2} (\H(\bm u) - \H(\bm y_i \mid \path(\bm y) \ni v)), \] 
so $\log |\Sigma| - \delta_v^2 / 2 \ln 2 \ge \H(\bm y_i \mid \path(\bm y) \ni v)$. The event ``$\path(\bm y) \ni v$'' is equivalent to ``$\bm p_{<j} = \alpha$'', and $\bm p_j$ is determined by $\bm y_i$. Hence, we have
\[\H(\bm p_j\mid \bm p_{<j}=\alpha) \leq \log |\Sigma|-\delta_{v(\alpha)}^2/2\ln2. \]

Taking the expectation over $\alpha$, and summing over $j$, we obtain
\[
k\log|\Sigma|-\Big(\sum_{j\in [k]}\Exp[\delta_{v(\bm p_{\le j})}^2]\Big)/2\ln 2 \geq \sum_{j\in [k]} \H(\bm p_{j} \mid \bm p_{<j})=\H(\bm p) = k \log |\Sigma| - \log(1/\mu).
\]

By Jensen's and Cauchy--Schwarz's inequalities we get,
\[
    2\ln 2\cdot\log(1/\mu) \geq \sum_{j\in [k]}\Exp[\delta_{v(\bm p_{\le j})}^2] \geq
    \sum_{j \in [k]}\left(\Exp[\delta_{v(\bm p_{\le j})}]\right)^2\geq \frac 1k \Big(\sum_{j\in [k]}
      \Exp[\delta_{v(\bm p_{\le j})}]\Big)^2
\]

Therefore,
$$
\Exp[\mathrm{dist}(\bm x,\bm y)]\leq \sum_{j\in [k]}\Exp[\delta_{v(\bm p_{\le j})}] \leq O\left(\sqrt{k\log(1/\mu)}\right).
$$

\section{Collision Lemma}
\label{sec:proof-of-collision-lemma}

In this section, we prove the general version of our collision lemma. We first prove the version for the independent random variables (that directly generalizes \cref{lem:simplified-collision-lemma}) and then will derive \cref{lem:collision-for-d-1}, which is a simple corollary of the independent variables case.
\begin{lemma}
    \label{lem:high-entropy-independent}
    Let $\bm z_1, \dots, \bm z_m$ be independent random variables supported over $[n] \cup
    \{\bot\}$. Assume that for every $i$ we have $\H(\bm z_i) \ge \delta \cdot m \log n$ and $m \ge
    n^{1-\varepsilon}$ and $\delta = \delta(n) \geq \max(2\log\log n/\log n, 4\varepsilon)$. Then
    \[
        \Pr[\exists i \neq j \in [m] \colon \bm z_i = \bm z_j \neq \bot] \ge
        1-\exp(-\Omega(\delta^4m^3/n^2)).
    \]
\end{lemma}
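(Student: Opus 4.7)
I would prove the lemma by quantifying the expected number of non-$\bot$ collisions and then using a dependency-graph concentration inequality (Janson/Suen-style) to boost that expectation into the claimed exponential tail bound. Throughout, write $p_i(x)=\Pr[\bm z_i=x]$, $r_i=\Pr[\bm z_i\neq\bot]$, $q_x=\sum_i p_i(x)$, $M=\sum_i r_i$, and $C=\sum_{i<j}\mathbbm{1}[\bm z_i=\bm z_j\neq\bot]$. Throughout I will read the entropy hypothesis as $\sum_i \H(\bm z_i)\ge \delta m\log n$, which is the only way it can be non-vacuous given the absolute bound $\H(\bm z_i)\le\log(n+1)$.

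\textbf{Step 1 (entropy $\Rightarrow$ mass and spread).} The decomposition $\H(\bm z_i)\le h(r_i)+r_i\log n\le 1+r_i\log n$ summed over $i$ gives $M\ge (\delta-1/\log n)m\ge \tfrac\delta 2 m$, where the hypothesis $\delta\log n\ge 2\log\log n$ makes the slack negligible. A simple averaging (using that each $\H(\bm z_i)\le(1+o(1))\log n$) then produces a subset $G\subseteq[m]$ of size $\Omega(\delta m)$ on which $\H(\bm z_i)\ge\tfrac\delta2\log n$.

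\textbf{Step 2 (expected collisions).} A direct calculation gives
\[
  \Expectation[C]=\tfrac 12\sum_{x\in[n]}\Bigl(q_x^2-\sum_i p_i(x)^2\Bigr),
\]
and Cauchy--Schwarz yields $\sum_x q_x^2\ge M^2/n\ge\Omega(\delta^2 m^2/n)$. Because Shannon entropy does \emph{not} control the $\ell_2$-term $\sum_i\|p_i\|_2^2$, I would next pass to a truncated version: set $\tilde{\bm z}_i=\bm z_i$ whenever $p_i(\bm z_i)\le\tau$ and $\tilde{\bm z}_i=\bot$ otherwise. A splitting-by-level entropy argument (contributions of atoms with $p_i(x)>\tau$ cost at most $\log(1/\tau)$ per such piece) shows that the threshold $\tau=n^{-\delta/4}$ retains mass $\Pr[\tilde{\bm z}_i\neq\bot]\ge\Omega(\delta)$ for each $i\in G$, while forcing $\tilde p_i(x)\le\tau$ pointwise. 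The latter bounds $\sum_i\|\tilde p_i\|_2^2\le\tau\tilde M$ and renders it negligible compared to the Cauchy--Schwarz lower bound, so the truncated expectation satisfies $\tilde\mu:=\Expectation[\tilde C]\ge\Omega(\delta^4 m^2/n)$.

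\textbf{Step 3 (concentration via dependency graph).} Index the ``bad'' events by $B_{i,j,x}=\{\tilde{\bm z}_i=\tilde{\bm z}_j=x,\; x\in[n]\}$. Since the $\tilde{\bm z}_i$ are independent, two such events are mutually independent whenever $\{i,j\}\cap\{k,l\}=\emptyset$. Suen's inequality therefore gives
\[
  \Pr[\tilde C=0]\le\exp\!\bigl(-\tilde\mu^2/(2(\tilde\mu+\Delta))\bigr),
  \qquad
  \Delta\le \sum_x q_x^3\le(\max_x q_x)\cdot \sum_x q_x^2\le m\tau\cdot 2\tilde\mu.
\]
Inserting $\tilde\mu\ge\Omega(\delta^4 m^2/n)$ and the bound $m\tau\le m\cdot n^{-\delta/4}$ and using the hypotheses $\delta\ge 8\varepsilon$ together with $(\delta^2/4)m\ge n^{1-\varepsilon}$ to estimate $m\tau$ should deliver the claimed exponent $\Omega(\delta^4 m^3/n^2)$. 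In the complementary regime in which truncation loses almost all mass because some bin $x^\star$ carries $q_{x^\star}\gg 1$, the proof is routed differently: a direct Chernoff bound on the Bernoulli sum $T_{x^\star}=\sum_i\mathbbm{1}[\bm z_i=x^\star]$ gives $\Pr[T_{x^\star}<2]\le\exp(-\Omega(q_{x^\star}))$, which by itself is more than enough to produce a collision.

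\textbf{Principal obstacle.} The Shannon-entropy hypothesis gives no direct control on the max-probabilities $\max_x p_i(x)$, yet those are precisely what govern the cluster sum $\Delta$ in Suen's inequality; this is why the truncation in Step~2 is essentially forced. The delicate task is choosing the threshold $\tau$ so that (i) the retained expected collision count $\tilde\mu$ is on the order of $\delta^4 m^2/n$, and simultaneously (ii) the cluster sum $\Delta\le m\tau\cdot 2\tilde\mu$ is small enough for Suen's bound to produce exponent $\Omega(\delta^4 m^3/n^2)$ rather than a merely polynomial tail. The hypotheses $\delta\ge 2\log\log n/\log n$, $\delta\ge 8\varepsilon$, and $(\delta^2/4)m\ge n^{1-\varepsilon}$ appear to be calibrated precisely so that this balance is achievable and both the ``every atom light'' and ``some atom heavy'' regimes merge into a single exponential bound of the stated form.
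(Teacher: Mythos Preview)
Your Steps 1--2 are close to what the paper does: both truncate each $\bm z_i$ to its ``light'' atoms (probability $\le n^{-\Theta(\delta)}$) so that Cauchy--Schwarz on the total light mass $\Omega(\delta m)$ gives an expected number of collisions of order $\delta^2 m^2/n$ (you lose an extra $\delta^2$ by first passing to the subset $G$, which is harmless). The gap is in Step 3. With your own bound $\Delta\le 2m\tau\cdot\tilde\mu$ and $\tau=n^{-\delta/4}$, Suen's exponent is
\[
\frac{\tilde\mu^2}{\tilde\mu+\Delta}\;\le\;\frac{\tilde\mu}{2m\tau}
\;=\;O\!\left(\frac{\delta^4 m^2/n}{m\,n^{-\delta/4}}\right)
\;=\;O\!\left(\delta^4 m\, n^{\delta/4-1}\right),
\]
whereas the target is $\Omega(\delta^4 m^3/n^2)$. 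At $m=n$ with constant $\delta$ this is $n^{\delta/4}$ versus $n$, a polynomial shortfall. The retained-mass argument forces $\tau\ge n^{-\delta/2}$, so $m\tau\ge m\,n^{-\delta/2}\gg 1$ in the whole range of parameters, and the deficit cannot be removed by tuning $\tau$. Your ``heavy bin'' escape hatch (Chernoff on a single $q_{x^\star}$) only buys exponent $\Omega(q_{x^\star})$; splitting at a threshold $K$ and optimizing gives $\min(\tilde\mu/K,K)=O(\sqrt{\tilde\mu})=O(\delta m/\sqrt n)$, which is again $O(\sqrt n)$ versus $\Omega(n)$ at $m=n$. In short, Janson/Suen is governed by $\sum_x q_x^3$, and that quantity is genuinely too large here.

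The paper bypasses this by choosing a different statistic for concentration: it applies McDiarmid's inequality not to the pair-count $C$ but to
\[
f(z)=\sum_{k\in[n]}\max\bigl(0,\;|\{i:z_i=k\}|-1\bigr),
\]
which has $2$-bounded differences in each coordinate regardless of the distributions. Since $f(z)=0$ iff there is no non-$\bot$ collision, McDiarmid gives $\Pr[f(\bm z)=0]\le\exp(-\Omega(\Exp[f(\bm z)]^2/m))$, and the same expectation bound $\Exp[f(\bm z)]=\Omega(\delta^2 m^2/n)$ that you obtain in Step 2 then yields exactly $\exp(-\Omega(\delta^4 m^3/n^2))$. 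The missing idea is this Lipschitz surrogate for the collision indicator; once you have it, your Step~2 analysis plugs in directly. (Two minor notes: the hypotheses $(\delta^2/4)m\ge n^{1-\varepsilon}$ and $\delta\ge 8\varepsilon$ you invoke belong to the depth-$1$ corollary, not to this lemma; and the intended per-variable hypothesis is $\H(\bm z_i)\ge\delta\log n$, which your Step~1 correctly recovers for a large subset from the aggregate reading.)
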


\subsection{Proof of \cref{lem:high-entropy-independent}}
Our goal is to show that with high probability there exists a collision among the $\bm z_i$’s. 
We visualize the setting with a complete bipartite graph $H=([m], [n] \cup \{\bot\}, [m] \times ([n] \cup \{\bot\}))$ in \cref{fig:collision}, the upper part 
nodes correspond to the variables $\bm z_1, \dots, \bm z_m$, the lower part nodes correspond to their values
in $[n] \cup \{\bot\}$, every edge $(i, k)$ is labeled with the probability $p_k^i \coloneqq \Pr[\bm z_i = k]$. 

Then every value $z_1, \dots, z_m \in ([n] \cup \{\bot\})^m$ corresponds to a set of edges $\{(i, z_i) \mid i \in [m]\}$
in $H$. We then have a collision 
iff there is a node $k \in [n]$ with a degree at least $2$ in the edges $\{(i, z_i) \mid i \in [m]\}$.  We define the function $f$ as a smoothed version of counting the number of nodes in $[m]$ of degree at least $2$.
\begin{equation}
    \label{eq:f-definition}
    f(z_1, \dots, z_m) \coloneqq \sum_{k \in [n]} \max \bigl(0, \bigl| \{i \in [m] \mid z_i = k\} \bigr| -
    1 \bigr),
\end{equation}
Then by definition there is a collision if and only if the value of $f$ is non-zero:
\[
    \Pr[\exists i \neq j\colon \bm z_i = \bm z_j \neq \bot] =
    \Pr[f(\bm z_1,\dots,\bm z_m) \neq 0],
\] 
thus it suffices to show that $f(\bm z_1, \ldots, \bm z_m)$ is nonzero with high probability.
The key feature of thus defined $f$ is that it satisfies the $2$-bounded differences property 
over its entire domain: indeed if $z$ and $z'$ differ in a coordinate $i \in [m]$ then the 
only summands in \eqref{eq:f-definition} that may change are $k \in \{z_i, z'_i\}$, each
of those can change by at most by one.
Therefore \nameref{lem:mc-diarmid} applies:
\[ \Pr[f(\bm z) = 0] 
\le \Pr[f(\bm z) \le \Exp[f(\bm z)] / 2] 
= \exp(-\Omega(\Exp[f(\bm z)]^2 / m)).\]
Thus, the main technical part of the proof is to show that $\Exp[f(\bm z)] = \Omega(\delta^2 m^2/n)$.

\paragraph{Bounding the expectation.} 
By the definition of $f$ we have \(\Exp[f(\bm z)] 
    \ge \sum_{k \in [n]} \Pr[\exists i \neq j \in [m]\colon \bm z_i = \bm z_j = k].\) 
For each $k \in [n]$ the probability that there exist $\bm z_i = \bm z_j = k$ can be computed explicitly: indeed we have independent events ``$\bm z_i = k$'' and the probability we bound is that at least two of these events occur. If all $\Pr[\bm z_i = k]$ are small enough, we can use the following simple bound:
\begin{claim}
\label{cl:at-least-two}
    Suppose events $E_1, \dots, E_\ell$ are independent, each $E_i$ occurs with probability $q_i \le \alpha$ such that $\bar{q} \coloneqq \sum_{i\in[\ell]} q_\ell \le 1/8$. Then
    \[ \Pr\Big[\sum_{i \in [\ell]} \lb E_i \rb \ge 2 \Big] \ge \bar{q}^2 / 4 - 2 \alpha \bar{q}. \]
\end{claim}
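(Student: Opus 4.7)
The strategy is straightforward second-order Bonferroni. Write $\bm S \coloneqq \sum_{i \in [\ell]} \lb E_i \rb$ and observe that $\{\bm S \ge 2\} = \bigcup_{\{i,j\}: i<j}(E_i \cap E_j)$. First I would apply the Bonferroni lower bound
\[
    \Pr[\bm S \ge 2] \;\ge\; \sum_{i<j} \Pr[E_i \cap E_j] \;-\; \sum_{\{\{i,j\}, \{k,l\}\}:\, \{i,j\}\neq\{k,l\}} \Pr[E_i \cap E_j \cap E_k \cap E_l].
\]
By independence and the definition of the elementary symmetric polynomials $e_k \coloneqq \sum_{|T|=k}\prod_{i\in T} q_i$, the first sum is exactly $e_2$. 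For the subtracted sum I would split unordered pairs of distinct 2-subsets $\{\{i,j\},\{k,l\}\}$ by whether they share one element or are disjoint: every 3-element set $\{a,b,c\}$ contributes $3$ such pairs (choose the shared index) each of probability $q_a q_b q_c$, and every 4-element set contributes $\binom{4}{2}/2=3$ such pairs, each of probability $q_a q_b q_c q_d$. Hence the subtracted sum equals $3e_3 + 3e_4$, and it remains to prove $e_2 - 3e_3 - 3e_4 \ge \bar q^2/4 - 2\alpha \bar q$.

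\textbf{Estimating the symmetric polynomials.} Using $\sum_i q_i^2 \le \alpha \sum_i q_i = \alpha \bar q$, I immediately get $e_2 = \tfrac{1}{2}(\bar q^2 - \sum q_i^2) \ge \bar q^2/2 - \alpha\bar q/2$. For the higher polynomials I would exploit the identity $k e_k = \sum_i q_i\, e_{k-1}^{(i)}$, where $e_{k-1}^{(i)}$ omits the index $i$; together with the trivial bound $e_{k-1}^{(i)} \le e_{k-1}$ this gives the Maclaurin-type inequality $e_k \le \bar q\cdot e_{k-1}/k$. Iterating from $e_2 \le \bar q^2/2$ yields $e_3 \le \bar q^3/6$ and $e_4 \le \bar q^4/24$. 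Invoking the hypothesis $\bar q \le 1/8$, the subtracted contribution satisfies $3e_3 + 3e_4 \le \bar q^2\,(\bar q/2 + \bar q^2/8) \le \bar q^2\,(1/16 + 1/512)$. Plugging everything back in,
\[
    e_2 - 3e_3 - 3e_4 \;\ge\; \bar q^2\,\bigl(\tfrac12 - \tfrac{1}{16} - \tfrac{1}{512}\bigr) - \tfrac{\alpha \bar q}{2} \;\ge\; \tfrac{\bar q^2}{4} - 2\alpha \bar q,
\]
which is the desired bound.

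\textbf{Main obstacle.} There is no conceptual difficulty; the proof is essentially bookkeeping. The only slightly delicate step is the combinatorial count that the unordered pairs of distinct $2$-subsets contribute exactly $3e_3 + 3e_4$, and the role of the assumption $\bar q \le 1/8$ is precisely to let the cubic correction $3e_3$ fit comfortably inside the gap between the leading term $\bar q^2/2$ coming from $e_2$ and the target $\bar q^2/4$.
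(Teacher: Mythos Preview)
Your proof is correct. The Bonferroni step, the identification of the subtracted sum with $3e_3+3e_4$, and the Maclaurin-type bounds $e_k\le \bar q\,e_{k-1}/k$ are all valid, and the final numerical inequality indeed holds since $1/2-1/16-1/512>1/4$.

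Your route, however, is genuinely different from the paper's. The paper computes $1-p$ exactly as $\prod_i(1-q_i)\cdot\bigl(1+\sum_i q_i/(1-q_i)\bigr)$, then bounds the second factor by $1+\bar q/(1-\alpha)\le 1+\bar q(1+2\alpha)$ and the product by AM--GM as $(1-\bar q/\ell)^\ell$, finishing with the second-order Taylor bound $(1-x)^n\le 1-nx+(nx)^2/2$ and a direct expansion. Your argument instead applies second-order Bonferroni to the union $\bigcup_{i<j}(E_i\cap E_j)$ and reduces everything to elementary symmetric polynomials. The paper's approach has the advantage of starting from the exact probability, so the only slack is in the analytic inequalities; yours is more combinatorial and avoids any analysis beyond $\sum q_i^2\le\alpha\bar q$ and the recursion $ke_k\le\bar q\,e_{k-1}$, at the price of a small combinatorial count for the $3e_3+3e_4$ term. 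Both yield the stated bound with room to spare, and neither uses the independence beyond multiplicativity on the relevant intersections.
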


Let $\bar{p}_k \coloneqq \sum_{i \in [n]} p_k^i$. If \cref{cl:at-least-two} applied for every $k \in [n]$ with a negligible $\alpha$, we would get that $\Exp[f(\bm z)]$ is at least $(1-o(1)) \cdot \sum_{k \in [n]} \bar{p}_k^2$. Since $\sum_{k \in [n]} \bar{p}_k \approx m$, we could apply Cauchy-Shwartz's inequality to get the desired bound. We face two technical problems: $\bar{p}_k$ might be too large, and $p_k^i$ (and thus $\alpha$) may not be negligible. We fix both of those directly: remove too large $p_k^i$ and split the node $k$ in the graph $H$ into several nodes so that the $\bar{p}$-value for each of the nodes does not exceed $1/8$. 

\begin{figure}[h]
    \centering
    \tikzset{
    rectangle-set/.style = {
        #1,
        pattern = {Lines[angle = 45, distance = 1pt, line width = 0.2pt]},
        pattern color = #1,
        thick,
        rounded corners = 3pt
    }
}
\begin{tikzpicture}
    \def\a{4.5}
    \def\b{5}
    \def\d{3}

    \draw[thin, densely dashed, rounded corners = 1pt] (-\a, \d) rectangle (\a, \d + 0.3);
    \draw[thick] (-\b, 0.15) -- ++(0, -0.15) -- ++(2 * \b, 0) -- ++(0, 0.15);
    \draw[thick] (\b - 0.5, 0) -- ++(0, 0.15);
    
    \node at (\b - 0.25, -0.3) {$\bot$};
    \node at (-\b - 0.4, 0.4) {$[n]$};

    \node at (\a + 0.5, \d + 0.65) {$[n]^m$};

    \foreach \x [count = \i] in {-\a + 0.5, -1, 1, \a - 0.5}{
        \node[in-out-node = {ForestGreen}] (a\i) at (\x, \d + 0.15) {};
    }

    \foreach \x [count = \i] in {0.7, 1.2, ..., 2.2, \b - 1.5, \b + 1, 2 * \b - 0.75, 2 * \b - 0.25}{
        \node[in-out-node = {ForestGreen}] (b\i) at (-\b + \x, 0.15) {};
    }

    \node at (1, -0.3) {$k$};

    \draw[thick, VioletRed] (-1.4, 0.4 + \d) -- ++(-0.1, 0) -- ++(0, -0.5) -- ++(0.1, 0);
    \draw[thick, VioletRed] (1.4, 0.4 + \d) -- ++(0.1, 0) -- ++(0, -0.5) -- ++(-0.1, 0);
    \node at (0, 1 + \d) {$G_k$};
    \node at (-1, 0.6 + \d) {$\bm z_i$};
    \node at (1, 0.6 + \d) {$\bm z_j$};
    \node at (-\a + 0.5, 0.6 + \d) {$\bm z_1$};
    \node at (\a - 0.5, 0.6 + \d) {$\bm z_m$};

    \foreach \i in {7, 8}{
        \draw[ForestGreen] (a3) -- (b\i);
    }
    \foreach \i in {1, 2, ..., 5}{
        \draw[ForestGreen] (a2) -- (b\i);
    }
    \draw[
        decorate,
        decoration = {brace, amplitude = 3pt, mirror}]
        (-\b + 0.6, -0.15) -- ++(1.7, 0)
        node[midway, below] {\tiny $n^{-\delta / 2}$ heavy supp of $\bm z_i$};

    \draw[VioletRed, thick,
         decorate, decoration = {snake, segment length = 6mm, amplitude = 0.3mm}]
         (a2) -- (b6) node [midway, below] {\tiny $p_k^i$};
    \draw[VioletRed, thick,
        decorate, decoration = {snake, segment length = 6mm, amplitude = 0.3mm}]             
        (a3) -- (b6)  node [midway, right] {\tiny $p_k^j$};

    \foreach \i in {1, 4}{
        \foreach \j in {-1, 0, 1}{
            \draw[dashed, ForestGreen, path fading = south]
                ([xshift = 1pt, yshift = 0pt]a\i) -- (a\i) -- ++(-90 + 20 * \j: 1.4);
        }
    }

    \node at (-\a - 2, 0) {};
    \node at (\a + 2, 0) {};
\end{tikzpicture}
    \caption{\small This picture illustrates the approach to prove that the probability of having \emph{collision value} $k$ is significant, i.e. $p \coloneqq \Pr[\exists i\neq j \in [m]\colon \bm z_i = \bm z_j = k]$ is bounded away from zero. The key technical step in the proof is to remove ``heavy'' edges from the graph---the ones with $p_k^i > n^{-\delta/2}$. $G_k$ denotes the neighborhood of $k$ in $H$ with all heavy edges removed.}
    \label{fig:collision}
\end{figure}

In the first step we remove from $H$ all edges with too large $p_k^i$, the following claim implies that we retain significant probability mass after this removal:
\begin{claim}
    \label{claim:high-entropy-var-new}
    Let $\bm a$ be a random variable over $[n]$ with $\H(\bm a) \ge c\log n$, where $c = c(n) >
    4 / n$. Then with $p\colon [n] \to [0, 1]$ being the probability function of $\bm a$, we have
    \[
        \Pr[p(\bm a) \le n^{-c / 2}] = \sum_{p(i) \leq n^{-c / 2}} p(i) \ge c / 8.
    \]
\end{claim}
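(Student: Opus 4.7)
The plan is a direct entropy accounting argument: split the support of $\bm a$ into a ``heavy'' part $H=\{i:p(i)>n^{-c/2}\}$ and a ``light'' part $L=\{i:p(i)\le n^{-c/2}\}$, and show that if the light part carried too little mass, the entropy $\H(\bm a)$ could not reach $c\log n$.

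More concretely, I would let $q \coloneqq \Pr[\bm a \in L]$, which is exactly the quantity we wish to lower bound. The entropy then decomposes as
\[
\H(\bm a)=\sum_{i\in H}p(i)\log\tfrac{1}{p(i)}+\sum_{i\in L}p(i)\log\tfrac{1}{p(i)}.
\]
On $H$ every term satisfies $\log(1/p(i))<(c/2)\log n$ by definition, so this sum is at most $(1-q)(c/2)\log n$. On $L$ I would factor out $q$ by writing $p(i)=q\cdot p'(i)$ where $p'(i)=p(i)/q$ is a probability distribution supported on $L$; then $\sum_{i\in L}p(i)\log(1/p(i))=q\log(1/q)+q\,\H(p')\le q\log(1/q)+q\log n$, since $|L|\le n$.

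Combining these bounds gives
\[
c\log n \;\le\; (1-q)\tfrac{c}{2}\log n \;+\; q\log n \;+\; q\log(1/q),
\]
which simplifies to $(c/2)\log n \le q\log n + q\log(1/q)$. Now I would argue by contradiction: suppose $q<c/8$. Note $c\le 1$ since $\H(\bm a)\le\log n$, so in particular $c/8 < 1/e$; the function $x\mapsto x\log(1/x)$ is increasing on $(0,1/e)$, so $q\log(1/q)\le (c/8)\log(8/c)$. Substituting yields $(3c/8)\log n \le (c/8)\log(8/c)$, i.e.\ $c\le 8 n^{-3}$, which contradicts the hypothesis $c>4/n$ for any $n\ge 2$. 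Hence $q\ge c/8$, which is exactly the claim.

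The main obstacle is really just the careful bookkeeping in the light-part bound and then correctly invoking monotonicity of $x\log(1/x)$ on $(0,1/e)$; once $c\le 1$ is observed (a free consequence of $\H(\bm a)\le\log n$), every other step is a one-line inequality. No use of the structure of $\bm a$ beyond its entropy is needed, which is why the statement holds in the generality used by \cref{lem:high-entropy-independent}.
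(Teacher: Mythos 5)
Your proof is correct, but it takes a different route from the paper's. The paper bounds the heavy part exactly as you do, but then splits the light part $\{i : p(i)\le n^{-c/2}\}$ further into a ``very light'' piece $\{p(i)<n^{-2}\}$, whose entropy contribution is at most $n\cdot n^{-2}\log n\le (c/4)\log n$ (this is where the hypothesis $c>4/n$ enters), and a ``medium'' piece $\{n^{-2}\le p(i)\le n^{-c/2}\}$, on which the pointwise bound $\log(1/p(i))\le 2\log n$ converts the remaining $(c/4)\log n$ of entropy directly into mass $\ge c/8$. You instead bound the whole light contribution by $q\log(1/q)+q\log n$ via the conditional-distribution factorization $p(i)=q\,p'(i)$, and then extract the lower bound on $q$ by contradiction using monotonicity of $x\mapsto x\log(1/x)$ on $(0,1/e)$ (legitimate since $c\le 1$ for $n\ge 2$). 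Your version is arguably more general---the contradiction only needs $c\gtrsim 8/n^{3}$ rather than $c>4/n$, which is comfortably implied by the hypothesis for $n\ge 2$---at the cost of the extra monotonicity step; the paper's version is more mechanical, needing only pointwise bounds and no case analysis on $q$. One cosmetic point: the factorization $p(i)=q\,p'(i)$ presumes $q>0$, but the degenerate case $q=0$ is immediately contradictory (the heavy bound alone gives $c\log n\le (c/2)\log n$), so nothing is lost.
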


Applying \cref{claim:high-entropy-var-new} with $\bm a = \bm z_i$, $c = \delta$ we get that $\sum_{(i,k) \in [m] \times ([n] \cup \{\bot\})} p_k^i \ge \delta m / 8$. Observe that we can now remove the node $\bot$ and still retain most of the probability mass:
\[ \sum_{i \in [m]\colon p_\bot^i \le n^{-\delta/2}} p_\bot^i \le m n^{-\delta /2} \le m / \log n \le \delta m / 16. \]
The last two inequalities utilize $\delta \ge 2\log\log n / \log n$. 
Thus, we get
\[
    \sum_{(i, k) \in [m] \times [n] \colon p_k^i \leq n^{-\delta / 2}}p_k^i \geq \Omega(\delta m).
\]

Let $G \subseteq [m] \times [n]$ be the set of the light edges of $H$ defined as $G\coloneqq\{(i,k) \mid p^i_k
\le n^{-\delta/2}\}$ and for every $k \in [n]$ let $G_k \coloneqq \{i \in [m] \mid (i,k) \in G\}$ and
$\bar{p}_k \coloneqq \sum_{i \in G_k} p_k^i$. In order to force $\bar{p}_k \le 1/8$, we
split $G_k$ into the smallest number of parts $G_k^1, \ldots G_k^{t_k}$ with the property
\[
    \forall h\in [t_k]\ \sum_{i\in G_k^h}p_k^i \leq 1/8.
\] 
It is possible, since $n^{-\delta/2} \leq 1/8$, and we will get $t_k\leq 16\bar{p}_k + 1$, so the total
number of parts will be $O(n)$. Clearly,
\[
  \Exp[f(\bm z_1, \ldots, \bm z_m)]
     \geq \sum_{k\in [n]}\sum_{h\in[t_k]}\Pr[\exists i\neq j\in G_k^h\colon \bm z_i=\bm z_j=k].
\]
Now we denote $\bar{p}_{kh} = \sum_{i\in G_k^h}p_k^i$ and get 
by \cref{cl:at-least-two} applied to the events ``$\bm z_i = k$'' for $i \in G_k^h$ that:
\[ \Pr[\exists i \neq j \in G_k^h\colon \bm z_i = \bm z_j = k] \ge \bar{p}_{kh}^2 / 4 - 2 n^{-\delta/2} \bar{p}_{kh}.\]

Finally, we obtain
\begin{align*}
  \Exp[f(\bm z_1, \ldots, \bm z_m)]
    &\geq \frac{1}{4}\sum_{k\in [n]}\sum_{h\in[t_k]} \bar{p}_{kh}^2 -
      2n^{-\delta/2}\sum_{k\in [n]}\sum_{h\in[t_k]} \bar{p}_{kh} \\
  \text{(by Cauchy--Schwarz's inequality) }
    &\geq \left(\sum_{k\in [n]}\sum_{h\in[t_k]} \bar{p}_{kh}\right)^2 / O(n) - 2n^{-\delta/2}
      \sum_{k \in [n]}\sum_{h\in[t_k]} \bar{p}_{kh} \\
  \text{(using $\delta \geq 4\varepsilon$) }
    &\geq \frac12\left(\sum_{k\in [n]}\sum_{h\in[t_k]} \bar{p}_{kh}\right)^2 / O(n) \geq
      \Omega(\delta^2m^2/n).
\end{align*}

\subsubsection{Proof of \cref{cl:at-least-two}}
Let $p \coloneqq \Pr[\sum_{i \in [\ell]} \lb E_i \rb \ge 2]$. The event ``$\sum_{i \in [\ell]} \lb E_i \rb \ge 2$'' does not occur iff none of $E_i$ occur, or exactly one of them occurs. Thus
\[ p = 1 - \prod_{i \in [\ell]} (1-q_i) - \sum_{i \in [\ell]} q_i / (1-q_i) \cdot \prod_{i \in [\ell]} (1-q_i).\]
We then rewrite using $q_i \le \alpha$:
\( 1 - p \le \prod_{i \in [\ell]} (1-q_i) \cdot (1 + \bar{q} / (1-\alpha)).\)
Since the geometric mean does not exceed the arithmetic mean, we have $\prod_{i \in [\ell]} (1-q_i) \le (\sum_{i \in [\ell]} (1-q_i) / \ell)^\ell = (1-\bar{q}/\ell)^\ell$. Moreover, since $\alpha \le \bar{q} < 1/2$ we have $1/(1-\alpha) \le 1 + 2\alpha$. Then
\[ 1 - p \le (1-\bar{q}/\ell)^\ell (1 + \bar{q} (1 + 2 \alpha)).\]
Now we use a simple analytical fact to bound the first multiplier:
\begin{fact}
\label{fact:taylor}
    For $x \in (0,1)$ and $n \ge 2$ the inequality $(1-x)^n \le 1 - nx + (nx)^2 / 2$ holds.
\end{fact}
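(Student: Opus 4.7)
The plan is to prove this by analyzing the gap function $g(x) \coloneqq 1 - nx + (nx)^2/2 - (1-x)^n$ and showing $g(x) \ge 0$ on $[0,1]$. I would start by noting $g(0) = 0$ and computing the first derivative $g'(x) = -n + n^2 x + n(1-x)^{n-1}$, which also vanishes at $x = 0$. So both value and slope agree at the left endpoint, and the real content will come from the second derivative.

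Next I would compute $g''(x) = n^2 - n(n-1)(1-x)^{n-2}$. Since $n \ge 2$, the exponent $n-2$ is nonnegative, so $(1-x)^{n-2} \le 1$ on $[0,1]$ and consequently $g''(x) \ge n^2 - n(n-1) = n > 0$ on this interval. From this I conclude that $g'$ is strictly increasing on $[0,1]$; combined with $g'(0) = 0$, this gives $g'(x) > 0$ on $(0,1]$. Therefore $g$ is strictly increasing on $[0,1]$, and since $g(0) = 0$, we obtain $g(x) > 0$ for $x \in (0,1)$, which rearranges to the claimed inequality.

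An alternative route is induction on $n$: the base case $n = 2$ reduces to $(1-x)^2 = 1 - 2x + x^2 \le 1 - 2x + 2x^2$, which is immediate, and the inductive step multiplies the hypothesis by the nonnegative factor $(1-x)$ and collects terms, leaving a leftover cubic contribution whose sign is easily checked. There is no real obstacle in either approach; the one thing to be mindful of in the calculus proof is the edge case $n = 2$, where $(1-x)^{n-2}$ becomes the constant $1$ and the lower bound on $g''$ is tight, but everything still goes through. The statement is, after all, just the second-order Taylor tail bound for $(1-x)^n$ about $x = 0$, so a direct derivative computation is the most economical path.
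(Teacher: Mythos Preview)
Your derivative argument is correct. The paper takes a slightly more compressed route: it invokes Taylor's theorem with Lagrange remainder at $x_0=0$ to write $(1-x)^n = 1 - nx + \tfrac{n(n-1)}{2}(1-\xi)^{n-2}x^2$ for some $\xi\in(0,x)$, and then bounds $(1-\xi)^{n-2}\le 1$ and $n(n-1)\le n^2$ in one line. Your approach unpacks exactly the same computation by hand: your bound $g''(x) = n^2 - n(n-1)(1-x)^{n-2} \ge 0$ is precisely the Lagrange-remainder estimate, and integrating twice from $0$ recovers the Taylor inequality. So the two proofs are the same idea; the paper's version is a one-liner by citing the theorem, while yours is self-contained and makes the edge case $n=2$ explicit.
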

\begin{proof}
    By Taylor's theorem \cite[Theorem~5.15]{Rud76} for $(1-x)^n$ at $x_0=0$ we get that there exists $\xi \in (0,x)$ such that $(1-x)^n = 1 - nx + n(n-1)/2 \cdot (1-\xi)^{n-2} x^2 \le 1 - nx + (nx)^2 / 2.$
\end{proof}
Then by \cref{fact:taylor} we conclude:
\begin{align*} 1 - p &\le (1 - \bar{q} + \bar{q}^2/2)(1 + \bar{q}(1+2\alpha))\\
&= 1 - \bar{q}^2/2 + 2\alpha \bar{q} + \bar{q}^3 \alpha + \bar{q}^3 / 2 - 2\alpha \bar{q}^2\\
\text{(since $\bar{q} \le 1/8$) }&\le 1 - \bar{q}^2 / 4 + 2\alpha \bar{q}.
\end{align*}

\subsubsection{Proof of \cref{claim:high-entropy-var-new}}
    $$\sum_{p(i) > n^{-c/2}} -p(i)\log p(i) \leq \sum_{p(i) > n^{-c/2}} \frac{c}{2}\cdot\log n\cdot p(i)\leq c\log n/2, $$
    Function $x\mapsto-x\log x$ is ascending at $[0, 1/e]$, so
    $$\sum_{p(i) < n^{-2}}-p(i)\log p(i) \leq n\cdot n^{-2}\cdot \log n=\log n/ n \leq c \log n /4.$$
    Combining the above inequalities with $\H(\bm a) \geq c\log n$, we get
    $$\sum_{n^{-2} \leq p(i) \leq n^{-c/2}}p(i) \geq \sum_{n^{-2} \leq p(i)\leq n^{-c/2}} -\frac{p(i)\log p(i)}{2\log n} \geq \frac{c\log n/4}{2\log n} = c/8$$

\subsection{Depth-$1$ decision forests}\label{subsec:depth-1}
In this section we prove a natural corollary of \cref{lem:high-entropy-independent}: virtually the same bound holds for functions that are computable with depth-$1$ decision forests.

\depthonecollision*
\begin{proof}
In order to apply \cref{lem:high-entropy-independent} we need to choose a subset $I \subseteq [m]$ such that the output cells in $I$ are independent and the entropy rate is not reduced too severely. All trees in $f$ can be partitioned into subsets $J_1 \sqcup \dots \sqcup J_\ell = [m]$ where in each $J_i$ the trees query the same input cell, which is different for different sets. Observe that $\H(f_{J_i}(\bm u)) \le \log n$. Thus, we have that $\ell \ge \H(f(\bm u)) / \log n \ge \delta m$.

We first form $I'$ by taking a representative $j_i \in J_i$ that maximizes $\H(f_{j_i}(\bm u))$ for each $i \in [\ell]$. We use the following simple fact
\begin{fact}
\label{fact:frac-inequality}
    For $a_1, \dots, a_n, b_1, \dots, b_n \in \mathbb{R}_{\ge 0}$ we have $\sum_{i \in [n]} (a_i / b_i) \ge (\sum_{i \in [n]} a_i)^2/(\sum_{i\in [n]} a_i b_i)$.
\end{fact}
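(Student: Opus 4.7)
The plan is to recognize \cref{fact:frac-inequality} as the Cauchy–Schwarz inequality in its Engel (Titu) form. Concretely, I would write each $a_i$ as $\sqrt{a_i/b_i}\cdot\sqrt{a_i b_i}$ and apply Cauchy–Schwarz to the sequences $\bigl(\sqrt{a_i/b_i}\bigr)_i$ and $\bigl(\sqrt{a_i b_i}\bigr)_i$, obtaining
\[
\Big(\sum_{i\in[n]} a_i\Big)^{\!2}
= \Big(\sum_{i\in[n]} \sqrt{\tfrac{a_i}{b_i}}\cdot\sqrt{a_i b_i}\Big)^{\!2}
\le \Big(\sum_{i\in[n]} \tfrac{a_i}{b_i}\Big)\!\Big(\sum_{i\in[n]} a_i b_i\Big),
\]
after which dividing by $\sum_i a_i b_i$ gives the claimed inequality. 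This is the entire argument, so there is no genuinely technical step to execute.

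The only bookkeeping concerns the case when some $b_i$ vanishes. Indices $i$ with $a_i = 0$ contribute $0$ on both sides and can simply be omitted from every sum; indices with $a_i > 0$ and $b_i = 0$ make the left-hand side $+\infty$ under the usual convention $a_i/b_i = +\infty$, in which case the inequality holds trivially. After discarding these degenerate indices we may assume $b_i > 0$ throughout, and the Cauchy–Schwarz step above applies verbatim. Since the fact is essentially a restatement of a standard inequality, the ``main obstacle'' is nothing more than writing down the correct pair of vectors to feed into Cauchy–Schwarz.
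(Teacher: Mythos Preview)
Your proof is correct. It differs from the paper's argument: the paper rewrites $\sum_i a_i/b_i = \bigl(\sum_i a_i\bigr)\cdot \sum_i \frac{a_i}{\sum_j a_j}\cdot\frac{1}{b_i}$ and then applies Jensen's inequality to the convex function $x\mapsto 1/x$ with weights $a_i/\sum_j a_j$, obtaining that the inner sum is at least $\bigl(\sum_i a_i\bigr)/\bigl(\sum_i a_i b_i\bigr)$. Your route via Cauchy--Schwarz in the Engel/Titu form is equally short and standard; the two arguments are essentially interchangeable, though yours has the minor advantage that you spell out the degenerate cases $b_i=0$, which the paper leaves implicit.
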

\begin{proof}
    We first rewrite 
    \[\sum_{i \in [n]} \frac{a_i}{b_i} = \Big(\sum_{i \in [n]} a_i\Big) \cdot \sum_{i \in [n]} \frac{a_i}{\sum_{i \in [n]} a_i} \cdot \frac{1}{b_i}.\]
    Then applying Jensen inequality for the function $1/x$ we get that the right multiplier is at least $(\sum_{i \in [n]} a_i)/(\sum_{i \in [n]} a_i b_i)$, which concludes the proof.
\end{proof}

Then we get
\begin{multline*}
    \H(f_{I'}(\bm u)) 
    = \sum_{i \in [\ell]} \H(f_{j_i}(\bm u))
    \ge \sum_{i \in [\ell]} \frac{\H(f_{J_i}(\bm u))}{|J_i|}
   \overset{\text{\cref{fact:frac-inequality}}}{\ge} \\\frac{(\H(f(\bm u)))^2}{\sum_{i \in [\ell]} |J_i| \H(f_{J_i}(\bm u))}
    \ge \frac{(\delta m \log n)^2}{\log n \cdot m}
    = \delta^2 m \log n \ge \delta \ell \log n.
\end{multline*}
Now we pick $I \subseteq I'$ of $i \in I'$ such that $f_i(\bm u) \ge \delta \log n / 2$. Then $\delta \ell \log n \le \H(f_{I'}(\bm u)) \le |I| \log n + (\ell - |I|) \delta \log n / 2$, so $|I| \ge \delta \ell (1-\delta / 2) / 2 \ge \delta \ell / 4$. Then we apply \cref{lem:high-entropy-independent} to $f_I$, so $m' = (\delta^2 / 4) \cdot m$, and $\H(f_i(\bm u)) \ge (\delta/2) \log n$, so we need $\delta / 2 \ge \max(2\log \log n / \log n, 4\epsilon)$, which is what we have by the assumption. 
\end{proof}

\subsection{Simplified collision lemmas}
In this section, we derive \cref{lem:simplified-collision-lemma,lem:simplified-collision-lemma2}.

\simpcollisiona*
\begin{proof}
    We apply \cref{lem:collision-for-d-1} with $\delta = 1/8$, and $\epsilon = 1/64$. It is easy to check that $(\delta^2 / 4) m \ge n^{1-\epsilon}$ for large enough $n$ since $(\delta^2 / 4) m = n^{0.99} / 256$ and $n^{1-\epsilon} = n^{1-1/64} = o(n^{0.99})$. 
\end{proof}
\simpcollisionb*
\begin{proof}
    We apply \cref{lem:collision-for-d-1} with $\delta=4\log \log n / \log n$, $\epsilon = \log \log n / (16 \log n)$, and $m=n$ Then $(\delta^2 / 4) m = 4 n (\log \log n)^2 /  \log^2 n$, and $n^{1-\epsilon} = n \cdot \exp(-\log n \cdot \log \log n / (16 \log n)) = n \cdot \exp(-\log \log n / 16) = n \cdot \log^{-1/16} n$. Thus, the conditions of \cref{lem:collision-for-d-1} are satisfied.
\end{proof}

\section{Open questions}
\label{sec:open-questions}

\paragraph{Quantitative improvements.}
Our lower bounds can potentially be quantitatively improved in many ways: 
\begin{itemize}[noitemsep]
    \item Can the adaptive cell-probe bound be improved to, say $\Omega(\log n/\log \log n)$? 
    \item Can the distance bound be improved to $1-\exp(-n^{1-o(1)})$?
    \item What is the right bound for the number of nonadaptive probes in \cref{thm:main-nonadaptive}?
\end{itemize}
Our lower bound actually works for sampling $m=n^{1-\varepsilon}$ distinct elements that can be sampled with $O(\log n)$ adaptive \emph{bit-probes} \cite{Czumaj15}. Can one show a better upper bound for \emph{cell-probes} in that case, say $O(\log n / \log \log n)$?

Can any of the lower bounds be improved if $s$ is bounded? That would be sufficient for improving the data structure lower bounds in \cref{cor:ds-lowerbound}.

\paragraph{Other distributions.} 
  What \emph{symmetric} distributions in $\{0,1\}^n$ are samplable in $O(1)$ \emph{nonadaptive} cell-probes to $[n]^{\mathbb{N}}$? \cite{KOW25} show that essentially the only nontrivial distribution that can be sampled with \emph{nonadaptive bit-probes} is uniform over odd Hamming weight vectors. For the large input alphabet, this does not hold. For example, one can sample the uniform distribution over $\binom{[n]}{k}$ with $k$ nonadaptive cell-probes (as well as \emph{any} distribution uniform over a set of size $n^k$). On the other hand, it is hard even with $o(\log (n/k)/\log\log (n/k))$ adaptive \emph{bit} probes \cite{FLRS23}. 

  We conjecture that the uniform distribution over $\binom{[n]}{n/2}$ requires $\omega(1)$ adaptive cell-probes to sample. Showing this even in the nonadaptive case for a \emph{fixed constant} number of nonadaptive probes is open. 

  An intermediate challenge is to show that \emph{permutation matrices} are hard to sample with adaptive cell-probes. The target distribution is $\bm M \in \{0,1\}^{n \times n}$ and $\bm M$ is uniform over matrices with exactly one $1$-entry in every row and column. An efficient cell-probe sampler for $\bm M$ is not directly ruled out by our theorems even for the nonadaptive case, but we think that the technique should translate for that case as well.

\medskip
\subsection*{Acknowledgments}
We thank Ziyi Guan, Gilbert Maystre, and Weiqiang Yuan for discussions related to \cref{lem:coupling}.

\medskip

\DeclareUrlCommand{\Doi}{\urlstyle{sf}}
\renewcommand{\path}[1]{\small\Doi{#1}}
\renewcommand{\url}[1]{\href{#1}{\small\Doi{#1}}}
\bibliographystyle{alphaurl}
\bibliography{references}

\end{document}